\date{}
\newcommand{\papertitle}{DPER: Dynamic Programming for Exist-Random Stochastic SAT}
\title{
    \papertitle%
    \footnote{Work supported in part by NSF grants IIS-1527668, CCF-1704883, IIS-1830549, and CNS-2016656; DoD MURI grant N00014-20-1-2787; and an award from the Maryland Procurement Office.}
}
\newcommand{\authors}{Vu H. N. Phan and Moshe Y. Vardi}
\author{
    \authors \\
    \texttt{\{vhp1,vardi\}@rice.edu} \\
    Rice University
}
\authors,
\newtheorem{corollary}{Corollary}
\newtheorem{definition}{Definition}
\newtheorem{lemma}{Lemma}
\newcommand{\ms}{MaxSAT}
\newcommand{\cnfs}{849}
\newcommand{\plannedcnfs}{241}
\newcommand{\vbscnfs}{484}
\newcommand{\fastestcnfs}{62}
\newcommand{\fastestpercent}{13\%} 
\newcommand{\tool}[1]{\texttt{#1}}
\newcommand{\dcssat}{\tool{DC-SSAT}}
\newcommand{\dper}{\tool{DPER}}
\newcommand{\dpmc}{\tool{DPMC}}
\newcommand{\erssat}{\tool{erSSAT}}
\newcommand{\executor}{\tool{DMC}}
\newcommand{\maxcount}{\tool{MaxCount}}
\newcommand{\planner}{\tool{LG}}
\newcommand{\procount}{\tool{ProCount}}
\newcommand{\vbs}[1]{\tool{VBS{#1}}}
\newcommand{\angles}[1]{\langle #1 \rangle}
\newcommand{\braces}[1]{\left\{ #1 \right\}}
\newcommand{\brackets}[1]{\left[ #1 \right]}
\newcommand{\pars}[1]{\left( #1 \right)}
\newcommand{\pipes}[1]{\left| #1 \right|}
\newcommand{\tup}{\angles} 
\newcommand{\set}{\braces}
\newcommand{\pb}[1]{\brackets{#1}} 
\newcommand{\of}{\pars}
\newcommand{\abs}[1]{\pipes{#1}}
\newcommand{\size}[1]{\pipes{#1}}
\newcommand{\bigo}[1]{\operatorname{O}\of{#1}}
\newcommand{\vars}[1]{\operatorname{vars}\of{#1}}
\newcommand{\dom}[1]{\operatorname{dom}\of{#1}} 
\newcommand{\width}[1]{\operatorname{width}\of{#1}}
\newcommand{\func}[1]{\mathtt{#1}}
\newcommand{\valuator}{\func{Valuator}}
\newcommand{\push}[2]{\func{push}\of{#1, #2}}
\newcommand{\pop}[1]{\func{pop}\of{#1}}
\newcommand{\minsert}[2]{\func{insert}\of{#1, #2}} 
\renewcommand{\minsert}[2]{#1 \gets #1 \cup \set{#2}}
\newcommand{\mremove}[2]{\func{remove}\of{#1, #2}} 
\renewcommand{\mremove}[2]{#1 \gets #1 \setminus \set{#2}}
\newcommand{\ta}{\tau} 
\newcommand{\va}[2]{\tup{#1, #2}} 
\newcommand{\extend}[3]{#1 \cup \set{\va{#2}{#3}}}
\newcommand{\pr}{\rho} 
\newcommand{\val}[1]{{\llbracket #1 \rrbracket}_\pr} 
\newcommand{\restrict}[2]{#1\restriction_{#2}}
\newcommand{\B}{\mathbb{B}}
\newcommand{\R}{\mathbb{R}}
\newcommand{\ps}[1]{\B^{#1}} 
\renewcommand{\emptyset}{\varnothing}
\renewcommand{\phi}{\varphi}
\newcommand{\V}[1]{\mathcal{V}\of{#1}} 
\newcommand{\C}[1]{\mathcal{C}\of{T, r, #1}} 
\newcommand{\Lv}{\mathcal{L}\of{T, r}} 
\newcommand{\T}{\mathcal{T}} 
\newcommand{\I}{\mathcal{I}} 
\newcommand{\gammai}[1]{\gamma^{-1}\of{#1}} 
\newcommand{\gammamap}{\overset{\gamma}{\mapsto}}
\newcommand{\pimap}{\overset{\pi}{\mapsto}}
\newcommand{\gammas}[1]{G\of{#1}}
\newcommand{\pis}[1]{P\of{#1}}
\newcommand{\clausejoin}[1]{\kappa\of{#1}} 
\newcommand{\clausejoinexpand}[1]{\prod_{c \in \gammas{#1}} \pb{c}}
\newcommand{\stack}{\sigma}
\DeclareMathOperator*{\dsgn}{dsgn}
\newcommand{\gx}{d} 
\DeclareMathOperator*{\argmax}{argmax}
\newcommand{\rand}{\reflectbox{$\mathsf{R}$}} 
\newcommand{\scaleexists}[1]{\vcenter{\hbox{\scalefont{#1}$\exists$}}}
\DeclareMathOperator*{\bigexists}{\vphantom\sum\mathchoice{\scaleexists{2}}{\scaleexists{1.4}}{\scaleexists{1}}{\scaleexists{0.75}}}
\newcommand{\scalerand}[1]{\vcenter{\hbox{\scalefont{#1}$\rand$}}}
\DeclareMathOperator*{\bigrand}{\vphantom\sum\mathchoice{\scalerand{2}}{\scalerand{1.4}}{\scalerand{1}}{\scalerand{0.75}}}
\newcommand{\score}[2]{\operatorname{score}\of{#1, #2}}
\newcommand{\eg}{e.g.}
\newcommand{\ie}{i.e.}
\newcommand{\wrt}{w.r.t.}
\newcommand{\bmpe}{Boolean MPE}
\begin{document}


\maketitle


\begin{abstract}
    In Bayesian inference, the \emph{maximum a posteriori (MAP)} problem combines the \emph{most probable explanation (MPE)} and \emph{marginalization (MAR)} problems.
    The counterpart in propositional logic is the \emph{exist-random stochastic satisfiability (ER-SSAT)} problem, which combines the \emph{satisfiability (SAT)} and \emph{weighted model counting (WMC)} problems.
    Both MAP and ER-SSAT have the form $\argmax_X \sum_Y f\of{X, Y}$, where $f$ is a real-valued function over disjoint sets $X$ and $Y$ of variables.
    These two optimization problems request a value assignment for the $X$ variables that maximizes the weighted sum of $f\of{X, Y}$ over all value assignments for the $Y$ variables.
    ER-SSAT has been shown to be a promising approach to formally verify fairness in supervised learning.
    Recently, \emph{dynamic programming} on graded project-join trees has been proposed to solve \emph{weighted projected model counting (WPMC)}, a related problem that has the form $\sum_X \max_Y f\of{X, Y}$.
    We extend this WPMC framework to exactly solve ER-SSAT and implement a dynamic-programming solver named \dper.
    Our empirical evaluation indicates that \dper{} contributes to the portfolio of state-of-the-art ER-SSAT solvers (\dcssat{} and \erssat) through competitive performance on low-width problem instances.
\end{abstract}


\section{Introduction}

Bayesian networks \cite{pearl1985bayesian} are used in many application domains, such as medicine \cite{shwe1991probabilistic} and industrial systems \cite{cai2017bayesian}.
In Bayesian inference, the well-known \emph{most probable explanation (MPE)} problem requests a variable instantiation with the highest probability \cite{park2002using}.
MPE has the form $\argmax_X f\of{X}$, where $f$ is a real-valued function over a set $X$ of variables.
Also having that form is the \emph{satisfiability (SAT)} problem, which requests a \emph{satisfying truth assignment (model)} of a Boolean formula \cite{cook1971complexity}.
SAT can be viewed as a counterpart of MPE where $f$ is a Boolean function \cite{littman2001stochastic}.

Another fundamental problem in Bayesian inference is \emph{marginalization (MAR)}, which requests the marginal probability of observing a given piece of evidence \cite{chavira2005exploiting}.
MAR has the form $\sum_X f\of{X}$, where the summation is over variable instantiations that are consistent with the evidence.
Also having that form is the \emph{weighted model counting (WMC)} problem, which requests the sum of weights of models of a Boolean formula \cite{valiant1979complexity}.
Indeed, MAR can be solved via reduction to WMC \cite{sang2005performing}.

A generalization of MPE and MAR is the \emph{maximum a posteriori (MAP)} problem, which requests a value assignment for some variables that results in the highest marginal probability (over the remaining variables) in a Bayesian network \cite{park2004complexity}.
MAP has the form $\argmax_X \sum_Y f\of{X, Y}$, where $X$ and $Y$ are disjoint sets of variables.
Also having that form is the \emph{exist-random stochastic satisfiability (ER-SSAT)} problem.
Given a Boolean formula $\phi$ over $X \cup Y$, ER-SSAT requests a truth assignment $\ta_X$ for $X$ that maximizes the sum of weights of models $\ta_Y$ of the residual formula $\restrict{\phi}{\ta_X}$ over $Y$ \cite{lee2018solving}.
ER-SSAT can be viewed as a generalization of SAT and WMC.

Another generalization of SAT and WMC is the \emph{weighted projected model counting (WPMC)} problem, which has the form $\sum_X \max_Y f\of{X, Y}$.
Given a Boolean formula $\phi$ over $X \cup Y$, WPMC requests the sum of weights of truth assignments $\ta_X$ for $X$ such that there exists a truth assignment $\ta_Y$ to $Y$ where $\ta_X \cup \ta_Y$ is a model of $\phi$ \cite{zawadzki2013generalization}.
A recent WPMC framework, \procount{} \cite{dudek2021procount}, employs \emph{dynamic programming} \cite{bellman1966dynamic}, a well-known strategy to solve a large instance by decomposing it into parts then combining partial solutions into the final answer.
Dynamic programming has also been applied to other problems, such as SAT \cite{pan2005symbolic}, WMC \cite{dudek2020dpmc,fichte2020exploiting}, and \emph{quantified Boolean formula (QBF)} evaluation \cite{charwat2016bdd}.

To guide dynamic programming for WPMC, \procount{} uses \emph{project-join trees} as execution plans.
Given a Boolean formula in \emph{conjunctive normal form (CNF)}, \ie, as a set of clauses, a project-join tree specifies how to conjoin clauses and project out variables.
The \emph{width} of a project-join tree is an indicator of the hardness of the instance \cite{dudek2020dpmc}.
As WPMC handles two disjoint sets $X$ and $Y$ of variables differently, a project-join tree needs to be \emph{graded}, \ie, $X$ nodes appear above $Y$ nodes.
Gradedness guarantees correctness because summation (over $X$) and maximization (over $Y$) do not commute in general \cite{dudek2021procount}.
\procount{} operates in two phases.
First, the \emph{planning phase} builds a graded project-join tree with graph-decomposition techniques \cite{robertson1991graph,strasser2017computing}.
Second, the \emph{execution phase} uses the built tree to compute the final answer, where intermediate results are represented by \emph{algebraic decision diagrams (ADDs)} \cite{bahar1997algebraic,somenzi2015cudd}.

Here, we build on the approach of \procount{} to develop an exact solver for ER-SSAT.
To find a \emph{maximizing truth assignment (maximizer)} for ER-SSAT, we leverage an iterative technique that was recently proposed to solve \emph{maximum satisfiability (\ms)} \cite{kyrillidis2022dpms} and \emph{\bmpe} \cite{phan2022dpo}.
\ms{} is an optimization version of SAT and requests a truth assignment that maximizes the number of satisfied clauses of an unsatisfiable CNF formula \cite{krentel1988complexity}.
By building on techniques from WPMC \cite{dudek2021procount}, \bmpe{} \cite{phan2022dpo}, and \ms{} \cite{kyrillidis2022dpms}, we construct \dper, a dynamic-programming ER-SSAT framework.
We then compare \dper{} to state-of-the-art solvers (\dcssat{} \cite{majercik2005dc} and \erssat{} \cite{lee2018solving}).
We observe that \dper{} contributes to the portfolio of exact ER-SSAT solvers, through very competitive performance on instances with low-width project-join trees.


\section{Related Work}

More general than ER-SSAT is the \emph{stochastic satisfiability (SSAT)} problem, which allows arbitrary alternations of the form $\argmax_X \sum_Y \argmax_{X'} \sum_{Y'} \ldots f\of{X, Y, X', Y', \ldots}$.
A complete SSAT solver is \dcssat{} \cite{majercik2005dc}, which employs a \emph{divide-and-conquer} strategy.
\dcssat{} splits a CNF formula into subformulas, caches \emph{viable partial assignments (VPAs)} derived from solving subformulas, then merges VPAs into a solution to the original formula.
To reduce memory usage, \dcssat{} only caches VPAs that are necessary to construct the final answer.

The ER-SSAT solver \erssat{} \cite{lee2018solving} uses \emph{clause-containment learning}, a technique inspired by \emph{clause selection} \cite{janota2015solving} in QBF evaluation.
Clause-containment learning prunes the search space with blocking clauses that are deduced after trying some truth assignments.
These learned clauses are further strengthened by \emph{minimal clause selection}, \emph{clause subsumption}, and \emph{partial assignment pruning}.
The \erssat{} solver extends a WPMC tool \cite{lee2017solving} that combines SAT and WMC techniques.

While \dcssat{} and \erssat{} are exact solvers for ER-SSAT, \maxcount{} \cite{fremont2017maximum} is a probabilistically approximately correct tool for a related problem, \emph{maximum model counting}, which has the form $\max_X \sum_Y \max_Z f\of{X, Y, Z}$.
The approximation tolerance and confidence level are provided as part of the input.
To provide these guarantees, \maxcount{} relies on techniques in WPMC and sampling \cite{chakraborty2016algorithmic}.


\section{Preliminaries}

\subsection{Graphs}

In a \emph{graph} $G$, let $\V{G}$ denote the set of vertices.
A \emph{tree} is an undirected graph that is connected and acyclic.
We refer to a vertex of a tree as a \emph{node}.
A \emph{rooted tree} is a tree $T$ together with a distinguished node $r \in \V{T}$ called the \emph{root}.
In a rooted tree $\tup{T, r}$, each node $v \in \V{T}$ has a (possibly empty) set of \emph{children}, denoted by $\C{v}$, which contains every node $v'$ adjacent to $v$ such that the path from $v'$ to $r$ passes through $v$.
A \emph{leaf} of a rooted tree $\tup{T, r}$ is a non-root node of degree one.
Let $\Lv$ denote the set of leaves of $\tup{T, r}$.
An \emph{internal node} is a member of $\V{T} \setminus \Lv$, including the root.

\subsection{Pseudo-Boolean Functions}

Let $B^A$ denote the set of all functions $f$ with domain $\dom{f} := A$ and codomain $B$.
The \emph{restriction} of $f$ to a set $S$ is a function defined by $\restrict{f}{S} := \set{\tup{a, b} \in f \mid a \in S}$.

From now on, every variable is binary unless noted otherwise.
A \emph{truth assignment} for a set $X$ of variables is a function $\ta : X \to \B$.

A \emph{pseudo-Boolean (PB) function} over a set $X$ of variables is a function $f : \ps{X} \to \R$.
Define $\vars{f} := X$.
We say that $f$ is \emph{constant} if $X = \emptyset$.
Given some $S \subseteq X$, a truth assignment $\ta : S \to \B$ is \emph{total} \wrt{} $f$ if $S = X$ and is \emph{partial} otherwise.
A \emph{Boolean function} is a special PB function $f : \ps{X} \to \B$.

\begin{definition}[Join]
\label{defJoin}
    Let $f : \ps{X} \to \R$ and $g : \ps{Y} \to \R$ be PB functions.
    The \emph{(multiplicative) join} of $f$ and $g$ is a PB function, denoted by $f \cdot g : \ps{X \cup Y} \to \R$, defined for each $\ta \in \ps{X \cup Y}$ by $(f \cdot g)(\ta) := f(\restrict{\ta}{X}) \cdot g(\restrict{\ta}{Y})$.
\end{definition}

Join is commutative and associative: we have $f \cdot g = g \cdot f$ as well as $(f \cdot g) \cdot h = f \cdot (g \cdot h)$ for all PB functions $f$, $g$, and $h$.
Then define $\prod_{i = 1}^n f_i := f_1 \cdot f_2 \cdot \ldots \cdot f_n$.

\begin{definition}[Existential Projection]
\label{defExistProj}
    Let $f : \ps{X} \to \R$ be a PB function and $x$ be a variable.
    The \emph{existential projection} of $f$ \wrt{} $x$ is a PB function, denoted by $\exists_x f : \ps{X \setminus \set{x}} \to \R$, defined for each $\ta \in \ps{X \setminus \set{x}}$ by $(\exists_x f)(\ta) := \max\of{f(\extend{\ta}{x}{0}), f(\extend{\ta}{x}{1})}$.
\end{definition}

Existential projection is commutative: $\exists_x (\exists_y f) = \exists_y (\exists_x f)$ for all variables $x$ and $y$.
Then define $\exists_S f := \exists_x \exists_y \ldots f$, where $S = \set{x, y, \ldots}$ is a set of variables.
By convention, $\exists_\emptyset f := f$.

\begin{definition}[Maximum]
\label{defMaximum}
    Let $f : \ps{X} \to \R$ be a PB function.
    The \emph{maximum} of $f$ is the real number $(\exists_X f)(\emptyset)$.
\end{definition}

\begin{definition}[Maximizer]
\label{defMaximizer}
    Let $f : \ps{X} \to \R$ be a PB function.
    A \emph{maximizer} of $f$ is a truth assignment $\ta \in \ps{X}$ such that $f(\ta) = (\exists_X f)(\emptyset)$.
\end{definition}

\begin{definition}[Random Projection]
\label{defRandProj}
    Let $f : \ps{X} \to \R$ be a PB function, $x$ be a variable, and $0 \le p \le 1$ be a probability.
    The \emph{random projection} of $f$ \wrt{} $x$ and $p$ is a PB function, denoted by $\rand^p_x f : \ps{X \setminus \set{x}} \to \R$, defined for each $\ta \in \ps{X \setminus \set{x}}$ by $(\rand^p_x f)(\ta) := p \cdot f(\extend{\ta}{x}{1}) + (1 - p) \cdot f(\extend{\ta}{x}{0})$.
\end{definition}
We sometimes write $\rand_x$ instead of $\rand^p_x$, leaving $p$ implicit.

We define a \emph{probability mapping} for a set $X$ of variables as a function $\pr : X \to \brackets{0, 1}$.
Note that random projection is commutative: $\rand^{p_1}_{x_1} (\rand^{p_2}_{x_2} f) = \rand^{p_2}_{x_2} (\rand^{p_1}_{x_1} f)$ for all PB functions $f$, variables $x_1, x_2$, and probabilities $p_1, p_2$.
Then given a probability mapping $\pr$ for $X$ and a set $S = \set{x_1, x_2, \ldots} \subseteq X$, define $\rand^\pr_S f := \rand^{\pr(x_1)}_{x_1} \rand^{\pr(x_2)}_{x_2} \ldots f$.
By convention, $\rand^\pr_\emptyset f := f$.
We sometimes write $\rand_S$ instead of $\rand^\pr_S$, leaving $\pr$ implicit.

\subsection{ER-SSAT}

Given a Boolean formula $\phi$, define $\vars{\phi}$ to be the set of all variables that appear in $\phi$.
Then $\phi$ represents a Boolean function, denoted by $\pb{\phi} : \ps{\vars{\phi}} \to \B$, defined according to standard Boolean semantics.

Boolean formulas can be generalized for probabilistic domains as follows \cite{papadimitriou1985games}.
\begin{definition}[Stochastic Formula]
    A \emph{stochastic formula} is one of the following:
    \begin{enumerate}
        \item a Boolean formula,
        \item the \emph{existential quantification} of a stochastic formula $\phi$ \wrt{} a variable $x$, denoted by $\exists x \phi$, or
        \item the \emph{random quantification} of a stochastic formula $\phi$ \wrt{} a variable $x$ and a probability $p$, denoted by $\rand^p x \phi$.
    \end{enumerate}
\end{definition}
We sometimes write $\rand x$ instead of $\rand^p x$, leaving $p$ implicit.

A stochastic formula $Q_1 x_1 Q_2 x_2 \ldots \phi$ (where each $Q_i \in \set{\exists, \rand}$ and $\phi$ is a Boolean formula) represents a PB function, denoted by $\pb{Q_1 x_1 Q_2 x_2 \ldots \phi} : \ps{\vars{\phi} \setminus \set{x_1, x_2, \ldots}} \to \R$.

\begin{definition}[Stochastic Semantics]
\label{defStochasticSemantics}
    A stochastic formula is interpreted as follows:
    \begin{enumerate}
        \item A Boolean formula $\phi$ represents the Boolean function $\pb{\phi}$ according to Boolean semantics.
        \item The existential quantification $\exists x \phi$ represents a PB function, denoted by $\pb{\exists x \phi} : \ps{X \setminus \set{x}} \to \R$, defined by $\pb{\exists x \phi} := \exists_x \pb{\phi}$, where $\pb{\phi} : \ps{X} \to \R$ is the PB function represented by the stochastic formula $\phi$.
        \item The random quantification $\rand^p x \phi$ represents a PB function, denoted by $\pb{\rand^p x \phi} : \ps{X \setminus \set{x}} \to \R$, defined by $\pb{\rand^p x \phi} := \rand^p_x \pb{\phi}$, where $\pb{\phi} : \ps{X} \to \R$ is the PB function represented by the stochastic formula $\phi$.
    \end{enumerate}
\end{definition}

For brevity, define $\exists S := \exists x \exists y \ldots$ if $S = \set{x, y, \ldots}$ is a non-empty set of variables.
Also define $\rand^\pr S := \rand^{\pr(x)} x \rand^{\pr(y)} y \ldots$ if $\pr$ is a probability mapping where $S \subseteq \dom{\pr}$.
We sometimes write $\rand S$ instead of $\rand^\pr S$, leaving $\pr$ implicit.
Then a stochastic formula has the form $Q_1 X_1 Q_2 X_2 \ldots \phi$, where each $Q_i \in \set{\exists, \rand}$ with $Q_i \ne Q_{i + 1}$, each $X_i$ is a non-empty set of variables, and $\phi$ is a Boolean formula.
Several fundamental problems can be formulated with stochastic formulas.

\begin{definition}[Weighted Model Counting]
    Let $\phi$ be a Boolean formula, $X$ be $\vars{\phi}$, and $\pr$ be a probability mapping for $X$.
    The \emph{weighted model counting (WMC)} problem on $\rand^\pr X \phi$ requests the real number $\pb{\rand^\pr X \phi}(\emptyset)$.
\end{definition}

We focus on stochastic formulas with one quantifier alternation.
An \emph{exist-random (ER) formula} has the form $\exists X \rand Y \phi$, where $\phi$ is a Boolean formula.
In a similar fashion, an \emph{RE formula} has the form $\rand Y \exists X \phi$.
The two following problems concern both kinds of quantifiers.

\begin{definition}[Weighted Projected Model Counting]
    Let $\phi$ be a Boolean formula, $\set{X, Y}$ be a partition of $\vars{\phi}$, and $\pr$ be a probability mapping for $Y$.
    The \emph{weighted projected model counting (WPMC)} problem on $\rand^\pr Y \exists X \phi$ requests the real number $\pb{\rand^\pr Y \exists X \phi}(\emptyset)$.
\end{definition}

\begin{definition}[Exist-Random Stochastic Satisfiability]
\label{defErssat}
    Let $\phi$ be a Boolean formula, $\set{X, Y}$ be a partition of $\vars{\phi}$, and $\pr$ be a probability mapping for $Y$.
    The \emph{exist-random stochastic satisfiability (ER-SSAT)} problem on $\exists X \rand^\pr Y \phi$ requests the maximum and a maximizer of $\pb{\rand^\pr Y \phi}$.
\end{definition}
Note that the maximum of $\pb{\rand^\pr Y \phi}$ is the real number $\pb{\exists X \rand^\pr Y \phi}(\emptyset)$.


\section{Solving ER-SSAT}

\subsection{Monolithic Approach}

A Boolean formula is usually given in \emph{conjunctive normal form (CNF)}, \ie, as a set of \emph{clauses}.
A clause is a disjunction of literals, which are variables or negated variables.

Given a CNF formula $\phi$, we have the factorization $\pb{\phi} = \prod_{c \in \phi} \pb{c}$.
In this section, we present an inefficient algorithm to solve ER-SSAT that treats $\phi$ as a monolithic structure and ignores the CNF factored representation.
The next section describes a more efficient algorithm.

To find maximizers for ER-SSAT, we leverage the following idea, which originated from the \emph{basic algorithm} for PB programming \cite{crama1990basic} then was adapted for \ms{} \cite{kyrillidis2022dpms} and \bmpe{} \cite{phan2022dpo}.

\begin{definition}[Derivative Sign]
\label{defDsgn}
    Let $f : \ps{X} \to \R$ be a PB function and $x$ be a variable.
    The \emph{derivative sign} of $f$ \wrt{} $x$ is a function, denoted by $\dsgn_x f : \ps{X \setminus \set{x}} \to \B^{\set{x}}$, defined for each $\ta \in \ps{X \setminus \set{x}}$ by $\pars{\dsgn_x f}(\ta) := \set{\va{x}{1}}$ if $f(\extend{\ta}{x}{1}) \ge f(\extend{\ta}{x}{0})$, and $\pars{\dsgn_x f}(\ta) := \set{\va{x}{0}}$ otherwise.
\end{definition}
Note that the truth assignment $\ta : X \setminus \set{x} \to \B$ is partial \wrt{} $f$ (since $x$ is unassigned).
The derivative sign indicates how to extend $\ta$ into a total truth assignment $\ta' : X \to \B$ (by mapping $x$ to $0$ or $1$) in order to maximize $f(\ta')$.

The following result leads to an iterative process to find maximizers of PB functions \cite{kyrillidis2022dpms,phan2022dpo}.
\begin{restatable}[Iterative Maximization]{proposition}{rePropIterMax}
\label{propIterMax}
    Let $f : \ps{X} \to \R$ be a PB function and $x$ be a variable.
    Assume that a truth assignment $\ta$ is a maximizer of $\exists_x f : \ps{X \setminus \set{x}} \to \R$.
    Then the truth assignment $\ta \cup \pars{\dsgn_x f}(\ta)$ is a maximizer of $f$.
\end{restatable}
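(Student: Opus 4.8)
The plan is to unwind the definitions and verify the claimed equality directly. Write $\ta' := \ta \cup \pars{\dsgn_x f}(\ta)$, so that $\ta'$ is a total truth assignment for $X$ extending $\ta$ by mapping $x$ to whichever bit the derivative sign prescribes. We must show $f(\ta') = (\exists_X f)(\emptyset)$.

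First I would observe, straight from Definition~\ref{defDsgn}, that $f(\ta') = \max\of{f(\extend{\ta}{x}{0}), f(\extend{\ta}{x}{1})}$: the derivative sign is defined precisely to pick the larger of the two one-step extensions, so evaluating $f$ at $\ta'$ returns that maximum. But by Definition~\ref{defExistProj} the right-hand side is exactly $(\exists_x f)(\ta)$. Hence $f(\ta') = (\exists_x f)(\ta)$.

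Next I would use the hypothesis that $\ta$ is a maximizer of $\exists_x f$. By Definition~\ref{defMaximizer} applied to the PB function $\exists_x f : \ps{X \setminus \set{x}} \to \R$, this means $(\exists_x f)(\ta) = \pars{\exists_{X \setminus \set{x}} (\exists_x f)}(\emptyset)$. Combining this with the previous paragraph gives $f(\ta') = \pars{\exists_{X \setminus \set{x}} \exists_x f}(\emptyset)$.

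Finally I would invoke commutativity of existential projection together with the definition of $\exists_S$ for a set $S$: since $\exists_x \exists_y = \exists_y \exists_x$, we have $\exists_{X \setminus \set{x}} \exists_x f = \exists_X f$ (both sides project out every variable of $X$, just in a different order). Therefore $f(\ta') = (\exists_X f)(\emptyset)$, which by Definition~\ref{defMaximizer} says exactly that $\ta'$ is a maximizer of $f$. The argument is essentially a chain of definitional rewrites, so there is no real obstacle; the only point demanding a little care is the bookkeeping in the last step, namely that "project out $x$, then project out the rest" genuinely equals "project out all of $X$" — this is where the stated commutativity of existential projection is needed, and one should make sure the empty-set conventions ($\exists_\emptyset f := f$) line up correctly when $X = \set{x}$.
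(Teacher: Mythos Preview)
Your proof is correct and follows essentially the same approach as the paper: both establish $f(\ta') = (\exists_x f)(\ta)$ via the definition of derivative sign and existential projection, then use the maximizer hypothesis and commutativity to reach $(\exists_X f)(\emptyset)$. The only cosmetic difference is that the paper performs an explicit case split on whether $(\dsgn_x f)(\ta)$ is $\set{\va{x}{1}}$ or $\set{\va{x}{0}}$, whereas you fold both cases into the single observation that $f(\ta') = \max\of{f(\extend{\ta}{x}{0}), f(\extend{\ta}{x}{1})}$.
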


\cref{algoMono} can be used to find maximizers \cite{kyrillidis2022dpms,phan2022dpo}.

\begin{algorithm}[H]
\caption{Computing the maximum and a maximizer of a PB function}
\label{algoMono}
    \KwIn{$f_n$: a PB function over a set $X_n = \set{x_1, x_2, \ldots, x_n}$ of variables}
    \KwOut{$m \in \R$: the maximum of $f_n$}
    \KwOut{$\ta_n \in \ps{X_n}$: a maximizer of $f_n$}

    \DontPrintSemicolon
    \For{$i = n, n - 1, \ldots, 2, 1$}{
        $f_{i - 1} \gets \exists_{x_i} f_i$ \tcp{$f_i$ is a PB function over $X_i = \set{x_1, x_2, \ldots, x_i}$}
    }
    $m \gets f_0(\emptyset)$ \tcp{$f_0 = \exists_{X_n} f_n$ is a constant PB function over $X_0 = \emptyset$} \label{lineMonoMaximum}
    $\ta_0 \gets \emptyset$ \tcp{$\ta_0$ is a maximizer of $f_0$ (and also the only input to $f_0$)} \label{lineMonoTa0}
    \For{$i = 1, 2, \ldots, n - 1, n$}{
        $\ta_i \gets \ta_{i - 1} \cup \pars{\dsgn_{x_i} f_i}(\ta_{i - 1})$ \tcp{$\ta_i$ is a maximizer of $f_i$ since $\ta_{i - 1}$ is a maximizer of $f_{i - 1}$ (by \cref{propIterMax})} \label{lineMonoTaN}
    }
    \Return $\tup{m, \ta_n}$
\end{algorithm}

\begin{restatable}[Correctness of \cref{algoMono}]{proposition}{rePropMono}
\label{propMonoAlgo}
    Let $\phi$ be a Boolean formula, $\set{X, Y}$ be a partition of $\vars{\phi}$, and $\pr$ be a probability mapping for $Y$.
    \cref{algoMono} solves ER-SSAT on $\exists X \rand^\pr Y \phi$ given the input $f_n = \pb{\rand^\pr Y \phi}$, which is a PB function over a set $X_n = X$ of variables.
\end{restatable}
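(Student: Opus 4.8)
The plan is to verify that Algorithm~\ref{algoMono}, run on the input $f_n = \pb{\rand^\pr Y \phi}$, correctly outputs the maximum and a maximizer of $\pb{\rand^\pr Y \phi}$, matching Definition~\ref{defErssat}. First I would observe that the variable set $X_n$ over which $f_n$ is defined is exactly $X$: indeed, $\pb{\rand^\pr Y \phi}$ has domain $\ps{\vars{\phi} \setminus Y} = \ps{X}$ by Definition~\ref{defStochasticSemantics} and the fact that $\set{X,Y}$ partitions $\vars{\phi}$. So enumerating $X = \set{x_1, \ldots, x_n}$ puts us squarely in the setting of Algorithm~\ref{algoMono}.

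Next I would track the two loops. In the first loop, $f_{i-1} \gets \exists_{x_i} f_i$, so by induction $f_i = \exists_{\set{x_{i+1}, \ldots, x_n\}}} f_n$ for each $i$ (using commutativity of existential projection, Definition~\ref{defExistProj}), and in particular $f_0 = \exists_{X_n} f_n = \exists_X \pb{\rand^\pr Y \phi}$. By Definition~\ref{defMaximum}, the quantity $m \gets f_0(\emptyset)$ computed on line~\ref{lineMonoMaximum} is precisely the maximum of $\pb{\rand^\pr Y \phi}$; and this equals $\pb{\exists X \rand^\pr Y \phi}(\emptyset)$ as noted after Definition~\ref{defErssat}. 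That settles the first output.

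For the second output, I would argue by induction on $i$ that $\ta_i$ is a maximizer of $f_i$. The base case is line~\ref{lineMonoTa0}: $\ta_0 = \emptyset$ is the unique element of $\ps{X_0} = \ps{\emptyset}$, hence trivially a maximizer of the constant function $f_0$. For the inductive step, line~\ref{lineMonoTaN} sets $\ta_i \gets \ta_{i-1} \cup \pars{\dsgn_{x_i} f_i}(\ta_{i-1})$; since $f_{i-1} = \exists_{x_i} f_i$ and $\ta_{i-1}$ is a maximizer of $f_{i-1}$ by hypothesis, Proposition~\ref{propIterMax} (Iterative Maximization) immediately gives that $\ta_i$ is a maximizer of $f_i$. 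Taking $i = n$, $\ta_n$ is a maximizer of $f_n = \pb{\rand^\pr Y \phi}$, which is the second output required by Definition~\ref{defErssat}.

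The argument is essentially bookkeeping; the one place to be careful is the domain alignment in the inductive step of Proposition~\ref{propIterMax}'s application — checking that $\ta_{i-1} \in \ps{X_{i-1}} = \ps{\vars{\exists_{x_i} f_i}}$ so that the hypothesis ``$\ta_{i-1}$ is a maximizer of $\exists_{x_i} f_i$'' in Proposition~\ref{propIterMax} is literally satisfied, and that $\dsgn_{x_i} f_i$ is applied to an assignment in its domain $\ps{X_i \setminus \set{x_i}}$. These hold because $X_{i-1} = X_i \setminus \set{x_i}$ by construction, so the main obstacle is merely confirming there is no off-by-one mismatch between the decreasing and increasing loops; once that is checked, the correctness of both outputs follows directly from Proposition~\ref{propIterMax} and Definitions~\ref{defMaximum}--\ref{defMaximizer}.
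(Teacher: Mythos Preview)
Your proposal is correct and follows essentially the same approach as the paper: both arguments verify that $f_0 = \exists_{X_n} f_n$ so that $m$ is the maximum by \cref{defMaximum}, and then establish by induction (via \cref{propIterMax}) that each $\ta_i$ is a maximizer of $f_i$, yielding $\ta_n$ as a maximizer of $f_n$. Your write-up is in fact more detailed than the paper's (which compresses the second part to a single sentence), in particular spelling out the domain alignment $X_{i-1} = X_i \setminus \set{x_i}$ needed to invoke \cref{propIterMax}; this extra care is warranted and does not constitute a different route.
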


ER-SSAT on $\exists X \rand Y \phi$ can be solved by calling \cref{algoMono} with $\pb{\rand Y \phi}$ as an input.
But the PB function $\pb{\rand Y \phi}$ may be too large to fit in main memory, making the computation slow or even impossible.
In the next section, we exploit the CNF factorization of $\phi$ and propose a more efficient solution.

\subsection{Dynamic Programming}

ER-SSAT on $\exists X \rand Y \phi$ involves the PB function $\pb{\exists X \rand Y \phi} = \exists_X \rand_Y \pb{\phi} = \exists_X \rand_Y \prod_{c \in \phi} \pb{c}$, where the CNF formula $\phi$ is a set of clauses $c$.
Instead of projecting all variables in $X$ and $Y$ after joining all clauses, we can be more efficient and project some variables early as follows \cite{dudek2021procount}.

\begin{restatable}[Early Projection]{proposition}{rePropEarlyProj}
\label{propEarlyProj}
    Let $f : \ps{X} \to \R$ and $g : \ps{Y} \to \R$ be PB functions.
    Assume that $\pr$ is a probability mapping for $X \cup Y$.
    Then for all $S \subseteq X \setminus Y$, we have $\exists_S (f \cdot g) = (\exists_S f) \cdot g$ and $\rand^\pr_S (f \cdot g) = (\rand^\pr_S f) \cdot g$.
\end{restatable}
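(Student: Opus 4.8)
The plan is to reduce the proposition to the single-variable case — where $S$ is a singleton $\set{x}$ with $x \in X \setminus Y$ — and then settle that case by unfolding \cref{defJoin,defExistProj,defRandProj}.

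\emph{Reduction.} I would induct on $\size{S}$, handling $\exists$ and $\rand^\pr$ in parallel. The base case $S = \emptyset$ is immediate: $\exists_\emptyset(f \cdot g) = f \cdot g = (\exists_\emptyset f) \cdot g$, and likewise for $\rand^\pr_\emptyset$. For the inductive step, pick $x \in S$ and set $S' := S \setminus \set{x}$. By the definition of $\exists_S$ together with commutativity of existential projection, $\exists_S(f \cdot g) = \exists_{S'}\of{\exists_x(f \cdot g)}$. Granting the single-variable case, $\exists_x(f \cdot g) = (\exists_x f) \cdot g$ with $\exists_x f \colon \ps{X \setminus \set{x}} \to \R$; since $x \in X \setminus Y$ we have $S' \subseteq (X \setminus \set{x}) \setminus Y$, so the induction hypothesis applied with $\exists_x f$ in the role of $f$ and $X \setminus \set{x}$ in the role of $X$ gives $\exists_{S'}\of{(\exists_x f) \cdot g} = \of{\exists_{S'}\exists_x f} \cdot g = (\exists_S f) \cdot g$. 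The identical chain works for $\rand^\pr$, using commutativity of random projection and $\pr$ restricted to the variables in play.

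\emph{Single-variable case.} Fix $x \in X \setminus Y$. Each side of $\exists_x(f \cdot g) = (\exists_x f) \cdot g$ and of $\rand^p_x(f \cdot g) = (\rand^p_x f) \cdot g$ is a PB function over $(X \setminus \set{x}) \cup Y$, so fix a truth assignment $\ta$ on that set and write $\ta' := \restrict{\ta}{X \setminus \set{x}}$. The structural facts needed — and the only point where $x \notin Y$ is used — are that for $b \in \B$ we have $\restrict{\of{\extend{\ta}{x}{b}}}{Y} = \restrict{\ta}{Y}$ and $\restrict{\of{\extend{\ta}{x}{b}}}{X} = \extend{\ta'}{x}{b}$; hence, abbreviating $f_b := f\of{\extend{\ta'}{x}{b}}$ and $c := g\of{\restrict{\ta}{Y}}$, \cref{defJoin} gives $(f \cdot g)\of{\extend{\ta}{x}{b}} = f_b \cdot c$. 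For the random case, \cref{defRandProj} and distributivity of multiplication over addition yield
\[
\of{\rand^p_x(f \cdot g)}(\ta) = p \cdot f_1 \cdot c + (1 - p) \cdot f_0 \cdot c = \of{p \cdot f_1 + (1 - p) \cdot f_0} \cdot c = \of{(\rand^p_x f) \cdot g}(\ta),
\]
which, via the reduction above, proves the $\rand^\pr_S$ half of the proposition. For the existential case, \cref{defExistProj} gives $\of{\exists_x(f \cdot g)}(\ta) = \max\of{f_1 \cdot c,\, f_0 \cdot c}$, and one wants this to equal $(\exists_x f)(\ta') \cdot c = \of{(\exists_x f) \cdot g}(\ta)$.

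\emph{Where the work is.} The last step is the only one that is not pure symbol-pushing: it needs $\max(u c, v c) = c \cdot \max(u, v)$, which holds precisely when $c \ge 0$. So the existential half genuinely requires $g$ — equivalently, all PB functions occurring in the algorithm — to be nonnegative-valued; strictly speaking this should be recorded as a hypothesis. It is automatic in our setting: every PB function the dynamic-programming algorithm forms is built from clause functions (each with codomain $\B$) by joins and by existential and random projections, and this class is closed under $\cdot$, $\exists_x$, and $\rand^p_x$, since a pointwise maximum and a convex combination of nonnegative reals are nonnegative; hence every such function in fact maps into $\R_{\ge 0}$. With $c \ge 0$ in hand, pulling $c$ out of the maximum gives $\of{\exists_x(f \cdot g)}(\ta) = \max(f_1, f_0) \cdot c = \of{(\exists_x f) \cdot g}(\ta)$, and the single-variable existential identity — hence, by the reduction, the full proposition — follows.
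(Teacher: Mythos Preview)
Your approach is essentially the paper's: reduce to a single variable and unfold \cref{defJoin,defExistProj,defRandProj}, then extend to $S$ by commutativity. The paper carries out the random case in detail and dismisses the existential case as ``similar''; you are more careful here, correctly observing that $\max(uc,vc)=c\cdot\max(u,v)$ needs $c\ge 0$ and hence that the existential identity requires $g$ to be nonnegative-valued --- a hypothesis the proposition as stated omits but which, as you note, holds throughout the algorithm since every PB function in play is built from $\B$-valued clause functions by operations that preserve nonnegativity.
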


Early projection can lead to smaller intermediate PB functions.
For example, the bottleneck in computing $\exists_S (f \cdot g)$ is $f \cdot g$ with size $s = \size{\vars{f \cdot g}} = \size{X \cup Y}$.
The bottleneck in computing $(\exists_S f) \cdot g$ is $f$ with size $s_1 = \size{\vars{f}} = \size{X}$ or is $(\exists_S f) \cdot g$ with size $s_2 = \size{\vars{(\exists_S f) \cdot g}} = \size{X \cup Y \setminus S}$.
Notice that $s \ge \max(s_1, s_2)$.
The difference is consequential since an operation on a PB function $h$ may take $\bigo{2^{\size{\vars{h}}}}$ time and space.

When there is only one kind of projection (\eg, just $\rand$ and no $\exists$), we can apply early projection systematically as in the following framework, \dpmc, which uses dynamic programming for WMC \cite{dudek2020dpmc}.

\clearpage
\begin{definition}[Project-Join Tree]
\label{defPjt}
    Let $\phi$ be a CNF formula (\ie, a set of clauses).
    A \emph{project-join tree} for $\phi$ is a tuple $\T = \tup{T, r, \gamma, \pi}$, where:
    \begin{itemize}
        \item $\tup{T, r}$ is a rooted tree,
        \item $\gamma : \Lv \to \phi$ is a bijection, and
        \item $\pi : \V{T} \setminus \Lv \to \ps{\vars{\phi}}$ is a function.
    \end{itemize}
    A project-join tree must satisfy the following criteria:
    \begin{enumerate}
        \item The set $\set{\pi(v) \mid v \in \V{T} \setminus \Lv}$ is a partition of $\vars{\phi}$, where some $\pi(v)$ sets may be empty.
        \item For each internal node $v$, variable $x \in \pi(v)$, and clause $c \in \phi$, if $x \in \vars{c}$, then the leaf $\gammai{c}$ is a descendant of $v$ in $\tup{T, r}$.
    \end{enumerate}
\end{definition}

For a leaf $v \in \Lv$, define $\vars{v} := \vars{\gamma(v)}$, \ie, the set of variables that appear in the clause $\gamma(v) \in \phi$.
For an internal node $v \in \V{T} \setminus \Lv$, define $\vars{v} := \pars{\bigcup_{v' \in \C{v}} \vars{v'}} \setminus \pi(v)$.

ER-SSAT has both kinds of projection in a particular order ($\exists X \rand Y \phi$).
Since $\exists_x \rand_y f \ne \rand_y \exists_x f$ in general, early projection must be applied carefully.
We use the following idea from \procount{} \cite{dudek2021procount}, an extension of \dpmc{} for WPMC ($\rand X \exists Y \phi$).

\begin{definition}[Graded Project-Join Tree]
\label{defGradedness}
    Let $\phi$ be a CNF formula, $\T = \tup{T, r, \gamma, \pi}$ be a project-join tree for $\phi$, and $\set{X, Y}$ be a partition of $\vars{\phi}$.
    We say that $\T$ is \emph{$\tup{X, Y}$-graded} if there are sets $\I_X$ and $\I_Y$, called \emph{grades}, that satisfy the following properties:
    \begin{enumerate}
        \item The set $\set{\I_X, \I_Y}$ is a partition of $\V{T} \setminus \Lv$.
        \item If $v \in \I_X$, then $\pi(v) \subseteq X$.
        \item If $v \in \I_Y$, then $\pi(v) \subseteq Y$.
        \item If $v_X \in \I_X$ and $v_Y \in \I_Y$, then $v_X$ is not a descendant of $v_Y$ in the rooted tree $\tup{T, r}$.
    \end{enumerate}
\end{definition}

\cref{figPjt} illustrates a graded project-join tree.

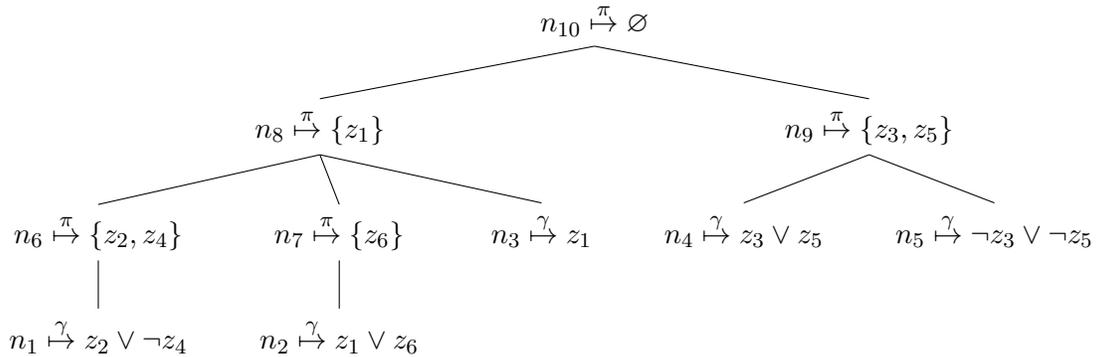
\begin{figure}[H]
    \centering
    \begin{tikzpicture}[grow = down]
        \tikzset{level distance = 40pt, sibling distance = 20pt}
        \Tree [ .$n_{10}\pimap\emptyset$
            [ .$n_8\pimap\set{z_1}$
                [ .$n_6\pimap\set{z_2, z_4}$
                    [ .$n_1\gammamap{z_2 \vee \neg z_4}$ ]
                ]
                [ .$n_7\pimap\set{z_6}$
                    [ .$n_2\gammamap{z_1 \vee z_6}$ ]
                ]
                [ .$n_3\gammamap{z_1}$ ]
            ]
            [ .$n_9\pimap\set{z_3, z_5}$
                [ .$n_4\gammamap{z_3 \vee z_5}$ ]
                [ .$n_5\gammamap{\neg z_3 \vee \neg z_5}$ ]
            ]
        ]
    \end{tikzpicture}
    \caption{
        A graded project-join tree $\T = \tup{T, n_{10}, \gamma, \pi}$ for a CNF formula $\phi$ over sets $X = \set{z_1, z_3, z_5}$ and $Y = \set{z_2, z_4, z_6}$ of variables.
        Each leaf corresponds to a clause of $\phi$ under $\gamma$.
        Each internal node corresponds to a set of variables of $\phi$ under $\pi$.
        Note that $\T$ is graded with grades $\I_X = \set{n_8, n_9, n_{10}}$ and $\I_Y = \set{n_6, n_7}$.
    }
\label{figPjt}
\end{figure}

To adapt graded project-join trees for ER-SSAT, we present the following definition, which swaps $\I_X$ and $\I_Y$ in comparison to the original version in \procount{} \cite[Definition 7]{dudek2021procount}.
\begin{definition}[Valuation of Project-Join Tree Node]
\label{defValuation}
    Let $\phi$ be a CNF formula, $\set{X, Y}$ be a partition of $\vars{\phi}$, and $\tup{T, r, \gamma, \pi}$ be an $\tup{X, Y}$-graded project-join tree for $\phi$ with grades $\I_X, \I_Y \subseteq \V{T}$.
    Also, let $\pr$ be a probability mapping for $Y$ and $v \in \V{T}$ be a node.
    The \emph{$\pr$-valuation} of $v$ is a PB function, denoted by $\val{v} : \ps{\vars{v}} \to \R$, defined by the following:
    \begin{align*}
        \val{v} :=
        \begin{cases}
            \pb{\gamma(v)} & \text{if } v \in \Lv \\
            \displaystyle
            \bigexists_{\pi(v)} \pars{ \prod_{v' \in \C{v}} \val{v'} } & \text{if } v \in \I_X \\
            \displaystyle
            \bigrand^\pr_{\pi(v)} \pars{ \prod_{v' \in \C{v}} \val{v'} } & \text{if } v \in \I_Y
        \end{cases}
    \end{align*}
\end{definition}
Recall that $\pb{\gamma(v)}$ is the Boolean function represented by the clause $\gamma(v) \in \phi$.
Also, $\val{v'}$ is the $\pr$-valuation of a child $v'$ of the node $v$ in the rooted tree $\tup{T, r}$.
Note that the $\pr$-valuation of the root $r$ is a constant PB function.

\begin{definition}[Width of Project-Join Tree]
    Let $\T = \tup{T, r, \gamma, \pi}$ be a project-join tree.
    For a leaf $v$ of $\T$, define $\width{v} := \size{\vars{v}}$.
    For an internal node $v$ of $\T$, define $\width{v} := \size{\vars{v} \cup \pi(v)}$.
    The \emph{width} of $\T$ is $\width{\T} := \max_{v \in \V{T}} \width{v}$.
\end{definition}
Note that $\width{\T}$ is the maximum number of variables needed to valuate a node $v \in \V{T}$.
Valuating $\T$ may take $\bigo{2^{\width{\T}}}$ time and space.

In comparison to the original correctness result for WPMC in \procount{} \cite[Theorem 2]{dudek2021procount}, the following version uses $\exists X \rand Y$ instead of $\rand X \exists Y$, in accordance with our \cref{defValuation}.
\begin{restatable}[Valuation of Project-Join Tree Root]{theorem}{reThmRootVal}
\label{thmRootValuation}
    Let $\phi$ be a CNF formula, $\set{X, Y}$ be a partition of $\vars{\phi}$, $\tup{T, r, \gamma, \pi}$ be an $\tup{X, Y}$-graded project-join tree for $\phi$, and $\pr$ be a probability mapping for $Y$.
    Then $\val{r}(\emptyset) = \pb{\exists X \rand^\pr Y \phi}(\emptyset)$.
\end{restatable}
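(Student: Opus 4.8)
The plan is to prove a statement about \emph{every} node of the tree by structural induction (from the leaves upward), and then read off the theorem at the root. For $v \in \V{T}$ let $\gammas{v}$ be the set of clauses $\gamma(\ell)$ over all leaves $\ell$ descending from $v$ (so $\gammas{v} = \set{\gamma(v)}$ for a leaf and $\gammas{r} = \phi$, since $\gamma$ is a bijection), and set $A_v := \bigcup \set{\pi(w) : w \in \I_X \text{ is a descendant of } v}$ and $B_v := \bigcup \set{\pi(w) : w \in \I_Y \text{ is a descendant of } v}$. The claim to establish is
\[
    \val{v} = \bigexists_{A_v} \bigrand^\pr_{B_v} \prod_{c \in \gammas{v}} \pb{c} ,
\]
where $A_v \subseteq X$ and $B_v \subseteq Y$ by criteria (2)--(3) of \cref{defGradedness}; a routine companion induction gives $\vars{v} = \vars{\gammas{v}} \setminus (A_v \cup B_v)$, so the displayed equation is type-correct. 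The base case $v \in \Lv$ is immediate from \cref{defValuation}, since then $A_v = B_v = \emptyset$ and $\val{v} = \pb{\gamma(v)}$.

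For the inductive step at an internal node $v$ I would first record the structural fact that drives everything: by criterion (4) of \cref{defGradedness}, an $\I_X$ node can never be a descendant of an $\I_Y$ node, so if $v \in \I_Y$ then \emph{all} internal descendants of $v$ lie in $\I_Y$; in particular $A_{v'} = \emptyset$ for every child $v'$ of an $\I_Y$ node. Next I would substitute the inductive hypotheses into $\prod_{v' \in \C{v}} \val{v'}$ and hoist projections out of the join using Early Projection (\cref{propEarlyProj}): for a child $v'$ and a variable $z \in A_{v'} \cup B_{v'}$, the grade node introducing $z$ descends from $v'$, so by criterion (2) of \cref{defPjt} every clause containing $z$ belongs to $\gammas{v'}$; since the sets $\gammas{v'}$ for distinct children are pairwise disjoint, $z$ occurs in no sibling's valuation, and hence each block $\bigexists_{A_{v'}}$ and $\bigrand^\pr_{B_{v'}}$ commutes with the surrounding product. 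Hoisting all the existential blocks first and then all the random blocks — which is legitimate precisely because, by the fact just recorded, no $\exists$ is ever carried across a $\rand$ — and merging them via commutativity of like projections, I obtain $\prod_{v' \in \C{v}} \val{v'} = \bigexists_{\bigcup_{v'} A_{v'}} \bigrand^\pr_{\bigcup_{v'} B_{v'}} \prod_{c \in \gammas{v}} \pb{c}$. Finally, applying the node's own projection — $\exists_{\pi(v)}$ if $v \in \I_X$, or $\rand^\pr_{\pi(v)}$ if $v \in \I_Y$ (where all the $A_{v'}$ are empty anyway) — and absorbing $\pi(v)$ into the matching outer operator reproduces the claimed identity for $v$, because $A_v$ (resp.\ $B_v$) is by definition the union of the child sets $\bigcup_{v'} A_{v'}$ (resp.\ $\bigcup_{v'} B_{v'}$) together with $\pi(v)$ itself exactly when $v \in \I_X$ (resp.\ $v \in \I_Y$).

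To finish, specialize to $v = r$: since $\set{\I_X, \I_Y}$ partitions $\V{T} \setminus \Lv$ (criterion (1) of \cref{defGradedness}) and $\set{\pi(w) : w \in \V{T}\setminus\Lv}$ partitions $\vars{\phi} = X \cup Y$ (criterion (1) of \cref{defPjt}), the inclusions $A_r \subseteq X$ and $B_r \subseteq Y$ upgrade to $A_r = X$ and $B_r = Y$; also $\gammas{r} = \phi$. Using the CNF factorization $\pb{\phi} = \prod_{c \in \phi} \pb{c}$ and \cref{defStochasticSemantics}, the claim at $r$ reads $\val{r} = \exists_X \rand^\pr_Y \pb{\phi} = \pb{\exists X \rand^\pr Y \phi}$, and evaluating these constant PB functions at $\emptyset$ gives the theorem. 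I expect the main obstacle to be the bookkeeping in the inductive step: justifying each Early-Projection move by tracking exactly which variables can recur in sibling subtrees, and ensuring every such move is of the safe kind that never commutes an $\exists$ past a $\rand$ — which is where criterion (4) of \cref{defGradedness} (in the mirror-image form dictated by our swapped-grade \cref{defValuation}) is indispensable. The root computation and the match with the stochastic semantics are, by contrast, routine.
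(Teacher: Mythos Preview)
Your proposal is correct and follows essentially the same route as the paper: the paper's \cref{lemmaValuation} is precisely your inductive claim, with your sets $A_v$ and $B_v$ appearing there as $\pis{v}\cap X$ and $\pis{v}\cap Y$, and your inline sibling-disjointness argument is packaged as \cref{lemmaDisjoint} together with \cref{corEarlyProj}. Your explicit remark that gradedness criterion~(4) is what prevents an $\exists$ from having to be commuted past a $\rand$ in the $\I_Y$ case is exactly the point the paper uses (stated there as ``$\pis{v}\cap X=\emptyset$'' in the second subcase).
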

In other words, the valuation of the root $r$ is a constant PB function that maps $\emptyset$ to the maximum of $\pb{\rand^\pr Y \phi}$.

We now introduce \cref{algoDp}, which is more efficient than \cref{algoMono} due to the use of a graded project-join tree to systematically apply early projection.

\clearpage

\begin{algorithm}[H]
\caption{Dynamic Programming for ER-SSAT on $\exists X \rand^\pr Y \phi$}
\label{algoDp}
    \KwIn{$\phi$: a CNF formula, where $\set{X, Y}$ is a partition of $\vars{\phi}$}
    \KwIn{$\pr$: a probability mapping for $Y$}
    \KwOut{$m \in \R$: the maximum of $\pb{\rand^\pr Y \phi}$}
    \KwOut{$\ta \in \ps{X}$: a maximizer of $\pb{\rand^\pr Y \phi}$}

    \DontPrintSemicolon
    $\T = \tup{T, r, \gamma, \pi} \gets$ an $\tup{X, Y}$-graded project-join tree for $\phi$\;
    $\stack \gets \tup{}$ \tcp{an initially empty stack}
    $\val{r} \gets \valuator(\phi, \T, \pr, r, \stack)$ \tcp{$\valuator$ (\cref{valuatorAlgo}) pushes derivative signs onto $\stack$}
    $m \gets \val{r}(\emptyset)$ \tcp{$\val{r}$ is a constant PB function}
    $\ta \gets \emptyset$ \tcp{an initially empty truth assignment}
    \While{$\stack$ is not empty}{
        $\gx \gets \pop{\stack}$ \tcp{$\gx = \dsgn_x f$ is a derivative sign, where $x \in X$ is an unassigned variable and $f$ is a PB function}
        $\ta \gets \ta \cup \gx\of{\restrict{\ta}{\dom{\gx}}}$ \tcp{$\gx\of{\restrict{\ta}{\dom{\gx}}} = \set{\va{x}{b}}$, where $b \in \B$}
    }
    \Return $\tup{m, \ta}$ \tcp{all $X$ variables have been assigned in $\ta$}
\end{algorithm}

\begin{algorithm}[H]
\caption{$\valuator(\phi, \T, \pr, v, \stack)$}
\label{valuatorAlgo}
    \KwIn{$\phi$: a CNF formula, where $\set{X, Y}$ is a partition of $\vars{\phi}$}
    \KwIn{$\T = \tup{T, r, \gamma, \pi}$: an $\tup{X, Y}$-graded project-join tree for $\phi$}
    \KwIn{$\pr$: a probability mapping for $Y$}
    \KwIn{$v \in \V{T}$: a node}
    \KwIn{$\stack$: a stack (of derivative signs) that will be modified}
    \KwOut{$\val{v}$: the $\pr$-valuation of $v$}

    \DontPrintSemicolon
    \If{$v \in \Lv$}{
        \Return $\pb{\gamma(v)}$ \tcp{the Boolean function represented by the clause $\gamma(v) \in \phi$}
    }
    \Else(\tcp*[h]{$v$ is an internal node of $\tup{T, r}$}){
        $f \gets \prod_{v' \in \C{v}} \valuator(\phi, \T, \pr, v', \stack)$\;
        \For{$x \in \pi(v)$}{
            \If(\tcp*[h]{case $\pi(v) \subseteq X$ (by gradedness)}){$x \in X$}{
                $\push{\stack}{\dsgn_x f}$ \tcp{$\stack$ is used to construct a maximizer (\cref{algoDp})}
                $f \gets \exists_x f$
            }
            \Else(\tcp*[h]{case $\pi(v) \subseteq Y$}){
                $f \gets \rand^{\pr(x)}_x f$
            }
        }
        \Return $f$
    }
\end{algorithm}

\begin{lemma}[Correctness of \cref{valuatorAlgo}]
\label{lemmaValuator}
    \cref{valuatorAlgo} returns the $\pr$-valuation of the input project-join tree node.
\end{lemma}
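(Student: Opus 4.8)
The plan is to prove the lemma by structural induction on the input node $v \in \V{T}$, mirroring the three cases of \cref{defValuation}. For the base case, if $v \in \Lv$ then \cref{valuatorAlgo} returns $\pb{\gamma(v)}$ on its first branch, which is exactly $\val{v}$ by the leaf case of \cref{defValuation}; the stack is untouched, so nothing else needs checking.

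For the inductive step, let $v \in \V{T} \setminus \Lv$ be an internal node, and assume as the induction hypothesis that each recursive call $\valuator(\phi, \T, \pr, v', \stack)$ with $v' \in \C{v}$ returns $\val{v'}$. Then after the assignment $f \gets \prod_{v' \in \C{v}} \valuator(\phi, \T, \pr, v', \stack)$ we have $f = \prod_{v' \in \C{v}} \val{v'}$, and it remains to track the effect of the for-loop over $x \in \pi(v)$. This is where I would use gradedness: since $\set{\I_X, \I_Y}$ partitions $\V{T} \setminus \Lv$, the node $v$ lies in exactly one grade. If $v \in \I_X$, then $\pi(v) \subseteq X$, so every iteration takes the then-branch (the case $x \in X$), pushes $\dsgn_x f$ onto $\stack$, and replaces $f$ by $\exists_x f$; after the loop, $f = \bigexists_{\pi(v)} \pars{\prod_{v' \in \C{v}} \val{v'}} = \val{v}$. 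If instead $v \in \I_Y$, then $\pi(v) \subseteq Y$, so every iteration takes the else-branch and replaces $f$ by $\rand^{\pr(x)}_x f$; after the loop, $f = \bigrand^\pr_{\pi(v)} \pars{\prod_{v' \in \C{v}} \val{v'}} = \val{v}$. (The degenerate case $\pi(v) = \emptyset$ is covered as well: the loop is empty, $f = \prod_{v' \in \C{v}} \val{v'}$, and this equals $\val{v}$ by the conventions $\bigexists_\emptyset f = \bigrand^\pr_\emptyset f = f$.) In every case the value returned is $\val{v}$, completing the induction.

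Two small points deserve a sentence each. First, \cref{defValuation} writes $\bigexists_{\pi(v)}$ and $\bigrand^\pr_{\pi(v)}$ as compositions of single-variable projections taken in some fixed but arbitrary order, whereas the for-loop enumerates $\pi(v)$ in its own order; this discrepancy is harmless because existential projection and random projection are each commutative (as noted after \cref{defExistProj} and \cref{defRandProj}). Second, the $\func{push}$ operations are side effects that do not alter the value returned by $\valuator$, so they are irrelevant to this lemma — their purpose, recording the derivative signs needed to reconstruct a maximizer in \cref{algoDp}, is verified separately. I do not anticipate a genuine obstacle here: the only thing requiring care is confirming that gradedness forces the per-element test inside the loop to resolve the same way for every $x \in \pi(v)$, so that the loop realizes exactly one of the two non-leaf cases of \cref{defValuation}.
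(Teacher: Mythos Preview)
Your proof is correct and takes essentially the same approach as the paper: the paper's proof is simply the two-sentence observation that \cref{valuatorAlgo} implements \cref{defValuation} and that the stack modifications do not affect the returned value, which is exactly what your structural induction makes explicit. Your additional remarks on commutativity of projections and the degenerate $\pi(v)=\emptyset$ case are sound refinements of details the paper leaves implicit.
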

\begin{proof}
    \cref{valuatorAlgo} implements \cref{defValuation}.
    Modifying the input stack $\stack$ does not affect how the output valuation $f$ is computed.
\end{proof}

\begin{restatable}[Correctness of \cref{algoDp}]{theorem}{reThmDp}
\label{thmDpAlgo}
    Let $\phi$ be a CNF formula, $\set{X, Y}$ be a partition of $\vars{\phi}$, and $\pr$ be a probability mapping for $Y$.
    Then \cref{algoDp} solves ER-SSAT on $\exists X \rand^\pr Y \phi$.
\end{restatable}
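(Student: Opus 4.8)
The plan is to verify separately the two outputs required by \cref{defErssat}: that $m$ equals the maximum of $\pb{\rand^\pr Y \phi}$, and that $\ta$ is a maximizer of $\pb{\rand^\pr Y \phi}$. The first is immediate: by \cref{lemmaValuator} the top-level call to $\valuator$ returns $\val{r}$, the $\pr$-valuation of the root of the chosen $\tup{X, Y}$-graded project-join tree, and by \cref{thmRootValuation} we have $\val{r}(\emptyset) = \pb{\exists X \rand^\pr Y \phi}(\emptyset)$, which by the remark after \cref{defErssat} is the maximum of $\pb{\rand^\pr Y \phi}$.

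For the maximizer I would reorganize, along the project-join tree, the iterated use of \cref{propIterMax} that underlies the monolithic \cref{algoMono}. First, gradedness (\cref{defGradedness}, properties~1--3) together with criterion~1 of \cref{defPjt} forces $\{\pi(v) \mid v \in \I_X\}$ to be a partition of $X$, so $\valuator$ pushes exactly one derivative sign per variable of $\pi(v)$ for each $v \in \I_X$ and none for $v \in \I_Y$, leaving a stack of $\abs{X}$ signs. List them in pop order as $d_1, \dots, d_k$ with $k = \abs{X}$, let $y_j$ be the variable of $d_j$, and observe $d_j = \dsgn_{y_j} g_j$ where, by \cref{lemmaValuator} applied to subtrees together with \cref{valuatorAlgo}, $g_j = \exists_S \prod_{v' \in \C{v}} \val{v'}$ for the node $v \in \I_X$ and the loop iteration at which $d_j$ is pushed, $S$ being the $\pi(v)$-variables projected before $y_j$. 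The while-loop of \cref{algoDp} then builds $\ta^{(0)} = \emptyset \subseteq \dots \subseteq \ta^{(k)} = \ta$ with $\ta^{(j)} = \ta^{(j-1)} \cup d_j(\restrict{\ta^{(j-1)}}{\dom{d_j}})$. Setting $h := \pb{\rand^\pr Y \phi}$ and $h_j := \exists_{\{y_{j+1}, \dots, y_k\}} h$, so that $h_k = h$, $h_0(\emptyset) = m$, and $h_{j-1} = \exists_{y_j} h_j$, I would prove by induction on $j$ that $\ta^{(j)}$ is a maximizer of $h_j$; taking $j = k$ finishes. The base case is trivial, and in the inductive step \cref{propIterMax} (with $f = h_j$, $x = y_j$) shows $\ta^{(j-1)} \cup (\dsgn_{y_j} h_j)(\ta^{(j-1)})$ maximizes $h_j$, so it suffices to prove $d_j(\restrict{\ta^{(j-1)}}{\dom{d_j}}) = (\dsgn_{y_j} h_j)(\ta^{(j-1)})$, or else to certify $\ta^{(j)}$ as a maximizer of $h_j$ some other way.

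The \emph{key step} is a factorization $h_j = g_j \cdot R_j$ in which $R_j$ is a PB function with $y_j \notin \vars{R_j}$ and $R_j(\ta') \ge 0$ for every $\ta'$. To obtain it, split $\phi$ into the clauses on leaves below $v$ and the rest, and split $Y$ into the variables projected at $\I_Y$-nodes inside the subtree of $v$ and the rest; criterion~2 of \cref{defPjt} and property~4 of \cref{defGradedness} imply that the two clause groups share no $Y$-variable and no variable still unprojected at this point (in particular not $y_j$, which by criterion~2 occurs only below $v$). Repeated application of \cref{propEarlyProj} then factors $h$ across the subtree boundary and slides the projections $\exists_{\{y_{j+1}, \dots, y_k\}}$ --- which are exactly the $X$-variables already eliminated inside the subtree --- onto the subtree factor, where $\valuator$ has assembled precisely $g_j$, while the remaining factor $R_j$ consists only of clause functions and random projections of nonnegative functions (hence $R_j \ge 0$) and does not mention $y_j$. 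Given this, by \cref{defJoin} we have $h_j(\extend{\ta^{(j-1)}}{y_j}{b}) = g_j(\extend{q}{y_j}{b}) \cdot R_j(\restrict{\ta^{(j-1)}}{\vars{R_j}})$ with $q = \restrict{\ta^{(j-1)}}{\dom{d_j}}$; if $R_j(\restrict{\ta^{(j-1)}}{\vars{R_j}}) > 0$ the two comparisons point the same way, so $\dsgn_{y_j} h_j$ agrees with $\dsgn_{y_j} g_j = d_j$ at $\ta^{(j-1)}$ and we are done, whereas if $R_j(\cdot) = 0$ then $h_j$ vanishes on every extension of $\ta^{(j-1)}$, so $(\exists_{y_j} h_j)(\ta^{(j-1)}) = 0$; since $\ta^{(j-1)}$ maximizes $h_{j-1}$ and maxima are unchanged by existential projection, the maximum of $h_j$ is $0$, hence $h_j(\ta^{(j)}) = 0$ and $\ta^{(j)}$ is a maximizer of $h_j$ after all.

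I expect the main obstacle to be the bookkeeping behind $h_j = g_j \cdot R_j$: making precise which clauses and variables lie inside versus outside the subtree of $v$, extracting the needed disjointness from \cref{defPjt} and \cref{defGradedness}, and --- this is where gradedness is essential, exactly as in the proof of \cref{thmRootValuation} --- checking that no application of \cref{propEarlyProj} ever has to commute an $\exists$ past a $\rand$. The count of pushes, the identification of $g_j$ with the working function of \cref{valuatorAlgo}, and the degenerate case $R_j = 0$ are comparatively routine.
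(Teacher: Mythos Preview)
Your overall strategy is correct and will go through, but the key factorization is mis-described in a way that would trip up a careful reader. You assert that $\{y_{j+1},\dots,y_k\}$ are ``exactly the $X$-variables already eliminated inside the subtree'' of $v$, and that consequently all of $\exists_{\{y_{j+1},\dots,y_k\}}$ slides onto the subtree factor while $R_j$ ``consists only of clause functions and random projections''. This is false in general: the DFS driving $\valuator$ may have fully processed earlier-visited sibling subtrees (of $v$ or of any ancestor of $v$) before pushing $d_j$, so $\{y_{j+1},\dots,y_k\}$ typically contains $X$-variables projected \emph{outside} the subtree of $v$. Likewise, your claim that the two clause groups ``share no variable still unprojected at this point'' is wrong: $\phi_v$ and $\phi\setminus\phi_v$ may share $X$-variables that live in $\pi(w)$ for a proper ancestor $w$ of $v$, and those are precisely the still-unprojected ones in $\{y_1,\dots,y_{j-1}\}$. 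The fix is easy: split $\{y_{j+1},\dots,y_k\}$ into $T_F$ (those appearing in $\phi_v$, which by \cref{defPjt} lie at $v$ or below and hence do not appear in $\psi$) and $T_G$ (the rest, which do not appear in $\phi_v$), then apply \cref{propEarlyProj} twice to get $h_j=(\exists_{T_F}F)\cdot(\exists_{T_G}G)=g_j\cdot R_j$. Now $R_j$ also carries existential projections, but that is harmless for nonnegativity and for $y_j\notin\vars{R_j}$; the shared ancestor-variables simply sit in $\vars{g_j}\cap\vars{R_j}$, which is fine for a product.

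With that correction your route is genuinely different from the paper's. The paper does not analyze the stack after the fact; instead it rewrites \cref{algoDp} and \cref{valuatorAlgo} as annotated versions carrying two extra pieces of state --- a set $E$ of eliminated variables and a multiset $A$ of ``active'' PB functions --- and proves, via a chain of operational lemmas following the recursion, the running invariant $\pb{A}=\exists_{E\cap X}\rand^\pr_{E\cap Y}\pb{\phi}$. At the moment $d_j$ is pushed, $A$ contains the current working function $f$ (your $g_j$), so the needed factorization $\pb{A}=f\cdot\pb{A\setminus\{f\}}$ with $y_j\notin\vars{\pb{A\setminus\{f\}}}$ and $\pb{A\setminus\{f\}}\ge 0$ drops out of the data structure rather than from tree combinatorics; the paper then packages your nonnegativity argument as a separate lemma. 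The paper's machinery handles the outside-subtree variables automatically (they are just other members of $A$), which is exactly the bookkeeping you under-specified; conversely, your direct factorization avoids introducing and maintaining auxiliary state across a dozen lemmas.
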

\begin{proof}
    See \cref{secProofDp}.
\end{proof}

\cref{algoDp} comprises two phases: a \emph{planning phase} that builds a graded project-join tree $\T$ and an \emph{execution phase} that valuates $\T$.
The probability mapping $\pr$ is needed only in the execution phase.

We implemented \cref{algoDp} as \dper, a dynamic-programming ER-SSAT solver.
\dper{} uses a planning tool, \planner{} \cite{dudek2021procount}, that invokes a solver \cite{strasser2017computing} for \emph{tree decomposition} \cite{robertson1991graph}.
A tree decomposition of a graph $G$ is a tree $T$, where each node of $T$ corresponds to a set of vertices of $G$ (plus other technical criteria).

Also, \dper{} extends an execution tool, \executor{} \cite{dudek2021procount}, that manipulates PB functions using \emph{algebraic decision diagrams (ADDs)} \cite{bahar1997algebraic}.
An ADD is a directed acyclic graph that can compactly represent a PB function and support some polynomial-time operations.
ADDs generalize \emph{binary decision diagrams (BDDs)} \cite{bryant1986graph}, which are used to manipulate Boolean functions.

\cref{figAdd} illustrates an ADD.

\begin{figure}[H]
    \centering
    \includegraphics[
        height = 200pt,
        trim = {140pt 35pt 7pt 35pt} 
    ]
    {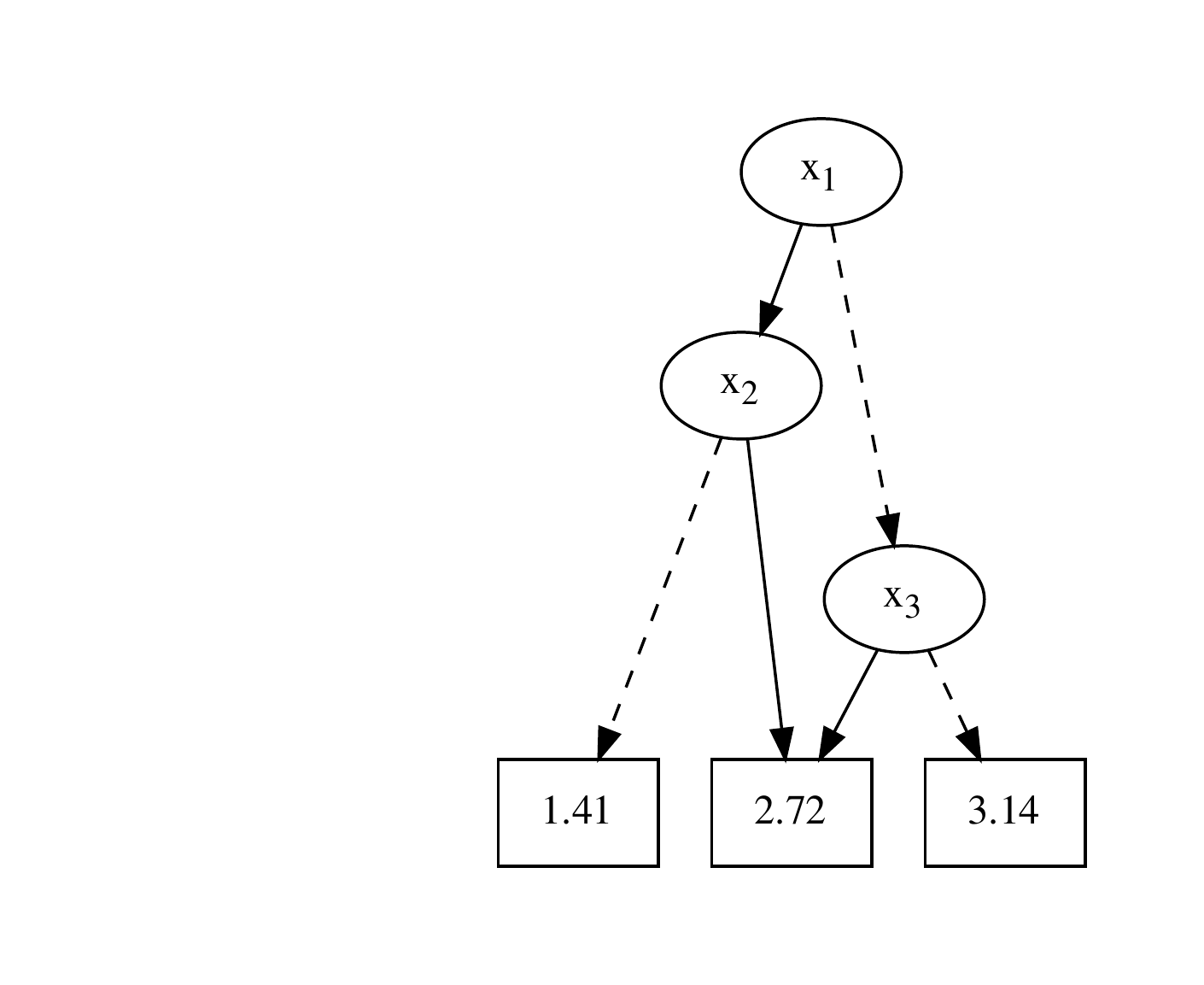}
    \caption{
        An ADD representing a PB function.
        If an edge from an oval node is solid (respectively dashed), then the corresponding variable is assigned $1$ (respectively $0$).
    }
\label{figAdd}
\end{figure}


\section{Evaluation}

We conducted computational experiments to answer the following \emph{empirical questions}.
\begin{enumerate}
    \item Does \dper{} contribute to the portfolio of state-of-the-art exact solvers for ER-SSAT (\dcssat{} \cite{majercik2005dc} and \erssat{} \cite{lee2018solving})?
    \item Is there a class of ER formulas on which \dper{} outperforms \dcssat{} and \erssat?
\end{enumerate}

We used a high-performance computing cluster.
Each solver-benchmark pair was exclusively run on a single core of an Intel Xeon CPU (E5-2650 v2 at 2.60GHz) with a RAM cap of 100 GB and a time cap of 1000 seconds.

Code, benchmarks, and experimental data are available in a public repository:
\begin{center}
    \url{https://github.com/vuphan314/DPER}
\end{center}

\subsection{Benchmarks}

Initially, we considered 941 stochastic formulas that were used to verify fairness in supervised learning \cite{ghosh2021justicia}.
But 729 of these benchmarks are \emph{ERE formulas} (of the form $\exists X \rand Y \exists Z \phi$), which are not the focus of this paper.
The remaining 212 benchmarks are ER formulas, but each of them was easily solved in a second.

Therefore, we considered a more challenging benchmark suite, which was used to compare WPMC solvers \cite{dudek2021procount}.
Each benchmark in the suite is an RE formula of the form $\rand^\pr Y \exists X \phi$, where $\phi$ is a CNF formula, $\set{X, Y}$ is a partition of $\vars{\phi}$, and $\pr$ be a probability mapping for $Y$.
This \cnfs-benchmark suite comprises two families.
The first family was originally used for weighted projected sampling \cite{gupta2019waps} and contains 90 RE formulas, each with a non-constant $\pr$.
The second family was originally used for (unweighted) projected sampling \cite{soos2019bird} and contains 759 RE formulas, where every probability was changed from constantly $0.5$ to randomly $0.4$ or $0.6$.
We converted these \cnfs{} harder RE formulas (of the form $\rand Y \exists X \phi$) to ER formulas (of the form $\exists X \rand Y \phi$).
Each ER-SSAT solver (\dper, \dcssat, and \erssat) was required to compute the maximum and a maximizer of $\pb{\rand Y \phi}$.

\subsection{Correctness}

On each of the \cnfs{} ER formulas selected, if both \dcssat{} and \dper{} finished, then these solvers agreed on the maximum of $\pb{\rand Y \phi}$.
But \erssat{} returned a different number on some benchmarks.
Thus, we have high confidence that \dcssat{} and \dper{} were implemented correctly, less so about \erssat.
We notified the authors of \erssat{} about this correctness issue, which they are investigating.

To count only correct solutions while allowing some tolerance for the finite precision of floating-point arithmetic, we used a solution $m$ by \dcssat{} as the ground truth and disqualified a solution $n$ by another solver if $\abs{m - n} > 10^{-6}$.
(Note that every correct solution is a real number between $0$ and $1$.)
If $m$ was unavailable (\eg, when \dcssat{} timed out), then we simply accepted $n$.

\subsection{Overall Performance}

We study how three solvers compare on the full set of \cnfs{} hard benchmarks.
We consider the number of solved benchmarks and another important metric, the \emph{PAR-2 score} used in SAT competitions \cite{balint2015overview,froleyks2021sat}.
Given a solver $s$ and a benchmark $b$, the PAR-2 score, $\score{s}{b}$, is the solving time if the solver solves the benchmark (within memory and time limits) and is twice the 1000-second time cap otherwise.
See \cref{tableSolvers} and \cref{figSolvers} for more detail.

\begin{table}[H]
    \centering
    \caption{
        For each solver, we compute the mean PAR-2 score and its 95\% \emph{confidence interval (CI)}.
        \vbs{0}, a \emph{virtual best solver}, simulates running \dcssat{} and \erssat{} concurrently and finishing once one of these two actual solvers succeeds.
        \vbs{1} simulates the portfolio of \dcssat, \erssat, and \dper.
        We use the differences $\delta = \score{\vbs0}{b} - \score{\vbs1}{b}$, for all \cnfs{} benchmarks $b$, to run a one-sample, right-tailed t-test with significance level $\alpha = 0.05$.
        As the p-value is $0.00036 < \alpha$, the null hypothesis ($\delta \le 0$) is rejected.
        So the improvement of \vbs1 over \vbs0 is statistically significant.
    }
    \begin{tabular}{|l|r|r|r|r|r|c|} \hline
        \multirow{2}{*}{Solver} & \multirow{2}{*}{Mean peak RAM} & \multicolumn{3}{c|}{Benchmarks solved} & \multirow{2}{*}{Mean PAR-2 score} & \multirow{2}{*}{95\% CI} \\ \cline{3-5}
        & & Unique & Fastest & Total & & \\ \hline
        \dcssat &   5.7 GB  &  82   &          221  &        286    & 1335 & $[1272, 1398]$ \\ \hline
        \erssat &   0.3 GB  & 170   &          201  &        271    & 1392 & $[1331, 1452]$ \\ \hline
        \dper   &   3.2 GB  &   9   & \fastestcnfs  &        203    & 1522 & $[1465, 1580]$ \\ \hline
        \vbs{0} &       NA  &  NA   &           NA  &        475    &  912 & $[ 846,  977]$ \\ \hline
        \vbs{1} &       NA  &  NA   &           NA  &   \vbscnfs    &  888 & $[ 822,  953]$ \\ \hline
    \end{tabular}
\label{tableSolvers}
\end{table}

\begin{figure}[H]
    \centering
    \input{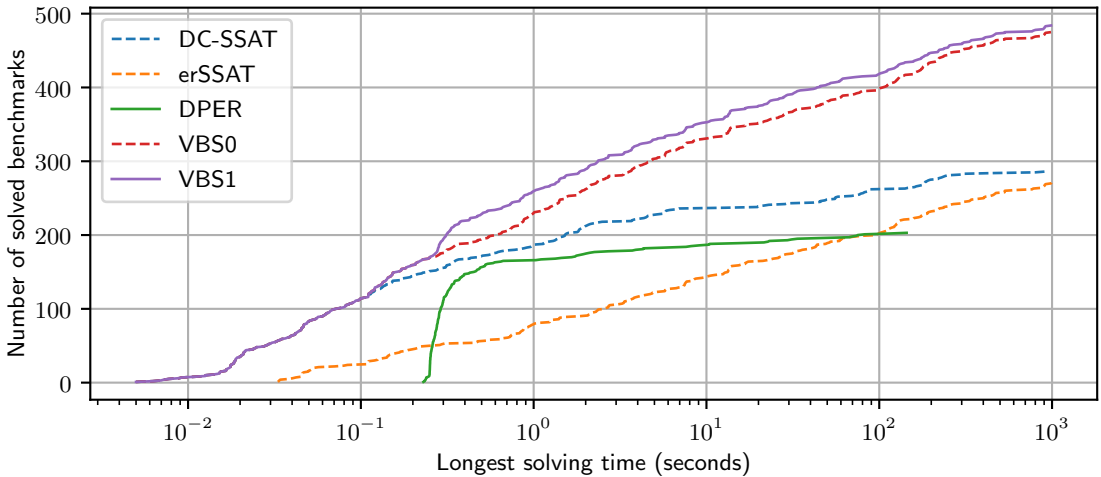}
    \caption{
        On this plot, each $\tup{x, y}$ point on a solver curve means that the solver would solve $y$ benchmarks in total if the time cap was $x$ seconds per benchmark.
        \dper{} was the fastest solver on \fastestcnfs{} (\fastestpercent) of \vbscnfs{} benchmarks that were solved by at least one tool.
        The improvement of the virtual best solver \vbs{1} over \vbs{0} is due to \dper.
    }
\label{figSolvers}
\end{figure}

Overall, \dper{} was less competitive than \dcssat{} and \erssat{} on these benchmarks.
However, to answer the first empirical question, we observe that \dper{} still has a significant contribution to the portfolio of ER-SSAT solvers by achieving the shortest solving time on \fastestcnfs{} (\fastestpercent) of \vbscnfs{} solvable benchmarks.

\subsection{Low-Width Performance}

We focus on the subset of \plannedcnfs{} benchmarks for which graded project-join trees were successfully obtained by \dper.
The obtained widths have median $48.0$, mean $48.1$, and standard deviation $27.0$.
See \cref{tableWidths} and \cref{figWidths} for more detail.

\begin{table}[H]
    \centering
    \caption{
        We use the differences $\delta = \score{\vbs0}{b} - \score{\vbs1}{b}$, for the \plannedcnfs{} benchmarks $b$ whose project-join trees were obtained, to run a one-sample, right-tailed t-test with significance level $\alpha = 0.05$.
        As the p-value is $0.00033 < \alpha$, the null hypothesis ($\delta \le 0$) is rejected.
        So the improvement of \vbs1 over \vbs0 is statistically significant.
    }
    \begin{tabular}{|l|r|r|r|r|r|c|} \hline
        \multirow{2}{*}{Solver} & \multirow{2}{*}{Mean peak RAM} & \multicolumn{3}{c|}{Benchmarks solved} & \multirow{2}{*}{Mean PAR-2 score} & \multirow{2}{*}{95\% CI} \\ \cline{3-5}
        & & Unique & Fastest & Total & & \\ \hline
        \dcssat &   4.9 GB  &   6   &          130  &        183    &  496 & $[ 389,  603]$ \\ \hline
        \erssat &   0.1 GB  &  24   &           43  &         98    & 1205 & $[1083, 1327]$ \\ \hline
        \dper   &   2.0 GB  &   9   & \fastestcnfs  &        203    &  319 & $[ 227,  411]$ \\ \hline
        \vbs{0} &       NA  &  NA   &           NA  &        226    &  141 & $[  80,  202]$ \\ \hline
        \vbs{1} &       NA  &  NA   &           NA  &        235    &   56 & $[  17,   96]$ \\ \hline
    \end{tabular}
\label{tableWidths}
\end{table}

\begin{figure}[H]
    \centering
    \input{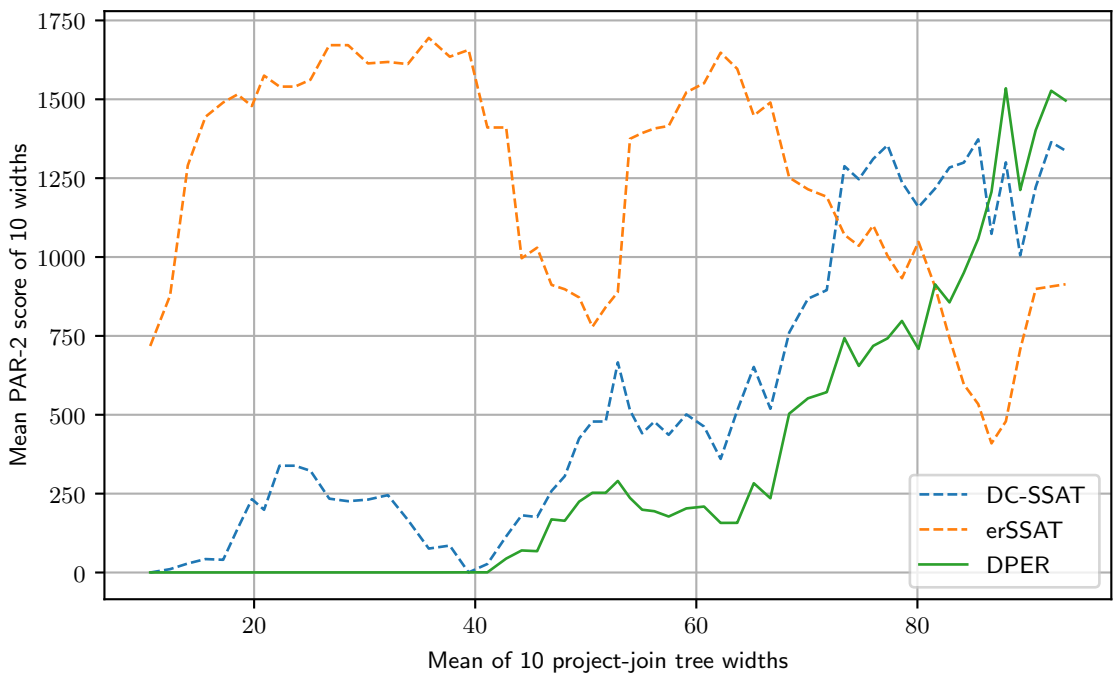}
    \caption{
        \dper{} was able to obtain graded project-join trees for \plannedcnfs{} of \cnfs{} benchmarks.
        On this plot, each $\tup{x, y}$ point on a solver curve indicates that $x$ is the central moving average of 10 consecutive project-join tree widths $1 \le w_1 < w_2 < \ldots < w_{10} \le 100$ and that $y$ is the mean PAR-2 score of the solver on benchmarks of widths $w$ such that $w_1 \le w \le w_{10}$.
        \dper{} appears to be the fastest solver on benchmarks with low widths (under $80$).
    }
\label{figWidths}
\end{figure}

To answer the second empirical question, we identify a class of ER formulas on which \dper{} tends to be faster than \dcssat{} and \erssat.
This class comprises formulas for which there exist graded project-join trees of low widths (under $80$).


\section{Conclusion}
\label{secConclusion}

We introduce \dper, an exact ER-SSAT solver that employs dynamic programming on graded project-join trees, based on a WPMC solver \cite{dudek2021procount}.
\dper{} also adapts an iterative procedure to find maximizers from \ms{} \cite{kyrillidis2022dpms} and \bmpe{} \cite{phan2022dpo}.
Our experiments show that \dper{} contributes to the portfolio of state-of-the-art ER-SSAT solvers (\dcssat{} \cite{majercik2005dc} and \erssat{} \cite{lee2018solving}) through competitiveness on low-width instances.

For future work, we plan to add support for hybrid inputs, such as XOR, PB, and cardinality constraints.
Also, \dper{} can be extended to solve more general problems, \eg, maximum model counting (on ERE formulas) \cite{fremont2017maximum} and \emph{functional aggregate queries} \cite{abo2016faq}.
Another research direction is to improve \dper{} with parallelism, as a portfolio solver (\eg, \cite{xu2008satzilla}) or with a multi-core ADD package (\eg, \cite{van2015sylvan}).


\appendix
\section{Proofs}



\subsection{Proof of \cref{propIterMax}}

\rePropIterMax*
\begin{proof}
    By \cref{defDsgn} (derivative sign), we have $\pars{\dsgn_x f}(\ta) = \set{\va{x}{1}}$ if $f(\extend{\ta}{x}{1}) \ge f(\extend{\ta}{x}{0})$, and $\pars{\dsgn_x f}(\ta) = \set{\va{x}{0}}$ otherwise.
    First, assume the former case, $\pars{\dsgn_x f}(\ta) = \set{\va{x}{1}}$.
    Then:
    \begin{align*}
        f\of{\ta \cup \pars{\dsgn_x f}(\ta)}
        & = f\of{\extend{\ta}{x}{1}}
        \\ & = \max\of{f\of{\extend{\ta}{x}{0}}, f\of{\extend{\ta}{x}{1}}} \tag*{as we assumed the former case}
        \\ & = (\exists_x f)(\ta) \tag*{by \cref{defExistProj} (existential projection)}
        \\ & = \pars{\exists_{X \setminus \set{x}}(\exists_x f)}(\emptyset) \tag*{since $\ta$ is a maximizer of $\exists_x f : \ps{X \setminus \set{x}} \to \R$}
        \\ & = \pars{\exists_X f}(\emptyset) \tag*{because existential projection is commutative}
    \end{align*}
    So $\ta \cup \pars{\dsgn_x f}(\ta)$ is a maximizer of $f : \ps{X} \to \R$ by \cref{defMaximizer} (maximizer).
    The latter case, $\pars{\dsgn_x f}(\ta) = \set{\va{x}{0}}$, is similar.
\end{proof}


\subsection{Proof of \cref{propMonoAlgo}}

\rePropMono*
\begin{proof}
    We prove that the two outputs of the algorithm are the maximum and a maximizer of $f_n$, as requested by the ER-SSAT problem (\cref{defErssat}).

    Regarding the {first output} ($m$), on \cref{lineMonoMaximum} of \cref{algoMono}, note that $f_0 = \exists_{x_1} \exists_{x_2} \ldots \exists_{x_{n - 1}} \exists_{x_n} f_n = \exists_{X_n} f_n$.
    Then $m = f_0(\emptyset) = (\exists_{X_n} f_n)(\emptyset)$, which is the maximum of $f_n$ by \cref{defMaximum}.

    Regarding the {second output} ($\ta_n$), Lines \ref{lineMonoTa0}-\ref{lineMonoTaN} of the algorithm iteratively compute each truth assignment $\ta_i$ that is a maximizer of the PB function $f_i$.
    This process of iterative maximization is correct due to \cref{propIterMax}.
    Finally, $\ta_n$ is a maximizer of $f_n$.
\end{proof}


\subsection{Proof of \cref{propEarlyProj}}

\rePropEarlyProj*
\begin{proof}
    Let $x \in S$ be a variable and $p = \pr(x)$ be a probability.
    We first show that $\rand^p_x \pars{f \cdot g} = \pars{\rand^p_x f} \cdot g : \ps{X \cup Y \setminus \set{x}} \to \R$.
    For each truth assignment $\ta : X \cup Y \setminus \set{x} \to \B$, we have:
    \begin{align*}
        & \pars{\rand^p_x \pars{f \cdot g}}(\ta)
        \\ & = p \cdot \pars{f \cdot g}\of{\extend{\ta}{x}{1}} + (1 - p) \cdot \pars{f \cdot g}\of{\extend{\ta}{x}{0}} \tag*{by \cref{defRandProj} (random projection)}
        \\ & = p \cdot f\of{\restrict{\extend{\ta}{x}{1}}{X}} \cdot g\of{\restrict{\extend{\ta}{x}{1}}{Y}} + (1 - p) \cdot f\of{\restrict{\extend{\ta}{x}{0}}{X}} \cdot g\of{\restrict{\extend{\ta}{x}{0}}{Y}} \tag*{by \cref{defJoin} (join)}
        \\ & = p \cdot f\of{\restrict{\extend{\ta}{x}{1}}{X}} \cdot g\of{\restrict{\ta}{Y}} + (1 - p) \cdot f\of{\restrict{\extend{\ta}{x}{0}}{X}} \cdot g\of{\restrict{\ta}{Y}} \tag*{since $x \notin Y$}
        \\ & = \pars{p \cdot f\of{\restrict{\extend{\ta}{x}{1}}{X}} + (1 - p) \cdot f\of{\restrict{\extend{\ta}{x}{0}}{X}}} \cdot g\of{\restrict{\ta}{Y}} \tag*{by common factor}
        \\ & = \pars{p \cdot f\of{\extend{\restrict{\ta}{X}}{x}{1}} + (1 - p) \cdot f\of{\extend{\restrict{\ta}{X}}{x}{0}}} \cdot g\of{\restrict{\ta}{Y}} \tag*{as $x \in X$}
        \\ & = \pars{\rand^p_x f}\of{\restrict{\ta}{X}} \cdot g\of{\restrict{\ta}{Y}} \tag*{by definition of random projection}
        \\ & = \pars{\rand^p_x f}\of{\restrict{\ta}{X \setminus \set{x}}} \cdot g\of{\restrict{\ta}{Y}} \tag*{because $x \notin \dom{\ta} = X \cup Y \setminus \set{x}$}
        \\ & = \pars{\rand^p_x f \cdot g}(\ta) \tag*{by definition of join}
    \end{align*}
    Since random projection is commutative, we can generalize this equality from a single variable $x \in S$ to an equality on a whole set $S$ of variables: $\rand^\pr_S (f \cdot g) = (\rand^\pr_S f) \cdot g$.
    The case of existential projection, $\exists_S (f \cdot g) = (\exists_S f) \cdot g$, is similar.
\end{proof}

The following result is used later.
\begin{corollary}
\label{corEarlyProj}
    For each $i \in I = \set{1, 2, \ldots, n}$, assume that $f_i : \ps{X_i} \to \R$ is a PB function and that $S_i \subseteq X_i \setminus \bigcup_{i \ne j \in I} X_j$ is a set of variables.
    Let $\pr$ be a probability mapping such that $S = \bigcup_{i \in I} S_i \subseteq \dom{\pr}$.
    Then:
    \begin{align*}
        \prod_{i \in I} \bigrand^\pr_{S_i} f_i & = \bigrand^\pr_S \prod_{i \in I} f_i
        \\ \prod_{i \in I} \bigexists_{S_i} f_i & = \bigexists_S \prod_{i \in I} f_i
    \end{align*}
\end{corollary}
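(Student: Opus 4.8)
The plan is to prove both identities by induction on $n = \size{I}$, using \cref{propEarlyProj} as the only tool and keeping the $\exists$-case strictly parallel to the $\rand$-case, since \cref{propEarlyProj} and the commutativity remarks cover both kinds of projection verbatim. The base case $n = 1$ is immediate. For the inductive step I would peel off the last factor: set $g := \prod_{i = 1}^{n - 1} f_i$, which is a PB function with $\vars{g} = \bigcup_{i = 1}^{n - 1} X_i$ (because $\vars{f_i} = X_i$ and the variable set of a join is the union of the two variable sets), and set $S' := \bigcup_{i = 1}^{n - 1} S_i$. The hypotheses on the indices $1, \ldots, n - 1$ restrict from $I$ to this smaller index set, and $S' \subseteq S \subseteq \dom{\pr}$, so the induction hypothesis applies and gives $\prod_{i = 1}^{n - 1} \bigrand^\pr_{S_i} f_i = \bigrand^\pr_{S'} g$.

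Then I would chain \cref{propEarlyProj} twice, interleaved with commutativity of join:
\begin{align*}
    \prod_{i \in I} \bigrand^\pr_{S_i} f_i
    & = \pars{\bigrand^\pr_{S'} g} \cdot \pars{\bigrand^\pr_{S_n} f_n}
    \\ & = \bigrand^\pr_{S'} \pars{g \cdot \bigrand^\pr_{S_n} f_n}
    \\ & = \bigrand^\pr_{S'} \pars{\pars{\bigrand^\pr_{S_n} f_n} \cdot g}
    \\ & = \bigrand^\pr_{S'} \bigrand^\pr_{S_n} \pars{f_n \cdot g}
    \\ & = \bigrand^\pr_S \prod_{i \in I} f_i .
\end{align*}
The first use of \cref{propEarlyProj} is legitimate because $S' \subseteq \vars{g}$ while $S' \cap \vars{\bigrand^\pr_{S_n} f_n} = \emptyset$: indeed $\vars{\bigrand^\pr_{S_n} f_n} \subseteq X_n$, and since each $S_i \subseteq X_i \setminus \bigcup_{j \ne i} X_j$ we have $S_i \cap X_n = \emptyset$ for $i < n$, hence $S' \subseteq \bigcup_{i < n} X_i$ is disjoint from $X_n$. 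The second use is legitimate because $S_n \subseteq X_n = \vars{f_n}$ and, again by the standing hypothesis, $S_n \cap \bigcup_{i < n} X_i = \emptyset = S_n \cap \vars{g}$. Finally $\bigrand^\pr_{S'} \bigrand^\pr_{S_n} = \bigrand^\pr_{S' \cup S_n} = \bigrand^\pr_S$ by commutativity of random projection together with the disjointness of $S'$ and $S_n$ (same observation), and $f_n \cdot g = \prod_{i \in I} f_i$ by commutativity and associativity of join.

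The $\exists$-identity is obtained by the identical argument with every $\bigrand^\pr$ replaced by $\bigexists$, invoking the $\exists_S$ half of \cref{propEarlyProj} and commutativity of existential projection. I do not expect a genuine obstacle here; the only thing that needs care is the bookkeeping of the variable-disjointness side conditions, all of which collapse to the standing assumption $S_i \subseteq X_i \setminus \bigcup_{j \ne i} X_j$, plus the observation that $\vars{g}$ is exactly $\bigcup_{i < n} X_i$ so that $S'$ really does lie inside it.
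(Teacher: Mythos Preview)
Your proof is correct and follows essentially the same approach as the paper, which merely states ``We simply apply early projection (in the opposite direction) multiple times.'' Your induction on $n$ with the two invocations of \cref{propEarlyProj} is exactly the unpacking of that one-liner, and your verification of the disjointness side conditions is accurate.
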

\begin{proof}
    We simply apply early projection (in the opposite direction) multiple times.
\end{proof}


\subsection{Proof of \cref{thmRootValuation}}

To simplify notations, we use the following shorthand.

\begin{definition}[Projection Set]
\label{defProjSet}
    The \emph{projection set} of a node $v$ of a project-join tree $\tup{T, r, \pi, \gamma}$ is a set of variables, denoted by $\pis{v}$, defined by:
    \begin{align*}
        \pis{v} :=
        \begin{cases}
            \emptyset & \text{if $v \in \Lv$}
            \\ \pi(v) \cup \bigcup_{v' \in \C{v}} \pis{v'} & \text{otherwise}
        \end{cases}
    \end{align*}
\end{definition}

Also, let $\gammas{v}$ be the set of clauses $c \in \phi$ whose corresponding leaves $\gammai{c}$ are descendants of $v$ in the rooted tree $\tup{T, r}$, formally:
\begin{align*}
    \gammas{v} :=
    \begin{cases}
        \set{\gamma(v)} & \text{if $v \in \Lv$}
        \\ \bigcup_{v' \in \C{v}} \gammas{v'} & \text{otherwise}
    \end{cases}
\end{align*}
For a set $S$ of clauses $c$, define $\vars{S} := \bigcup_{c \in S} \vars{c}$.

\newcommand{\child}{u}
\newcommand{\descendant}{s}

We make the following observation.

\begin{lemma}
\label{lemmaDisjoint}
    Let $\phi$ be a CNF formula, $\T = \tup{T, r, \gamma, \pi}$ be a project-join tree for $\phi$, and $\child_1, \child_2 \in \V{T}$ be nodes.
    Assume that $\child_1$ and $\child_2$ are siblings.
    Then $\pis{\child_1} \cap \vars{\gammas{\child_2}} = \emptyset$.
\end{lemma}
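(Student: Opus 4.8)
The plan is to unfold the two recursive definitions $\pis{\child_1}$ and $\gammas{\child_2}$ and track where the variables live in the tree, then invoke criterion~2 of \cref{defPjt} (the project-join tree definition) to get the contradiction. The key structural fact I want to exploit is that $\child_1$ and $\child_2$ are siblings, so neither is a descendant of the other, and hence no descendant of $\child_1$ can be a descendant of $\child_2$.

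First I would argue by contradiction: suppose there is a variable $x \in \pis{\child_1} \cap \vars{\gammas{\child_2}}$. Unwinding \cref{defProjSet}, the condition $x \in \pis{\child_1}$ means there is some internal node $w$ that is a descendant of $\child_1$ (possibly $\child_1$ itself) with $x \in \pi(w)$; this follows by an easy induction on the tree structure from the recursive clause $\pis{v} = \pi(v) \cup \bigcup_{v' \in \C{v}} \pis{v'}$. Similarly, $x \in \vars{\gammas{\child_2}}$ means there is a clause $c \in \phi$ with $x \in \vars{c}$ whose leaf $\gammai{c}$ is a descendant of $\child_2$ (again by induction, unwinding the recursion for $\gammas{\cdot}$ and the definition $\vars{S} = \bigcup_{c \in S} \vars{c}$).

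Now I apply criterion~2 of \cref{defPjt} to the internal node $w$, the variable $x \in \pi(w)$, and the clause $c$ with $x \in \vars{c}$: it tells me that the leaf $\gammai{c}$ is a descendant of $w$ in $\tup{T, r}$. Since $w$ is a descendant of $\child_1$, transitivity of the descendant relation gives that $\gammai{c}$ is a descendant of $\child_1$. But $\gammai{c}$ is also a descendant of $\child_2$. A leaf cannot be a common descendant of two distinct siblings $\child_1$ and $\child_2$: the unique path from the leaf to the root $r$ would have to pass through both $\child_1$ and $\child_2$, which is impossible in a tree since neither sibling is on the path from the other to $r$. (Here I would cite the definition of children via the root-path characterization given in the preliminaries.) This contradiction establishes $\pis{\child_1} \cap \vars{\gammas{\child_2}} = \emptyset$.

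The main obstacle is not any single hard step but making the two inductions precise: I need the auxiliary claims ``$x \in \pis{v}$ iff $x \in \pi(w)$ for some internal descendant $w$ of $v$'' and ``$x \in \vars{\gammas{v}}$ iff $x \in \vars{\gamma(\ell)}$ for some leaf descendant $\ell$ of $v$,'' both of which are routine structural inductions but should be stated cleanly so that the final appeal to criterion~2 and the tree-path argument go through without hand-waving. I also want to be careful that ``descendant'' is used in the reflexive sense consistent with criterion~2 (where the leaf $\gammai{c}$ being a descendant of $v$ includes the possibility that $v$ is that leaf, though here $v=w$ is internal so that edge case does not arise), and that the sibling disjointness of subtrees is stated as a lemma-free consequence of the rooted-tree structure.
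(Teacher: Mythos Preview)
Your proposal is correct and follows essentially the same route as the paper: identify an internal descendant $w$ of $\child_1$ with $x\in\pi(w)$, apply criterion~2 of \cref{defPjt} to force the leaf $\gammai{c}$ to lie under $w$ (hence under $\child_1$), and use the sibling relation to rule out that leaf being under $\child_2$. The only cosmetic difference is that the paper phrases it as a direct argument (arbitrary $x\in\pis{\child_1}$, arbitrary $c$ containing $x$, conclude $c\notin\gammas{\child_2}$) rather than by contradiction.
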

\begin{proof}
    If $\pis{\child_1} = \emptyset$, then $\pis{\child_1} \cap \vars{\gammas{\child_2}} = \emptyset$.
    Now, assume that $\pis{\child_1} \ne \emptyset$.
    Let $x \in \pis{\child_1}$ be an arbitrary variable.
    Then $x \in \pi\of{\descendant}$ for some internal node $\descendant$ that is a descendant of $\child_1$.
    Let $c \in \phi$ be an arbitrary clause such that $x \in \vars{\phi}$.
    By \cref{defPjt} (project-join tree), the leaf $l = \gammai{c}$ is a descendant of $\descendant$.
    Since $\descendant$ is a descendant of $\child_1$, we know that $l$ is a descendant of $\child_1$.
    As $\child_1$ and $\child_2$ are sibling nodes (in the rooted tree $\tup{T, r}$), $l$ is not a descendant of $\child_2$.
    So $c \notin \gammas{\child_2}$.
    Thus, $x \notin \vars{\gammas{\child_2}}$ because $c$ was chosen arbitrarily such that $x \in \vars{c}$.
    Finally, $\pis{\child_1} \cap \vars{\gammas{\child_2}} = \emptyset$ because $x \in \pis{\child_1}$ was chosen arbitrary.
\end{proof}

The following shorthand is useful.
\begin{definition}[Clause Join]
\label{defClauseJoin}
    Let $v$ be a node of a project-join tree.
    The \emph{clause join} of $v$ is a Boolean function, denoted by $\clausejoin{v} : \ps{\vars{\gammas{v}}} \to \B$, defined by:
    \begin{align*}
        \clausejoin{v} := \clausejoinexpand{v}
    \end{align*}
\end{definition}

We need the following invariant regarding the valuation of a node of $\T$.

\begin{lemma}
\label{lemmaValuation}
    Let $\phi$ be a CNF formula, $\set{X, Y}$ be a partition of $\vars{\phi}$, $\T = \tup{T, r, \gamma, \pi}$ be an $\tup{X, Y}$-graded project-join tree for $\phi$, and $\pr$ be a probability mapping for $Y$.
    Then for each node $v \in \V{T}$:
    \begin{align*}
        \val{v} = \bigexists_{\pis{v} \cap X} \bigrand^\pr_{\pis{v} \cap Y} \clausejoin{v}
    \end{align*}
\end{lemma}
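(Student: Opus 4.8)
The plan is to prove \cref{lemmaValuation} by structural induction on the project-join tree node $v$, working from the leaves up to the root. This is the natural approach because \cref{defValuation} itself is a recursive definition over the tree structure, and the shorthand notions $\pis{v}$, $\gammas{v}$, and $\clausejoin{v}$ are all defined recursively the same way.

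For the base case, suppose $v \in \Lv$ is a leaf. Then $\val{v} = \pb{\gamma(v)}$ by \cref{defValuation}, while $\pis{v} = \emptyset$ by \cref{defProjSet} and $\gammas{v} = \set{\gamma(v)}$, so $\clausejoin{v} = \pb{\gamma(v)}$ by \cref{defClauseJoin}. Since $\pis{v} \cap X = \pis{v} \cap Y = \emptyset$, the projection operators do nothing (by the convention $\exists_\emptyset f = f$ and $\rand^\pr_\emptyset f = f$), and both sides equal $\pb{\gamma(v)}$.

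For the inductive step, suppose $v$ is an internal node with children $v_1, \ldots, v_k$, and assume the claim holds for each $v_i$. By gradedness (\cref{defGradedness}), $v$ lies in exactly one of $\I_X$ or $\I_Y$; treat the case $v \in \I_X$ (so $\pi(v) \subseteq X$), the other being symmetric. Then $\val{v} = \exists_{\pi(v)} \prod_i \val{v_i}$ by \cref{defValuation}. Substitute the induction hypothesis for each $\val{v_i}$ to get $\exists_{\pi(v)} \prod_i \bigexists_{\pis{v_i} \cap X} \bigrand^\pr_{\pis{v_i} \cap Y} \clausejoin{v_i}$. Now I would push the projections outside the product using \cref{corEarlyProj}: this requires checking that the projection sets attached to distinct children are disjoint from the variable sets of the other children's clause-joins, which is exactly what \cref{lemmaDisjoint} gives (siblings $v_i, v_j$ satisfy $\pis{v_i} \cap \vars{\gammas{v_j}} = \emptyset$, and $\clausejoin{v_j}$ has domain $\ps{\vars{\gammas{v_j}}}$). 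Applying \cref{corEarlyProj} collapses $\prod_i \bigexists_{\pis{v_i} \cap X} \bigrand^\pr_{\pis{v_i} \cap Y} \clausejoin{v_i}$ into $\bigexists_{(\bigcup_i \pis{v_i}) \cap X} \bigrand^\pr_{(\bigcup_i \pis{v_i}) \cap Y} \prod_i \clausejoin{v_i}$; here one also uses that $\exists$ and $\rand$ each commute with themselves, and that $X$ and $Y$ are disjoint so the two families of projection variables don't interfere. Note $\prod_i \clausejoin{v_i} = \clausejoin{v}$ since $\gammas{v} = \bigcup_i \gammas{v_i}$ (and the $\gammas{v_i}$ partition $\gammas{v}$ because $\gamma$ is a bijection). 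Finally, prepend the outer $\exists_{\pi(v)}$: since $\pi(v) \subseteq X$ and $\pis{v} = \pi(v) \cup \bigcup_i \pis{v_i}$, and $\exists$ commutes with $\exists$ and (crucially, by gradedness, since all of $\bigcup_i\pis{v_i}$ below an $\I_X$ node that sits above all $\I_Y$ descendants means... wait) — actually the key subtlety here is the ordering: $\exists_{\pi(v)}$ must be applied \emph{before} any $\rand$ operators in the accumulated expression, but after the induction hypothesis we have it applied to an expression of the form $\exists_{\cdots}\rand^\pr_{\cdots}(\cdots)$. Because $\pi(v) \subseteq X$ and $\exists$ commutes with $\exists$, we can merge $\exists_{\pi(v)}$ with the inner $\exists_{(\bigcup_i\pis{v_i})\cap X}$ to get $\exists_{\pis{v}\cap X}$ sitting outside $\rand^\pr_{\pis{v}\cap Y}$, yielding precisely $\bigexists_{\pis{v}\cap X}\bigrand^\pr_{\pis{v}\cap Y}\clausejoin{v}$, as desired.

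The main obstacle I anticipate is not any single calculation but carefully justifying the rearrangement of nested $\exists$ and $\rand$ operators across the product in the inductive step — specifically, verifying the disjointness hypotheses of \cref{corEarlyProj} via \cref{lemmaDisjoint}, and being careful that when $v \in \I_X$ the outer $\exists_{\pi(v)}$ can legitimately be commuted past the accumulated $\exists$ block but must stay outside the $\rand$ block (which is fine since $\exists\exists = \exists\exists$, and we never need to swap an $\exists$ past a $\rand$). Gradedness is what guarantees we never face the forbidden swap of $\exists$ past $\rand$ in the wrong direction: all $\I_Y$ nodes whose projections contribute to $\pis{v}\cap Y$ are strict descendants of the $\I_X$ node $v$, so their $\rand$ operators are already nested inside. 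I would state this ordering point explicitly rather than leaving it implicit, since it is the whole reason gradedness appears in the hypothesis.
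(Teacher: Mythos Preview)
Your approach matches the paper's: structural induction on $v$, base case for leaves via $\pis{v}=\emptyset$ and $\gammas{v}=\set{\gamma(v)}$, and the inductive step using \cref{lemmaDisjoint} to justify the hypotheses of \cref{corEarlyProj} so that the children's projections pull outside the product, after which $\prod_i \clausejoin{v_i}=\clausejoin{v}$.

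One caveat: the case $v\in\I_Y$ is \emph{not} symmetric to $v\in\I_X$ in the way your sketch suggests. With $S=\bigcup_i\pis{v_i}$, after the product step you have $\exists_{S\cap X}\rand^\pr_{S\cap Y}\clausejoin{v}$, and for $v\in\I_Y$ you must prepend $\rand^\pr_{\pi(v)}$. Commuting this $\rand$ inward past $\exists_{S\cap X}$ is precisely the forbidden swap you warn about. The paper does not commute here; instead it invokes the fourth property of \cref{defGradedness}: no $\I_X$ node is a descendant of $v\in\I_Y$, so $\pis{v}\cap X=\emptyset$, hence $S\cap X=\emptyset$ and the $\exists_{S\cap X}$ block is vacuous. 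Then $\rand^\pr_{\pi(v)}$ merges directly with $\rand^\pr_{S\cap Y}$ to give $\rand^\pr_{\pis{v}\cap Y}$, and the outer $\exists_{\pis{v}\cap X}=\exists_\emptyset$ is reinstated trivially. You clearly anticipated that gradedness does real work, but this is where it bites in the $\I_Y$ subcase, and it is a different use of gradedness (property four rather than property two) than in the $\I_X$ subcase.
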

\begin{proof}
    Employ structural induction on $v$.

    In the \textbf{base case}, $v$ is a leaf.
    We have:
    \begin{align*}
        \val{v}
        & = \pb{\gamma(v)} \tag*{by \cref{defValuation} (valuation)}
        \\ & = \bigexists_\emptyset \bigrand^\pr_\emptyset \pb{\gamma(v)} \tag*{by convention}
        \\ & = \bigexists_{\pis{v} \cap X} \bigrand^\pr_{\pis{v} \cap Y} \pb{\gamma(v)} \tag*{as $\pis{v} = \emptyset$ for a leaf $v$}
        \\ & = \bigexists_{\pis{v} \cap X} \bigrand^\pr_{\pis{v} \cap Y} \clausejoin{v} \tag*{as $\gammas{v} = \set{\gamma(v)}$ for a leaf $v$}
    \end{align*}
    So the invariant holds for the base case.

    In the \textbf{step case}, $v$ is an internal node.
    The induction hypothesis is that
    \begin{align*}
        \val{\child} = \bigexists_{\pis{\child} \cap X} \bigrand^\pr_{\pis{\child} \cap Y} \clausejoin{\child}
    \end{align*}
    for all $\child \in \C{v}$.
    Let $S = \bigcup_{\child \in \C{v}} \pis{\child}$.
    We have the following equation chain:
    \begin{align}
        & \prod_{\child \in \C{v}} \val{\child} \label{eqChildJoin}
        \\ & = \prod_{\child \in \C{v}} \bigexists_{\pis{\child} \cap X} \bigrand^\pr_{\pis{\child} \cap Y} \clausejoin{\child} \tag*{by the induction hypothesis}
        \\ & = \bigexists_{S \cap X} \prod_{\child \in \C{v}} \bigrand^\pr_{\pis{\child} \cap Y} \clausejoin{\child} \tag*{by \cref{corEarlyProj} and \cref{lemmaDisjoint}}
        \\ & = \bigexists_{S \cap X} \bigrand^\pr_{S \cap Y} \prod_{\child \in \C{v}} \clausejoin{\child} \tag*{also by \cref{corEarlyProj} and \cref{lemmaDisjoint}}
        \\ & = \bigexists_{S \cap X} \bigrand^\pr_{S \cap Y} \clausejoin{v} \tag*{as a consequent of \cref{defClauseJoin} (clause join)}
    \end{align}
    Now, assume that $\T$ has grades $\I_X$ and $\I_Y$ (recall \cref{defGradedness}).
    By the first property of gradedness, $\set{I_X, I_Y}$ is a partition of the set of internal nodes of $\T$.

    In the \textbf{first subcase}, $v \in \I_X$.
    By gradedness, $\pi(v) \subseteq X$ (second property).
    We have:
    \begin{align*}
        \val{v}
        & = \bigexists_{\pi(v)} \prod_{\child \in \C{v}} \val{\child} \tag*{by definition of valuation}
        \\ & = \bigexists_{\pi(v)} \bigexists_{S \cap X} \bigrand^\pr_{S \cap Y} \clausejoin{v} \tag*{by Equation \eqref{eqChildJoin}}
        \\ & = \bigexists_{\pi(v) \cup S \cap X} \bigrand^\pr_{S \cap Y} \clausejoin{v} \tag*{as existential projection is commutative}
        \\ & = \bigexists_{\pis{v} \cap X} \bigrand^\pr_{S \cap Y} \clausejoin{v} \tag*{since $\pi(v) \cup S = \pis{v}$ by \cref{defProjSet} (projection set)}
        \\ & = \bigexists_{\pis{v} \cap X} \bigrand^\pr_{\pis{v} \cap Y} \clausejoin{v} \tag*{because $\pi(v) \cap Y = \emptyset$}
    \end{align*}
    So the invariant holds for the first subcase.

    In the \textbf{second subcase}, $v \in \I_Y$.
    By gradedness, $\pi(v) \subseteq Y$ (third property) and $\pis{v} \cap X = \emptyset$ (consequent of fourth property).
    We have:
    \begin{align*}
        \val{v}
        & = \bigrand^\pr_{\pi(v)} \prod_{\child \in \C{v}} \val{\child} \tag*{by definition of valuation}
        \\ & = \bigrand^\pr_{\pi(v)} \bigexists_{S \cap X} \bigrand^\pr_{S \cap Y} \clausejoin{v} \tag*{by Equation \eqref{eqChildJoin}}
        \\ & = \bigrand^\pr_{\pi(v)} \bigexists_\emptyset \bigrand^\pr_{S \cap Y} \clausejoin{v} \tag*{since $S \subseteq \pis{v}$, and $\pis{v} \cap X = \emptyset$ as mentioned before}
        \\ & = \bigrand^\pr_{\pi(v)} \bigrand^\pr_{S \cap Y} \clausejoin{v} \tag*{by convention}
        \\ & = \bigrand^\pr_{\pi(v) \cup S \cap Y} \clausejoin{v} \tag*{as random projection is commutative}
        \\ & = \bigrand^\pr_{\pis{v} \cap Y} \clausejoin{v} \tag*{by definition of projection set}
        \\ & = \bigexists_\emptyset \bigrand^\pr_{\pis{v} \cap Y} \clausejoin{v} \tag*{by convention}
        \\ & = \bigexists_{\pis{v} \cap X} \bigrand^\pr_{\pis{v} \cap Y} \clausejoin{v} \tag*{because $\pis{v} \cap X = \emptyset$ as mentioned before}
    \end{align*}
    So the invariant holds for the second subcase.
\end{proof}

We are now ready to prove \cref{thmRootValuation}.

\reThmRootVal*
\begin{proof}
    By \cref{defPjt} (project-join tree), the set $\set{\pi(v) \mid v \in \V{T} \setminus \Lv}$ is a partition of $\vars{\phi}$.
    Since $\pis{r} = \bigcup_{v \in \V{T} \setminus \Lv} \pi(v)$, we know that $\pis{r} = \vars{\phi} = X \cup Y$.
    Then:
    \begin{align*}
        \val{r}
        & = \bigexists_{\pis{r} \cap X} \bigrand^\pr_{\pis{r} \cap Y} \clausejoin{r} \tag*{by \cref{lemmaValuation}}
        \\ & = \bigexists_X \bigrand^\pr_Y \clausejoin{r} \tag*{since $\pis{r} = X \cup Y$ and $X \cap Y = \emptyset$}
        \\ & = \bigexists_X \bigrand^\pr_Y \clausejoinexpand{r} \tag*{by definition of clause join}
        \\ & = \bigexists_X \bigrand^\pr_Y \prod_{c \in \phi} \pb{c} \tag*{because $r$ is the root of a project-join tree}
        \\ & = \bigexists_X \bigrand^\pr_Y \pb{\phi} \tag*{by Boolean semantics of conjunction}
        \\ & = \bigexists_X \pb{\bigrand^\pr Y \phi} \tag*{by stochastic semantics of random quantification (\cref{defStochasticSemantics})}
        \\ & = \pb{\bigexists X \bigrand^\pr Y \phi} \tag*{by stochastic semantics of existential quantification}
    \end{align*}
    Therefore:
    \begin{align*}
        \val{r}(\emptyset) = \pb{\bigexists X \bigrand^\pr Y \phi}(\emptyset)
    \end{align*}
\end{proof}


\subsection{Proof of \cref{thmDpAlgo}}
\label{secProofDp}

\cref{thmDpAlgo} concerns \cref{algoDp} and \cref{valuatorAlgo}.
We actually prove the correctness of their annotated versions, which are respectively \cref{dpAlgoA} and \cref{valuatorAlgoA}.

\newcommand{\actives}{A} 
\newcommand{\elims}{E} 
\newcommand{\childval}{h}

To simplify notations, for a multiset $\actives$ of PB functions $a$, define $\pb{\actives} := \prod_{a \in \actives} a$ and $\vars{\actives} := \bigcup_{a \in \actives} \vars{a}$.

\clearpage
\begin{algorithm}[H]
\caption{Dynamic Programming for ER-SSAT on $\exists X \rand^\pr Y \phi$}
\label{dpAlgoA}
    \KwIn{$\phi$: a CNF formula, where $\set{X, Y}$ is a partition of $\vars{\phi}$}
    \KwIn{$\pr$: a probability mapping for $Y$}
    \KwOut{$m \in \R$: the maximum of $\pb{\rand^\pr Y \phi}$}
    \KwOut{$\ta \in \ps{X}$: a maximizer of $\pb{\rand^\pr Y \phi}$}

    \DontPrintSemicolon
    $\T = \tup{T, r, \gamma, \pi} \gets$ an $\tup{X, Y}$-graded project-join tree for $\phi$\;
    $\stack \gets \tup{}$ \tcp{an initially empty stack}
    $\elims \gets \emptyset$ \tcp{$\elims \subseteq \vars{\phi}$ is a set of variables that have been eliminated via projection} \label{lineInitProjVarsA}
    $\actives \gets \set{\pb{c} \mid c \in \phi}$ \tcp{$\actives$ is a multiset of ``active'' PB functions} \label{lineInitActivesA}
    $\val{r} \gets \valuator(\phi, \T, \pr, r, \stack, \elims, \actives)$ \tcp{$\valuator$ (\cref{valuatorAlgoA}) modifies $\stack$, $\elims$, and $\actives$} \label{lineValRootA}
    $m \gets \val{r}(\emptyset)$ \tcp{$\val{r}$ is a constant PB function}
    $\ta \gets \emptyset$ \tcp{an initially empty truth assignment}
    \Assert{$\ta$ is a maximizer of $\exists_{\elims \cap X} \rand^\pr_{\elims \cap Y} \pb{\phi}$} \tcp{$\elims = \vars{\phi}$ after \cref{lineValRootA}} \label{lineMaximizerConstA}
    \While{$\stack$ is not empty}{
        $\gx \gets \pop{\stack}$ \tcp{$\gx = \dsgn_x f$ is a derivative sign, where $x \in \elims \cap X$ is an unassigned variable and $f$ is a PB function}
        $\ta' \gets \ta$\;
        $\ta \gets \ta' \cup \gx\of{\restrict{\ta'}{\dom{\gx}}}$ \tcp{$\gx\of{\restrict{\ta'}{\dom{\gx}}} = \set{\va{x}{b}}$, where $b \in \B$}
        $\elims \gets \elims \setminus \set{x}$ \tcp{$x$ has just been assigned $b$ in $\ta$}
        \Assert{$\ta$ is a maximizer of $\exists_{\elims \cap X} \rand^\pr_{\elims \cap Y} \pb{\phi}$}\; \label{lineMaximizerPopA}
    }
    \Return $\tup{m, \ta}$ \tcp{$\elims = Y$ now (all $X$ variables have been assigned in $\ta$)}
\end{algorithm}

\clearpage
\begin{algorithm}[H]
\caption{$\valuator(\phi, \T, \pr, v, \stack, \elims, \actives)$}
\label{valuatorAlgoA}
    \KwIn{$\phi$: a CNF formula, where $\set{X, Y}$ is a partition of $\vars{\phi}$}
    \KwIn{$\T = \tup{T, r, \gamma, \pi}$: an $\tup{X, Y}$-graded project-join tree for $\phi$}
    \KwIn{$\pr$: a probability mapping for $Y$}
    \KwIn{$v \in \V{T}$: a node}
    \KwIn{$\stack$: a stack (of derivative signs) that will be modified}
    \KwIn{$\elims$: a set (of variables) that will be modified}
    \KwIn{$\actives$: a multiset (of PB functions) that will be modified}
    \KwOut{$\val{v}$: the $\pr$-valuation of $v$}

    \DontPrintSemicolon
    \Assert{$\pb{\actives} = \exists_{\elims \cap X} \rand^\pr_{\elims \cap Y} \pb{\phi}$} \tcp{pre-condition} \label{linePrecondA}
    \If{$v \in \Lv$}{
        $f \gets \pb{\gamma(v)}$ \tcp{the Boolean function represented by the clause $\gamma(v) \in \phi$}
    }
    \Else(\tcp*[h]{$v$ is an internal node of $\tup{T, r}$}){
        $f \gets \set{\tup{\emptyset, 1}}$ \tcp{the PB function for multiplicative identity}
        $\minsert{\actives}{f}$\; \label{lineInsertIdA}
        \For(\tcp*[h]{non-empty set of nodes}){$\child \in \C{v}$}{
            $\childval \gets \valuator(\phi, \T, \pr, \child, \stack, \elims, \actives)$ \tcp{recursive call on a child $\child$ of $v$} \label{lineValChildA}
            $f' \gets f$\;
            $f \gets f' \cdot \childval$\;
            $\mremove{\actives}{\childval}$ \tcp{$\childval$ was added by initialization (\cref{lineInitActivesA} of \cref{dpAlgoA}) or a recursive call (\cref{lineInsertJoinA} or \ref{lineInsertProjA})}
            $\mremove{\actives}{f'}$ \tcp{$f'$ was added before this loop (\cref{lineInsertIdA}) or in the previous iteration (\cref{lineInsertJoinA})}
            $\minsert{\actives}{f}$\; \label{lineInsertJoinA}
        }
        \Assert{$\pb{\actives} = \exists_{\elims \cap X} \rand^\pr_{\elims \cap Y} \pb{\phi}$} \tcp{join-condition} \label{lineJoinCondA}
        \For(\tcp*[h]{possibly empty set of variables}){$x \in \pi(v)$}{
            $f' \gets f$\;
            \If(\tcp*[h]{case $\pi(v) \subseteq X$ (by gradedness)}){$x \in X$}{
                $\gx \gets \dsgn_x f$\;
                $\push{\stack}{\gx}$ \tcp{$\stack$ will be used to construct a maximizer (\cref{dpAlgoA})}
                \Assert{$\ta \cup \gx\of{\restrict{\ta}{\dom{\gx}}}$ is a maximizer of $\exists_{\elims \cap X} \rand^\pr_{\elims \cap Y} \pb{\phi}$ if $\ta$ is a maximizer of $\exists_{\set{x} \cup \elims \cap X} \rand^\pr_{\elims \cap Y} \pb{\phi}$}\; \label{lineMaximizerPushA}
                $f \gets \exists_x f'$\;
            }
            \Else(\tcp*[h]{case $\pi(v) \subseteq Y$}){
                $f \gets \rand^{\pr(x)}_x f'$
            }
            $\elims \gets \elims \cup \set{x}$\;
            $\mremove{\actives}{f'}$ \tcp{$f'$ was added before this loop (\cref{lineInsertJoinA}) or in the previous iteration (\cref{lineInsertProjA})}
            $\minsert{\actives}{f}$\; \label{lineInsertProjA}
            \Assert{$\pb{\actives} = \exists_{\elims \cap X} \rand^\pr_{\elims \cap Y} \pb{\phi}$} \tcp{project-condition} \label{lineProjCondA}
        }
    }
    \Assert{$\pb{\actives} = \exists_{\elims \cap X} \rand^\pr_{\elims \cap Y} \pb{\phi}$} \tcp{post-condition} \label{linePostCondA}
    \Return $f$
\end{algorithm}

\begin{restatable}[Correctness of \cref{dpAlgoA}]{theorem}{reAnnThmDper}
\label{thmDperA}
    Let $\phi$ be a CNF formula, $\set{X, Y}$ be a partition of $\vars{\phi}$, and $\pr$ be a probability mapping for $Y$.
    Then \cref{dpAlgoA} solves ER-SSAT on $\exists X \rand^\pr Y \phi$.
\end{restatable}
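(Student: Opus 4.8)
The plan is to establish the two output specifications of \cref{dpAlgoA} in turn --- that $m$ equals the maximum of $\pb{\rand^\pr Y\phi}$ and that the reconstructed $\ta$ is a maximizer of $\pb{\rand^\pr Y\phi}$ (\cref{defErssat}) --- using the annotations on \cref{dpAlgoA,valuatorAlgoA} as scaffolding. First I would prove the ``condition'' invariant $\pb{\actives} = \bigexists_{\elims\cap X}\bigrand^\pr_{\elims\cap Y}\pb\phi$ holding at \cref{linePrecondA,lineJoinCondA,lineProjCondA,linePostCondA} by structural induction on the recursion of \cref{valuatorAlgoA}. The leaf case is immediate from \cref{lineInitActivesA}. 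In the step case, a recursive call on a child replaces $f'$ and $\childval$ in $\actives$ by their product, leaving $\pb{\actives}$ unchanged (so the join-condition follows from the precondition and the children's post-conditions); and each iteration of the projection loop replaces $f'$ by $\bigexists_x f'$ or $\bigrand^\pr_x f'$ and adds $x$ to $\elims$, which by \cref{propEarlyProj,corEarlyProj} together with \cref{lemmaDisjoint} (no other active function contains $x$, since by \cref{defPjt} every clause with $x\in\pi(v)$ is below $v$ and hence already conjoined into $f'$) amounts to applying $\bigexists_x$ or $\bigrand^\pr_x$ to all of $\pb{\actives}$, in the correct mode and order by gradedness (an $\I_X$ node never sits below an $\I_Y$ node, so an $\exists$ is never applied inside a performed $\rand$). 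Applied to the root call, every variable is eliminated exactly once (the $\pi(v)$ partition $\vars\phi$), so at \cref{linePostCondA} we get $\elims = \vars\phi$, $\actives = \set{\val r}$, and $\val r = \bigexists_X\bigrand^\pr_Y\pb\phi$ is constant; hence $m = \val r(\emptyset)$ is the maximum of $\pb{\rand^\pr Y\phi}$ (this also follows from \cref{thmRootValuation}). I would also note the easy fact that every PB function ever in $\actives$, and every $f$ in \cref{valuatorAlgoA}, is non-negative, being a product of clause functions subjected only to $\exists$ and $\rand^\pr$, both of which preserve non-negativity.

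For the maximizer, the crucial structural fact I would prove is a factorization: if $\gx = \dsgn_x f$ is the derivative sign pushed while processing $x\in\pi(v)$ at an $\I_X$ node $v$, and $A$ is the set of $X$-variables eliminated by \cref{valuatorAlgoA} strictly before $x$, then $\bigexists_{A}\bigrand^\pr_Y\pb\phi = f\cdot g'$ for some non-negative $g'$ with $x\notin\vars{g'}$. This is shown by writing $\pb\phi = \clausejoin{v}\cdot\psi$ ($\psi$ the product of clause functions for leaves not below $v$), using \cref{lemmaValuation} to identify $f$ as $\bigexists_{X_1}\bigrand^\pr_{Y_1}\clausejoin{v}$ where $X_1,Y_1$ are the variables eliminated under $v$ before $x$, and then applying early projection repeatedly. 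Gradedness (property~4 of \cref{defGradedness}) is exactly what forces every $Y$-variable occurring in $\clausejoin{v}$ to lie already in $Y_1$, and every $X$-variable occurring in $\clausejoin{v}$ to be either in $X_1$ or eliminated no earlier than $x$ --- which is what lets the $\exists$ and $\rand^\pr$ over the ``other'' variables be factored through the product, isolating $f$. I expect this factorization to be the main obstacle, precisely because the set $\elims$ tracked by the execution loop (which only sheds $X$-variables, hence always contains all of $Y$) does not match the $\elims$ inside \cref{valuatorAlgoA} at push time (whose $Y$-part is typically a proper subset of $Y$, since $Y$-variables may be eliminated \emph{between} two $X$-variables when the project-join tree branches); so although each assertion at \cref{lineMaximizerPushA} is true in its own right (via \cref{propIterMax} plus non-negativity), it is this factorization, not those assertions, that bridges to the loop.

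Granting the factorization, the execution loop is \cref{algoMono}'s maximizer reconstruction (\cref{propMonoAlgo}) applied to $\pb{\rand^\pr Y\phi}$ with the $X$-variables enumerated in reverse of their valuator-elimination order, which I would make precise by proving, by induction on loop iterations, the invariant of \cref{lineMaximizerConstA,lineMaximizerPopA}: $\ta$ is a maximizer of $\bigexists_{\elims\cap X}\bigrand^\pr_{\elims\cap Y}\pb\phi$. Initially $\elims=\vars\phi$ and the function is the constant $\bigexists_X\bigrand^\pr_Y\pb\phi$, so $\ta=\emptyset$ works. For the step, the popped $\gx=\dsgn_x f$ has $x\in\elims\cap X$ (un-popped stack entries correspond to $X$-variables not yet removed from $\elims$), and $\dom{\gx}\subseteq\dom{\ta}$ holds because $\vars{f}$ contains only $x$ and $X$-variables eliminated no earlier than $x$, hence popped no later than $\gx$. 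Using $\elims\cap Y=Y$, the function before the step is $\bigexists_x G$ with $G=\bigexists_{(\elims\setminus\set{x})\cap X}\bigrand^\pr_Y\pb\phi=\bigexists_A\bigrand^\pr_Y\pb\phi=f\cdot g'=(\bigexists_x f)\cdot g'$ (the last step by \cref{propEarlyProj}, since $x\notin\vars{g'}$). So $\ta$ maximizes $(\bigexists_x f)\cdot g'$; by \cref{propIterMax}, $\ta\cup\pars{\dsgn_x(f\cdot g')}(\ta)$ maximizes $f\cdot g'=G$; and since $g'\ge0$ with $x\notin\vars{g'}$, the derivative sign $\dsgn_x(f\cdot g')$ agrees with $\dsgn_x f=\gx$ wherever $g'$ is positive, while wherever $g'$ vanishes the maximum of $G$ is $0$ and any extension of $\ta$ attains it --- so $\ta\cup\gx\of{\restrict{\ta}{\dom{\gx}}}$ maximizes $G$ in either case, which is the invariant with $\elims$ reduced by $x$. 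When the stack empties, $\elims=Y$, so $\ta$ is a maximizer of $\pb{\rand^\pr Y\phi}$, completing the proof.
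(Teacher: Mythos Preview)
Your plan follows the paper's overall architecture closely: establish the invariant $\pb{\actives}=\bigexists_{\elims\cap X}\bigrand^\pr_{\elims\cap Y}\pb{\phi}$ at the four checkpoints of \cref{valuatorAlgoA} by induction on the recursion (the paper does this via \cref{lemmaPreCondRootA,lemmaPreCondFirstChildA,lemmaPostCondLeafA,lemmaPreCondNextChildA,lemmaJoinCondA,lemmaProjCondFirstVarA,lemmaProjCondNextVarA,lemmaProjCondA,lemmaPostCondInternalA,lemmaPrePostCondsA,lemmaJoinProjCondsA}), read off $m$ from the root valuation, and then run \cref{propIterMax} backwards through the stack to recover a maximizer. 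For the first output your sketch and the paper's proof are essentially identical.

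Where you diverge is in the maximizer reconstruction, and here you are \emph{more} careful than the paper. The paper proves the assertion at \cref{lineMaximizerPushA} with the push-time value of $\elims$ (\cref{lemmaMaximizerPushA}) and then, in \cref{lemmaMaximizerPopFirstDsgnA,lemmaMaximizerPopNextDsgnA}, invokes that assertion as if it held with the pop-time value of $\elims$ in \cref{dpAlgoA}. As you observe, these need not agree on their $Y$-part: during the pop loop $\elims\cap Y$ is always all of $Y$, whereas at push time $\elims\cap Y$ can be a proper subset (sibling $\I_Y$ subtrees processed later are not yet in $\elims$). So the functions $\bigexists_{\elims\cap X}\bigrand^\pr_{\elims\cap Y}\pb{\phi}$ at push and pop time are over different variable sets, and one cannot simply transport the push-time implication to the pop loop. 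Your factorization $\bigexists_{A}\bigrand^\pr_{Y}\pb{\phi}=f\cdot g'$ with $x\notin\vars{g'}$ and $g'\ge 0$ is precisely the missing bridge: it lets you apply \cref{lemmaMaximizerDsgnA} (equivalently \cref{propIterMax} plus \cref{lemmaDsgnEarlyProjA}) directly at pop time, with $A=\elims\cap X$ and $\elims\cap Y=Y$. The argument you outline for the factorization --- splitting $\pb{\phi}=\clausejoin{v}\cdot\psi$, using gradedness to conclude that every $Y$-variable of $\clausejoin{v}$ lies in $Y_1$ and hence $\vars{f}\subseteq X$, and then peeling off $\bigrand^\pr_{Y\setminus Y_1}$ and $\bigexists_{A\setminus X_1}$ by early projection --- is correct; the disjointness conditions needed are exactly those furnished by \cref{defPjt} and property~4 of \cref{defGradedness}. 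In short, your proposal is correct and, on this point, tightens an argument that the paper leaves loose.
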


We will first prove some lemmas regarding \cref{valuatorAlgoA}.

\subsection{Proofs for \cref{valuatorAlgoA}}

The following lemmas involve a CNF formula $\phi$, a partition $\set{X, Y}$ of $\vars{\phi}$, a probability mapping $\pr$ for $Y$, and an $\tup{X, Y}$-graded project-join tree $\T = \tup{T, r, \gamma, \pi}$ for $\phi$.

\begin{lemma}
\label{lemmaPreCondRootA}
    The pre-condition of the call $\valuator(\phi, \T, \pr, r, \stack, \elims, \actives)$ holds (\cref{linePrecondA} of \cref{valuatorAlgoA}).
\end{lemma}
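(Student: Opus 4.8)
The plan is to verify the pre-condition directly by tracing the state of the variables $\elims$ and $\actives$ at the moment the call $\valuator(\phi, \T, \pr, r, \stack, \elims, \actives)$ is issued on \cref{lineValRootA} of \cref{dpAlgoA}. The only lines of \cref{dpAlgoA} executed before that call which touch $\elims$ or $\actives$ are \cref{lineInitProjVarsA} (setting $\elims \gets \emptyset$) and \cref{lineInitActivesA} (setting $\actives \gets \set{\pb{c} \mid c \in \phi}$). So at the time of the call we have $\elims = \emptyset$ and $\actives = \set{\pb{c} \mid c \in \phi}$, and it suffices to check that $\pb{\actives} = \exists_{\emptyset \cap X} \rand^\pr_{\emptyset \cap Y} \pb{\phi}$.

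First I would simplify the right-hand side: $\emptyset \cap X = \emptyset$ and $\emptyset \cap Y = \emptyset$, so by the conventions $\exists_\emptyset g := g$ and $\rand^\pr_\emptyset g := g$ we get $\exists_{\emptyset \cap X} \rand^\pr_{\emptyset \cap Y} \pb{\phi} = \pb{\phi}$. Then I would simplify the left-hand side: by the notational convention $\pb{\actives} := \prod_{a \in \actives} a$ for a multiset $\actives$ of PB functions, we have $\pb{\actives} = \prod_{c \in \phi} \pb{c}$, which equals $\pb{\phi}$ by the CNF factorization $\pb{\phi} = \prod_{c \in \phi} \pb{c}$ (equivalently, by the Boolean semantics of conjunction). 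Combining the two computations gives $\pb{\actives} = \pb{\phi} = \exists_{\emptyset \cap X} \rand^\pr_{\emptyset \cap Y} \pb{\phi}$, which is exactly the assertion on \cref{linePrecondA}.

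I do not anticipate any real obstacle here: this lemma is a base-case bookkeeping step that merely unfolds the conventions for empty projection sets and the definition of $\pb{\actives}$ together with the standard factorization of a CNF formula into its clause functions. The only thing to be careful about is citing the right conventions ($\exists_\emptyset$, $\rand^\pr_\emptyset$, and the multiset product notation $\pb{\cdot}$) and the factorization $\pb{\phi} = \prod_{c \in \phi}\pb{c}$ introduced at the start of the ``Solving ER-SSAT'' section. The heavier lifting — propagating the invariant $\pb{\actives} = \exists_{\elims \cap X}\rand^\pr_{\elims \cap Y}\pb{\phi}$ through the join and projection loops of \cref{valuatorAlgoA} via \cref{propEarlyProj}, and the maximizer assertions via \cref{propIterMax} — belongs to the subsequent lemmas, not to this one.
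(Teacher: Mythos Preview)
Your proposal is correct and follows essentially the same approach as the paper: both trace the initializations $\elims = \emptyset$ and $\actives = \set{\pb{c} \mid c \in \phi}$ from \cref{dpAlgoA}, then use the empty-projection conventions and the CNF factorization $\pb{\phi} = \prod_{c \in \phi}\pb{c}$ to verify the pre-condition. The only cosmetic difference is that the paper presents it as a single left-to-right equation chain while you simplify each side separately and meet in the middle.
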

\begin{proof}
    \cref{valuatorAlgoA} is called for the first time on \cref{lineValRootA} of \cref{dpAlgoA}.
    We have:
    \begin{align*}
        \pb{\actives}
        & = \prod_{a \in \actives} a \tag*{by definition}
        \\ & = \prod_{c \in \phi} \pb{c} \tag*{as initialized on \cref{lineInitActivesA} of \cref{dpAlgoA}}
        \\ & = \pb{\phi} \tag*{because $\phi$ is a CNF formula}
        \\ & = \exists_\emptyset \rand^\pr_\emptyset \pb{\phi} \tag*{by convention}
        \\ & = \exists_{\elims \cap X} \rand^\pr_{\elims \cap Y} \pb{\phi} \tag*{as initialized on \cref{lineInitProjVarsA} of \cref{dpAlgoA}}
    \end{align*}
\end{proof}

To simplify proofs, for each internal node $v$ of $\T$, assume that $\C{v}$ and $\pi(v)$ have arbitrary (but fixed) orders.
Then we can refer to members of these two sets as the first, second, \ldots, and last.

\begin{lemma}
\label{lemmaPreCondFirstChildA}
    Let $v$ be an internal node of $\T$ and $\child$ be the first node in the set $\C{v}$.
    Note that $\valuator(\phi, \T, \pr, v, \stack, \elims, \actives)$ calls $\valuator(\phi, \T, \pr, \child, \stack, \elims, \actives)$.
    Assume that the pre-condition of the caller ($v$) holds.
    Then the pre-condition of the callee ($\child$) holds.
\end{lemma}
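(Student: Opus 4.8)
The plan is to track exactly which lines of \cref{valuatorAlgoA} execute between the point where the caller asserts its pre-condition (\cref{linePrecondA}) and the point where it invokes the recursive call on $\child$ (\cref{lineValChildA}, in the first iteration of the \textbf{for} loop over $\C{v}$). Since $v$ is an internal node, control passes through the \textbf{else} branch: $f$ is set to the multiplicative-identity PB function $\set{\tup{\emptyset, 1}}$, this $f$ is inserted into $\actives$ on \cref{lineInsertIdA}, and then the loop over the (non-empty) set $\C{v}$ begins, calling $\valuator$ on its first member $\child$. The key structural observation is that $\elims$ is \emph{not} touched anywhere along this path, and $\actives$ is modified only by the single insertion of the identity function; $\stack$ plays no role in the pre-condition.

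First I would record that joining with $f = \set{\tup{\emptyset, 1}}$ leaves a product unchanged. By \cref{defJoin}, for any PB function $g$ and any $\ta \in \ps{\vars{g}}$ we have $(g \cdot f)(\ta) = g(\restrict{\ta}{\vars{g}}) \cdot f(\restrict{\ta}{\vars{f}}) = g(\ta) \cdot f(\emptyset) = g(\ta) \cdot 1 = g(\ta)$, using $\vars{f} = \emptyset$ and $f(\emptyset) = 1$; moreover $\vars{g \cdot f} = \vars{g}$. Hence, writing $\actives$ for the multiset at \cref{linePrecondA} and $\actives' = \actives \cup \set{f}$ for its value after \cref{lineInsertIdA}, we obtain $\pb{\actives'} = \pb{\actives} \cdot f = \pb{\actives}$.

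Then I would close the argument: the caller's pre-condition gives $\pb{\actives} = \exists_{\elims \cap X} \rand^\pr_{\elims \cap Y} \pb{\phi}$; combining this with $\pb{\actives'} = \pb{\actives}$ and the fact that $\elims$ is unchanged along the path, the multiset $\actives'$ and the set $\elims$ that are passed to the recursive call $\valuator(\phi, \T, \pr, \child, \stack, \elims, \actives)$ satisfy $\pb{\actives'} = \exists_{\elims \cap X} \rand^\pr_{\elims \cap Y} \pb{\phi}$, which is precisely the pre-condition (\cref{linePrecondA}) of the callee on $\child$.

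This lemma is essentially bookkeeping, so I do not expect a genuine obstacle; the only thing demanding care is precision about the operational semantics — confirming that no intervening line modifies $\elims$ and that the sole change to $\actives$ is the insertion of a function that is neutral for multiplicative join, so that the product invariant is literally, not merely ``morally,'' preserved. I would therefore make the neutrality of $\set{\tup{\emptyset, 1}}$ explicit from \cref{defJoin} rather than leaving it implicit.
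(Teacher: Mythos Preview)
Your proposal is correct and follows exactly the same approach as the paper's proof: both observe that between \cref{linePrecondA} and \cref{lineValChildA} the set $\elims$ is unchanged and the only modification to $\actives$ is the insertion of the multiplicative identity, which leaves $\pb{\actives}$ invariant. The paper states this in two sentences, whereas you additionally unfold \cref{defJoin} to make the neutrality of $\set{\tup{\emptyset,1}}$ explicit, but the argument is the same.
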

\begin{proof}
    Before \cref{lineValChildA} of \cref{valuatorAlgoA}, $\actives$ is modified trivially: inserting the multiplicative identity $f$ does not change $\pb{\actives}$.
    Also, $\elims$ is not modified.
\end{proof}

\begin{lemma}
\label{lemmaPostCondLeafA}
    Let $v$ be a leaf of $\T$.
    Assume that the pre-condition of the call $\valuator(\phi, \T, \pr, v, \stack, \elims, \actives)$ holds.
    Then the post-condition holds (\cref{linePostCondA}).
\end{lemma}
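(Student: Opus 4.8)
The plan is to observe that for a leaf node $v$, the body of \cref{valuatorAlgoA} executes only the \texttt{if}-branch, which consists of the single assignment $f \gets \pb{\gamma(v)}$. I would first note that this branch touches only the local variable $f$: it performs no \texttt{push} onto $\stack$, no update of $\elims$, and no \texttt{insert}/\texttt{remove} on the multiset $\actives$. After the \texttt{if}-branch, control falls through directly to the post-condition assertion on \cref{linePostCondA} and then to \texttt{return} $f$.

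Next I would record that the post-condition assertion on \cref{linePostCondA} and the pre-condition assertion on \cref{linePrecondA} are \emph{literally the same equation}, namely $\pb{\actives} = \exists_{\elims \cap X} \rand^\pr_{\elims \cap Y} \pb{\phi}$. Since neither $\actives$ nor $\elims$ is modified anywhere along the execution path for a leaf, both sides of this equation have the same values at \cref{linePostCondA} as they had at \cref{linePrecondA}. By hypothesis the equation holds at \cref{linePrecondA} (the pre-condition of the call is assumed), hence it holds at \cref{linePostCondA}, which is exactly the post-condition. This completes the argument.

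There is essentially no obstacle here: the lemma is a bookkeeping step establishing that leaves preserve the loop invariant $\pb{\actives} = \exists_{\elims \cap X} \rand^\pr_{\elims \cap Y} \pb{\phi}$ trivially because a leaf contributes its clause function $\pb{\gamma(v)}$ as the returned value while leaving the global state $(\stack, \elims, \actives)$ untouched. The only point worth stating carefully is that the returned function $f = \pb{\gamma(v)}$ is \emph{not} inserted into $\actives$ by the callee itself — that insertion was already done during initialization on \cref{lineInitActivesA} of \cref{dpAlgoA} (or will be accounted for when the caller removes $\childval$ and reinserts the join on \cref{lineInsertJoinA}) — so the multiset $\actives$ genuinely is unchanged by this call, and the post-condition follows immediately from the pre-condition.
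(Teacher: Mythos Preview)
Your proposal is correct and takes essentially the same approach as the paper: the paper's proof is the single sentence ``Neither $\actives$ nor $\elims$ is modified in the non-recursive branch of \cref{valuatorAlgoA}, \ie, when $v \in \Lv$,'' which is exactly your core observation. Your additional remarks about $f$ not being inserted into $\actives$ by the leaf call are accurate context but go beyond what the paper states.
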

\begin{proof}
    Neither $\actives$ nor $\elims$ is modified in the non-recursive branch of \cref{valuatorAlgoA}, \ie, when $v \in \Lv$.
\end{proof}

\begin{lemma}
\label{lemmaPreCondNextChildA}
    Let $v$ be an internal node of $\T$ and $\child_1 < \child_2$ be consecutive nodes in $\C{v}$.
    Note that $\valuator(\phi, \T, \pr, v, \stack, \elims, \actives)$ calls $\valuator(\phi, \T, \pr, \child_1, \stack, \elims, \actives)$ then calls $\valuator(\phi, \T, \pr, \child_2, \stack, \elims, \actives)$.
    Assume that the post-condition of the first callee ($\child_1$) holds.
    Then the pre-condition of the second callee ($\child_2$) holds.
\end{lemma}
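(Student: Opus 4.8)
The plan is to trace exactly how the shared mutable state changes between the two moments that the pre- and post-conditions relate. Consider one execution of $\valuator(\phi, \T, \pr, v, \stack, \elims, \actives)$ and look at its \texttt{for} loop over $\child \in \C{v}$. By hypothesis the post-condition of $\child_1$ holds immediately after the recursive call on \cref{lineValChildA} in the iteration for $\child_1$ returns, and we must show the pre-condition of $\child_2$ — the assertion on \cref{linePrecondA} — holds immediately before the recursive call on \cref{lineValChildA} in the iteration for $\child_2$ begins. Between these two moments the only statements executed are the tail of the iteration for $\child_1$: the assignments $f' \gets f$ and $f \gets f' \cdot \childval$, the two removals $\mremove{\actives}{\childval}$ and $\mremove{\actives}{f'}$, and the insertion $\minsert{\actives}{f}$ on \cref{lineInsertJoinA}. (The stack $\stack$ is also left untouched, but it does not appear in the pre-condition.)

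First I would note that none of these statements modifies $\elims$, so $\elims$ has the same value at both moments. It therefore suffices to show $\pb{\actives}$ is unchanged: the post-condition of $\child_1$ reads $\pb{\actives} = \exists_{\elims \cap X} \rand^\pr_{\elims \cap Y} \pb{\phi}$, which — with $\elims$ fixed — is verbatim the pre-condition demanded for $\child_2$.

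Next I would verify that the two removals followed by the one insertion leave $\pb{\actives} = \prod_{a \in \actives} a$ invariant. Write $\actives$ for the multiset just after the call on $\child_1$ returns. Both $\childval$ and $f'$ are elements of it: $\childval$ was placed there either at \cref{lineInitActivesA} of \cref{dpAlgoA} (when $\child_1$ is a leaf) or by \cref{lineInsertJoinA} or \cref{lineInsertProjA} inside the call on $\child_1$ (otherwise), and $f'$ was placed there by \cref{lineInsertIdA} (when $\child_1$ is the first child of $v$) or by \cref{lineInsertJoinA} in the previous loop iteration (otherwise). These come from distinct insertions, so $\actives$ holds two separate copies and deleting both is legitimate. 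Letting $\actives^\circ$ be $\actives$ with these two copies removed, we have $\pb{\actives} = \pb{\actives^\circ} \cdot \childval \cdot f'$ before the modifications and $\pb{\actives^\circ} \cdot f = \pb{\actives^\circ} \cdot f' \cdot \childval$ afterwards, and these agree since multiplicative join is commutative and associative. Hence $\pb{\actives}$ is unchanged, and the pre-condition of $\child_2$ holds.

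The one place needing care is the claim that $\childval$ and $f'$ occupy $\actives$ as two removable copies even if they coincide as PB functions; this is exactly the insertion accounting annotated in the comments of \cref{valuatorAlgoA} and follows by a routine induction on recursion depth, so I expect it to be bookkeeping rather than a real obstacle.
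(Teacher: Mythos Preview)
Your proposal is correct and follows essentially the same approach as the paper's own proof, which simply observes that between the two calls $\elims$ is unchanged and replacing $\childval$ and $f'$ in $\actives$ by $f = f' \cdot \childval$ leaves $\pb{\actives}$ invariant. You have merely expanded the bookkeeping (tracking where each copy in the multiset originated) that the paper leaves implicit in its comments.
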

\begin{proof}
    After the first callee returns and before the second callee starts, $\actives$ is trivially modified: replacing $\childval$ and $f'$ with $f = f' \cdot \childval$ does not change $\pb{\actives}$.
    Also, $\elims$ is not modified.
\end{proof}

\begin{lemma}
\label{lemmaJoinCondA}
    Let $v$ be an internal node of $\T$.
    For each $\child \in \C{v}$, note that $\valuator(\phi, \T, \pr, v, \stack, \elims, \actives)$ calls $\valuator(\phi, \T, \pr, \child, \stack, \elims, \actives)$.
    Assume that the post-condition of the last callee holds.
    Then the join-condition of the caller holds (\cref{lineJoinCondA}).
\end{lemma}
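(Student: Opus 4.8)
The plan is to exploit that the join-condition assertion (\cref{lineJoinCondA} of \cref{valuatorAlgoA}) is reached immediately after the \texttt{for} loop over $\C{v}$ terminates, hence immediately after the last iteration of that loop completes. So I would trace exactly what happens in that final iteration \emph{after} the recursive call returns, where the iterated child $\child$ is the last node in $\C{v}$. By hypothesis the post-condition of that last callee holds, i.e. $\pb{\actives} = \exists_{\elims \cap X} \rand^\pr_{\elims \cap Y} \pb{\phi}$ at the moment the call $\valuator(\phi, \T, \pr, \child, \stack, \elims, \actives)$ returns, and $\childval$ holds the returned value $\val{\child}$.

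Next I would observe that the only state-changing statements remaining in the loop body after that recursive call are $f' \gets f$, $f \gets f' \cdot \childval$, $\mremove{\actives}{\childval}$, $\mremove{\actives}{f'}$, and $\minsert{\actives}{f}$ (there are no further asserts inside the loop body). None of these touches $\elims$, so $\elims$ is unchanged between the post-condition of the last callee and \cref{lineJoinCondA}. For the multiset $\actives$, these operations remove the two PB functions $\childval$ and $f'$ and insert their multiplicative join $f = f' \cdot \childval$. Since $\pb{\actives} = \prod_{a \in \actives} a$ and multiplicative join is commutative and associative (the remark following \cref{defJoin}), this replacement does not change the product $\pb{\actives}$. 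Hence $\pb{\actives} = \exists_{\elims \cap X} \rand^\pr_{\elims \cap Y} \pb{\phi}$ still holds when control reaches \cref{lineJoinCondA}, which is precisely the join-condition.

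The only point requiring a little care — and what I would flag as the main (mild) obstacle — is that the removals $\mremove{\actives}{\childval}$ and $\mremove{\actives}{f'}$ be well-defined, i.e. that $\childval$ and $f'$ genuinely occur in $\actives$ at that point, so that "removing two elements and inserting their product" is an accurate description of the update. This is exactly the content of the inline comments in \cref{valuatorAlgoA}: $\childval$ was inserted either by the initialization on \cref{lineInitActivesA} of \cref{dpAlgoA} (when $\child$ is a leaf) or by a recursive call at \cref{lineInsertJoinA} or \cref{lineInsertProjA} (when $\child$ is internal), while $f'$ is the value of $f$ inserted on \cref{lineInsertIdA} before the loop (first iteration) or on \cref{lineInsertJoinA} in the previous iteration. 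I would state this membership bookkeeping explicitly as a small side invariant, proved by an easy induction over the loop iterations, entirely parallel to the $\actives$-replacement reasoning already carried out in the proof of \cref{lemmaPreCondNextChildA}; indeed, \cref{lemmaJoinCondA} is just the "end-of-loop" analogue of \cref{lemmaPreCondNextChildA}, so once the membership invariant is in hand the product-invariance argument above closes the proof.
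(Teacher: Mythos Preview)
Your proposal is correct and takes essentially the same approach as the paper: the paper's proof is simply the one-line observation that after the last callee returns, replacing $\childval$ and $f'$ with $f = f' \cdot \childval$ in $\actives$ does not change $\pb{\actives}$, and $\elims$ is not modified. Your additional membership bookkeeping is more explicit than what the paper writes (the paper relies on the inline comments in \cref{valuatorAlgoA}), but it is harmless extra rigor rather than a different idea.
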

\begin{proof}
    After the last callee returns, $\actives$ is trivially modified: replacing $\childval$ and $f'$ with $f = f' \cdot \childval$ does not change $\pb{\actives}$.
    Also, $\elims$ is not modified.
\end{proof}

\begin{lemma}
\label{lemmaProjCondFirstVarA}
    Let $v$ be an internal node of $\T$ and $x$ be the first variable in $\pi(v)$.
    Assume that the join-condition of \cref{valuatorAlgoA} holds.
    Then the project-condition holds (\cref{lineProjCondA}).
\end{lemma}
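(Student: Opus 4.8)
The plan is to track the single iteration of the $\pi(v)$-loop in \cref{valuatorAlgoA} on the first variable $x \in \pi(v)$, describe how $\actives$ and $\elims$ change between the join-condition at \cref{lineJoinCondA} and the project-condition at \cref{lineProjCondA}, and verify the invariant $\pb{\actives} = \exists_{\elims \cap X} \rand^\pr_{\elims \cap Y} \pb{\phi}$ by pulling the newly eliminated $x$ out of the product with early projection (\cref{propEarlyProj}). In that iteration the algorithm does $f' \gets f$, so $f'$ is the function whose contribution the join-condition describes; it then sets $f \gets \exists_x f'$ if $x \in X$ and $f \gets \rand^{\pr(x)}_x f'$ if $x \in Y$, updates $\elims \gets \elims \cup \set{x}$, removes one copy of $f'$ from $\actives$, and inserts $f$. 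Writing $g$ for the product of the members of $\actives$ other than that copy of $f'$, the multiset value is $g \cdot f'$ just before the update and $g \cdot f$ just after. Two bookkeeping facts come first: (i) $x \notin \elims$, because by criterion~1 of \cref{defPjt} the sets $\pi(w)$ are pairwise disjoint and $v$'s own projection loop has not yet eliminated anything, so no completed node $w$ has $x \in \pi(w)$; and (ii) $x \notin \vars{g}$, because each active function other than $f'$ is built up (via products and projections) from the clause functions $\pb{c}$ with $c \notin \gammas{v}$, and by criterion~2 of \cref{defPjt} no clause outside $\gammas{v}$ mentions any variable of $\pi(v)$. Facts (i), (ii), and $x \in \vars{\phi}$ together force $x \in \vars{f'}$, so \cref{propEarlyProj} will apply to the split $g \cdot f'$.

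For the case $x \in X$ the computation is then immediate:
\begin{align*}
    g \cdot f
    & = g \cdot \bigexists_x f'
    \\ & = \bigexists_x \pars{g \cdot f'} \tag*{by \cref{propEarlyProj}, as $x \notin \vars{g}$}
    \\ & = \bigexists_x \bigexists_{\elims \cap X} \bigrand^\pr_{\elims \cap Y} \pb{\phi} \tag*{by the join-condition}
    \\ & = \bigexists_{\pars{\elims \cup \set{x}} \cap X} \bigrand^\pr_{\pars{\elims \cup \set{x}} \cap Y} \pb{\phi} \tag*{since $x \in X \setminus \elims$ and $\exists$ is commutative}
\end{align*}
which is precisely the project-condition for the updated $\elims$.

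For the case $x \in Y$ we note $v \in \I_Y$ (by gradedness an $\I_X$ node cannot carry a $Y$-variable in its $\pi$-set), and the analogous opening steps give $g \cdot f = \rand^{\pr(x)}_x \pars{g \cdot f'} = \rand^{\pr(x)}_x \exists_{\elims \cap X} \rand^\pr_{\elims \cap Y} \pb{\phi}$. All that remains is to move $\rand^{\pr(x)}_x$ past $\exists_{\elims \cap X}$, after which commutativity of random projection regroups it with $\rand^\pr_{\elims \cap Y}$ to yield $\exists_{\elims \cap X} \rand^\pr_{\pars{\elims \cap Y} \cup \set{x}} \pb{\phi}$, the project-condition. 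This commutation is the \emph{main obstacle}, since $\rand$ and $\exists$ do not commute in general, and it is where gradedness is used: I would show that the already-eliminated $X$-variables and $x$ never occur in a common clause. By the fourth property in \cref{defGradedness} every variable of $\elims \cap X$ lies in $\pi(u)$ for some $\I_X$ node $u$, and the depth-first order of \cref{valuatorAlgoA} forces any already-completed such $u$ to be incomparable with $v$ in $\tup{T, r}$: it is not a descendant of $v$ (all internal descendants of $v$ lie in $\I_Y$), nor an ancestor of $v$ (whose projection loop would not yet have run). Incomparable nodes have $\gammas{u} \cap \gammas{v} = \emptyset$, so criterion~2 of \cref{defPjt}, applied to $u$ and to $v$, confines the clauses mentioning a $\pi(u)$-variable to $\gammas{u}$ and those mentioning $x$ to $\gammas{v}$; hence no clause mentions both. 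Factoring $\pb{\phi} = \pb{\phi_x} \cdot \pb{\phi_0}$ with $\phi_x$ the clauses mentioning $x$ and $\phi_0$ the rest, we get $x \notin \vars{\pb{\phi_0}}$ and $\pars{\elims \cap X} \cap \vars{\pb{\phi_x}} = \emptyset$, and three applications of \cref{propEarlyProj} then give $\rand^{\pr(x)}_x \exists_{\elims \cap X} \pb{\phi} = \exists_{\elims \cap X} \rand^{\pr(x)}_x \pb{\phi}$, finishing the case.
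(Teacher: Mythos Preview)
Your $x \in X$ case is correct and matches the paper's argument. The $x \in Y$ case has the right structural insight---using gradedness plus the depth-first traversal to argue that the $\I_X$ nodes whose variables populate $\elims \cap X$ are incomparable with $v$---but the final technical step does not close. You establish $\rand^{\pr(x)}_x \exists_{\elims \cap X} \pb{\phi} = \exists_{\elims \cap X} \rand^{\pr(x)}_x \pb{\phi}$, yet what is actually needed is the same commutation applied to $\rand^\pr_{\elims \cap Y}\pb{\phi}$, not to $\pb{\phi}$. The two are not the same: once $\rand^\pr_{\elims \cap Y}$ has been applied, your factorization $\pb{\phi}=\pb{\phi_x}\cdot\pb{\phi_0}$ no longer separates cleanly, because an eliminated $Y$-variable can straddle both $\phi_x$ and $\phi_0$ (take $y\in\pi(w)$ with $w$ a strict $\I_Y$-descendant of $v$: all $y$-clauses lie in $\gammas{w}\subseteq\gammas{v}$, and some may contain $x$ while others do not). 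Three applications of \cref{propEarlyProj} to your split therefore cannot move $\rand_x$ past $\exists_{\elims\cap X}$ through the intervening $\rand_{\elims\cap Y}$.

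The paper repairs exactly this by using the coarser split $\phi=\gammas{v}\cup\psi$ with $\psi=\phi\setminus\gammas{v}$, and by simultaneously splitting $\elims\cap Y\setminus\{x\}$ into $S_1=(\elims\cap Y\setminus\{x\})\cap\pis{v}$ and $S_2$ the rest. Your own incomparability argument already gives the key fact $\elims\cap X\cap\vars{\gammas{v}}=\emptyset$ (stronger than the $\vars{\phi_x}$ version you stated); the remaining ingredients are $S_1\cap\vars{\psi}=\emptyset$ and $S_2\cap\vars{\gammas{v}}=\emptyset$, both consequences of criterion~2 of \cref{defPjt} and the same incomparability reasoning. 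With this split the $\elims\cap Y$ projections distribute across the two factors, after which early projection lets $\exists_{\elims\cap X}$ act on the $\psi$-factor alone and $\rand_x$ on the $\gammas{v}$-factor alone, and the swap goes through. Switching your factorization from $\phi_x/\phi_0$ to $\gammas{v}/\psi$ and adding the $S_1/S_2$ split would complete your argument.
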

\begin{proof}
    First, consider the case $x \in X$.
    On the project-condition line:
    \begin{align*}
        \pb{\actives}
        & = \pb{\set{f} \cup \actives \setminus \set{f}}
        \\ & = f \cdot \pb{\actives \setminus \set{f}}
        \\ & = \pars{\exists_x f'} \cdot \pb{\actives \setminus \set{f}}
        \\ & = \exists_x \pars{f' \cdot \pb{\actives \setminus \set{f}}} \tag*{since $x \notin \vars{\actives \setminus \set{f}}$, as $\T$ is a project-join tree}
        \\ & = \exists_x \pb{\set{f'} \cup \actives \setminus \set{f}}
        \\ & = \exists_x \exists_{\elims \cap X \setminus \set{x}} \rand^\pr_{\elims \cap Y} \pb{\phi} \tag*{because the join-condition was assumed to hold}
        \\ & = \exists_{\elims \cap X} \rand^\pr_{\elims \cap Y} \pb{\phi}
    \end{align*}

    Now, consider the case $x \in Y$.
    On the project-condition line:
    \begin{align*}
        \pb{\actives}
        & = \pb{\set{f} \cup \actives \setminus \set{f}}
        \\ & = f \cdot \pb{\actives \setminus \set{f}}
        \\ & = \pars{\rand^{\pr(x)}_x f'} \cdot \pb{\actives \setminus \set{f}}
        \\ & = \rand^{\pr(x)}_x \pars{f' \cdot \pb{\actives \setminus \set{f}}} \tag*{since $x \notin \vars{\actives \setminus \set{f}}$, as $\T$ is a project-join tree}
        \\ & = \rand^{\pr(x)}_x \pb{\set{f'} \cup \actives \setminus \set{f}}
        \\ & = \rand^{\pr(x)}_x \exists_{\elims \cap X} \rand^\pr_{\elims \cap Y \setminus \set{x}} \pb{\phi} \tag*{because the join-condition was assumed to hold}
        \\ & = \rand^{\pr(x)}_x \exists_{\elims \cap X} \rand^\pr_{\elims \cap Y \setminus \set{x}} \pb{\gammas{v} \cup \psi} \tag*{where $\psi = \phi \setminus \gammas{v}$}
        \\ & = \rand^{\pr(x)}_x \exists_{\elims \cap X} \rand^\pr_{\elims \cap Y \setminus \set{x}} \pars{\pb{\gammas{v}} \cdot \pb{\psi}}
        \\ & = \rand^{\pr(x)}_x \exists_{\elims \cap X} \rand^\pr_{S_1} \rand^\pr_{S_2} \pars{\pb{\gammas{v}} \cdot \pb{\psi}} \tag*{where $S_1 = \pars{\elims \cap Y \setminus \set{x}} \cap \pis{v}$ and $S_2 = \pars{\elims \cap Y \setminus \set{x}} \setminus S_1$}
        \\ & = \rand^{\pr(x)}_x \exists_{\elims \cap X} \rand^\pr_{S_1} \pars{\pb{\gammas{v}} \cdot \rand^\pr_{S_2} \pb{\psi}} \tag*{since $S_2 \cap \vars{\gammas{v}} = \emptyset$, as $\T$ is a project-join tree}
        \\ & = \rand^{\pr(x)}_x \exists_{\elims \cap X} \pars{\pars{\rand^\pr_{S_1} \pb{\gammas{v}}} \cdot \rand^\pr_{S_2} \pb{\psi}} \tag*{since $S_1 \cap \vars{\psi} = \emptyset$, as $\T$ is a project-join tree}
        \\ & = \rand^{\pr(x)}_x \pars{\pars{\rand^\pr_{S_1} \pb{\gammas{v}}} \cdot \exists_{\elims \cap X} \rand^\pr_{S_2} \pb{\psi}} \tag*{since $\elims \cap X \cap \vars{\gammas{v}} = \emptyset$, as $\T$ is $\tup{X, Y}$-graded}
        \\ & = \pars{\rand^{\pr(x)}_x \rand^\pr_{S_1} \pb{\gammas{v}}} \cdot \exists_{\elims \cap X} \rand^\pr_{S_2} \pb{\psi} \tag*{since $x \notin \vars{\psi}$, as $\T$ is a project-join tree}
        \\ & = \pars{\rand^\pr_{\set{x} \cup S_1} \pb{\gammas{v}}} \cdot \exists_{\elims \cap X} \rand^\pr_{S_2} \pb{\psi}
        \\ & = \exists_{\elims \cap X} \pars{\pars{\rand^\pr_{\set{x} \cup S_1} \pb{\gammas{v}}} \cdot \rand^\pr_{S_2} \pb{\psi}} \tag*{undoing early projection}
        \\ & = \exists_{\elims \cap X} \rand^\pr_{\set{x} \cup S_1} \pars{\pb{\gammas{v}} \cdot \rand^\pr_{S_2} \pb{\psi}} \tag*{undoing early projection}
        \\ & = \exists_{\elims \cap X} \rand^\pr_{\set{x} \cup S_1} \rand^\pr_{S_2} \pars{\pb{\gammas{v}} \cdot \pb{\psi}} \tag*{undoing early projection}
        \\ & = \exists_{\elims \cap X} \rand^\pr_{\elims \cap Y} \pars{\pb{\gammas{v}} \cdot \pb{\psi}} \tag*{because $S_1 \cup S_2 = \elims \cap Y \setminus \set{x}$}
        \\ & = \exists_{\elims \cap X} \rand^\pr_{\elims \cap Y} \pb{\gammas{v} \cup \psi}
        \\ & = \exists_{\elims \cap X} \rand^\pr_{\elims \cap Y} \pb{\phi}
    \end{align*}
\end{proof}

\begin{lemma}
\label{lemmaProjCondNextVarA}
    Let $v$ be an internal node of $\T$ and $x_1 < x_2$ be consecutive variables in $\pi(v)$.
    Assume that the project-condition of \cref{valuatorAlgoA} holds in the iteration for $x_1$.
    Then the project-condition also holds in the iteration for $x_2$.
\end{lemma}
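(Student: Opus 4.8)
The plan is to mirror the computation in the proof of \cref{lemmaProjCondFirstVarA} essentially verbatim, with the role of the join-condition there now played by the project-condition in the iteration for $x_1$. The enabling observation is that, since $x_1$ and $x_2$ are consecutive in the fixed order on $\pi(v)$, the iteration for $x_2$ begins immediately after the iteration for $x_1$ ends; and within the $x_2$-iteration the first action ($f' \gets f$) leaves $\actives$ unchanged, while $\elims$ is only enlarged (by $x_2$) later. Hence the contents of $\actives$ and $\elims$ at the beginning of the $x_2$-iteration coincide with those certified by the project-condition asserted at the end of the $x_1$-iteration, namely $\pb{\actives} = \exists_{\elims\cap X}\rand^\pr_{\elims\cap Y}\pb{\phi}$ with this $\elims$ already containing $x_1$.

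First I would fix the case split. Because $x_1, x_2 \in \pi(v)$ and $\T$ is $\tup{X,Y}$-graded, either $v \in \I_X$ and $\set{x_1, x_2} \subseteq X$, or $v \in \I_Y$ and $\set{x_1, x_2} \subseteq Y$; in either case $x_2$ lies in the same block of the partition $\set{X,Y}$ as $x_1$, so \cref{valuatorAlgoA} takes the same branch in both iterations (the $\dsgn$/$\exists$ branch when $v\in\I_X$, the $\rand$ branch when $v\in\I_Y$). In the branch $x_2 \in X$: on the project-condition line for $x_2$ we have, by the update $f = \exists_{x_2} f'$, by $x_2 \notin \vars{\actives\setminus\set{f}}$ (since $\T$ is a project-join tree), and by early projection (\cref{propEarlyProj}), the identity $\pb{\actives} = \exists_{x_2}\pb{\set{f'}\cup(\actives\setminus\set{f})}$; substituting the project-condition for $x_1$ for the inner factor and using commutativity of existential projection together with $x_2 \notin Y$ turns this into $\exists_{\elims\cap X}\rand^\pr_{\elims\cap Y}\pb{\phi}$ for the new value of $\elims$, which is exactly the project-condition for $x_2$. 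In the branch $x_2 \in Y$ I would replay the same long chain of early-projection and commutativity rewrites used for the case $x \in Y$ in \cref{lemmaProjCondFirstVarA}: introduce $\psi = \phi \setminus \gammas{v}$, split $\elims\cap Y\setminus\set{x_2}$ as $S_1 \cup S_2$ with $S_1 = (\elims\cap Y\setminus\set{x_2})\cap\pis{v}$, push $\rand^\pr_{S_1}$ onto $\pb{\gammas{v}}$ and $\rand^\pr_{S_2}$ onto $\pb{\psi}$, re-associate, and reassemble, ending at $\exists_{\elims\cap X}\rand^\pr_{\elims\cap Y}\pb{\phi}$.

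The main obstacle is bookkeeping rather than a new idea: I must make sure that every disjointness side-condition invoked in that $x\in Y$ computation is still valid now that $\elims$ also contains the earlier $\pi(v)$-variable $x_1$. Since $x_1 \in Y$ and $x_1 \in \pis{v}$, it is absorbed into $S_1$, so the facts $x_2 \notin \vars{\psi}$, $S_2 \cap \vars{\gammas{v}} = \emptyset$, $S_1 \cap \vars{\psi} = \emptyset$, and $\elims\cap X\cap\vars{\gammas{v}} = \emptyset$ all continue to hold by the project-join-tree axioms (\cref{defPjt}) and the gradedness properties (\cref{defGradedness}) exactly as in \cref{lemmaProjCondFirstVarA}; once this is checked, the rest of the chain transcribes unchanged, and the project-condition for $x_2$ follows.
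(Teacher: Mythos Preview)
Your proposal is correct and matches the paper's approach exactly: the paper's proof is the single line ``Similar to \cref{lemmaProjCondFirstVarA},'' and you have spelled out precisely how that similarity plays out, with the project-condition for $x_1$ standing in for the join-condition and the gradedness of $\T$ ensuring $x_1$ and $x_2$ lie in the same block of $\set{X,Y}$. Your bookkeeping check that the extra element $x_1\in\elims$ is harmlessly absorbed into $S_1$ (when $v\in\I_Y$) is the only new verification needed, and it is handled correctly.
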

\begin{proof}
    Similar to \cref{lemmaProjCondFirstVarA}.
\end{proof}

\begin{lemma}
\label{lemmaProjCondA}
    Let $v$ be an internal node of $\T$ and $x$ be a variable in $\pi(v)$.
    Assume that the join-condition of \cref{valuatorAlgoA} holds.
    Then the project-condition holds.
\end{lemma}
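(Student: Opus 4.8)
The plan is to prove \cref{lemmaProjCondA} by a straightforward finite induction on the position of $x$ in the fixed order imposed on $\pi(v)$, using the two immediately preceding lemmas as the base case and the inductive step. Recall that for the purposes of these proofs each internal node $v$ has been equipped with an arbitrary but fixed ordering of $\pi(v)$, so one may speak of the first, second, \ldots, last variable of $\pi(v)$, and the \texttt{for} loop over $\pi(v)$ in \cref{valuatorAlgoA} processes them in exactly that order. If $\pi(v) = \emptyset$ there is nothing to prove, so I would assume $\pi(v)$ is non-empty and enumerate it as $x_1 < x_2 < \cdots < x_k$.

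For the base case, $x = x_1$. The hypothesis of \cref{lemmaProjCondA} is precisely the hypothesis of \cref{lemmaProjCondFirstVarA}, namely that the join-condition on \cref{lineJoinCondA} of \cref{valuatorAlgoA} holds, so that lemma immediately delivers the project-condition on \cref{lineProjCondA} in the iteration for $x_1$. For the inductive step, I would suppose $1 \le i < k$ and that the project-condition holds at the end of the iteration for $x_i$; since $x_i < x_{i+1}$ are consecutive variables in $\pi(v)$, \cref{lemmaProjCondNextVarA} applies and yields the project-condition at the end of the iteration for $x_{i+1}$. By induction the project-condition holds for every variable in $\pi(v)$, in particular for the given $x$, completing the argument.

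The whole proof is essentially bookkeeping, so the only point requiring care is checking that the chain of preceding lemmas is set up to be composed in sequence: that the join-condition guaranteed by \cref{lemmaJoinCondA} is exactly what \cref{lemmaProjCondFirstVarA} consumes, and that \cref{lemmaProjCondNextVarA} transports the project-condition from one loop iteration to the next with no side hypotheses on $\elims$ or $\actives$ beyond the invariants already maintained. I expect no genuine obstacle here; the real content lies in \cref{lemmaProjCondFirstVarA}, whose proof carries out the delicate early-projection manipulations built on the split $\elims \cap Y \setminus \set{x} = S_1 \cup S_2$ and on $\T$ being $\tup{X, Y}$-graded. Thus \cref{lemmaProjCondA} is best viewed as the clean, order-independent restatement of that harder lemma, obtained purely by iteration.
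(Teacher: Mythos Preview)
Your proposal is correct and matches the paper's own proof, which simply reads ``Apply \cref{lemmaProjCondFirstVarA} and \cref{lemmaProjCondNextVarA}.'' You have spelled out the induction on the position of $x$ in $\pi(v)$ that this one-line application implicitly relies on, so your argument is a faithful (and slightly more detailed) rendering of exactly the same idea.
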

\begin{proof}
    Apply \cref{lemmaProjCondFirstVarA} and \cref{lemmaProjCondNextVarA}.
\end{proof}

\begin{lemma}
\label{lemmaPostCondInternalA}
    Let $v$ be an internal node of $\T$.
    For each $\child \in \C{v}$, note that $\valuator(\phi, \T, \pr, v, \stack, \elims, \actives)$ calls $\valuator(\phi, \T, \pr, \child, \stack, \elims, \actives)$.
    Assume that the post-condition of the last callee holds.
    Then the post-condition of the caller holds.
\end{lemma}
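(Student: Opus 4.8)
The plan is to chain together the invariant lemmas already established for \cref{valuatorAlgoA} along the internal-node branch, treating this as pure bookkeeping about which lines modify $\actives$ and $\elims$. First I would invoke \cref{lemmaJoinCondA}: since the post-condition of the last recursive callee holds by assumption, the join-condition at \cref{lineJoinCondA} holds, i.e.\ $\pb{\actives} = \exists_{\elims \cap X} \rand^\pr_{\elims \cap Y} \pb{\phi}$ holds just before the loop over $\pi(v)$ begins.

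Next I would split on whether $\pi(v)$ is empty. If $\pi(v) = \emptyset$, the loop over $x \in \pi(v)$ never executes, so neither $\actives$ nor $\elims$ is modified between \cref{lineJoinCondA} and \cref{linePostCondA}; the post-condition is then literally the join-condition, which we have. If $\pi(v) \ne \emptyset$, I would apply \cref{lemmaProjCondA} to the last variable $x$ in $\pi(v)$ (which in turn relies on \cref{lemmaProjCondFirstVarA} and \cref{lemmaProjCondNextVarA}, seeded by the join-condition just proven); this gives that the project-condition at \cref{lineProjCondA} holds in the final iteration of the loop. Since the loop body does not run again afterward and no other statement touches $\actives$ or $\elims$ before \cref{linePostCondA}, the project-condition for the last variable is exactly the post-condition of the caller.

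I expect no real obstacle here: the only point requiring care is the routine observation, obtained by inspecting \cref{valuatorAlgoA}, that nothing between the end of the $\pi(v)$ loop and the post-condition assertion modifies $\actives$ or $\elims$, so that whichever of the two already-proven equalities applies (the join-condition in the empty case, the last-iteration project-condition in the nonempty case) transfers verbatim to \cref{linePostCondA}.
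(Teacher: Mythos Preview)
Your proposal is correct and follows essentially the same route as the paper's proof: invoke \cref{lemmaJoinCondA} to obtain the join-condition, then invoke \cref{lemmaProjCondA} so that the project-condition in the final iteration of the $\pi(v)$ loop becomes the post-condition. The only difference is that you explicitly handle the case $\pi(v) = \emptyset$, which the paper's terse proof leaves implicit; this is a harmless (and arguably welcome) bit of extra care rather than a different approach.
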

\begin{proof}
    By \cref{lemmaJoinCondA}, the join-condition holds.
    Then by \cref{lemmaProjCondA}, the project-condition holds in the iteration for the last variable in $\pi(v)$.
    So the post-condition holds.
\end{proof}

\begin{lemma}
\label{lemmaPrePostCondsA}
    The pre-condition and post-condition of \cref{valuatorAlgoA} hold for each input $v \in \V{T}$.
\end{lemma}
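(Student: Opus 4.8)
The plan is to prove Lemma~\ref{lemmaPrePostCondsA} by structural induction on the project-join tree, using the battery of auxiliary lemmas already established (Lemmas~\ref{lemmaPreCondRootA} through~\ref{lemmaPostCondInternalA}) as the inductive machinery. The subtlety is that we are proving a statement about \emph{every} node $v \in \V{T}$, but the pre-condition of a call is guaranteed only by its \emph{caller}, while the post-condition of a call depends on the post-conditions of \emph{its children}. So the induction has to be set up carefully to thread both directions.

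First I would observe that the calls to \valuator{} form exactly the tree $\T$: the top-level call is on the root $r$ (from \cref{lineValRootA} of \cref{dpAlgoA}), and a call on an internal node $v$ makes exactly one recursive call per child $\child \in \C{v}$, in the fixed order chosen for $\C{v}$. Hence it suffices to show, for every $v \in \V{T}$, the implication: \emph{if the pre-condition of the call on $v$ holds, then the post-condition of that call holds}; and separately to establish that the pre-condition of the call on $v$ always does in fact hold when \cref{valuatorAlgoA} reaches that node during execution. I would prove the implication by structural induction on $v$, and then bootstrap the ``pre-condition always holds'' claim by induction on depth from the root.

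For the implication, the base case is $v \in \Lv$: assuming the pre-condition, \cref{lemmaPostCondLeafA} immediately gives the post-condition, since neither $\actives$ nor $\elims$ is touched in the leaf branch. For the step case, $v$ is internal with children $\child_1 < \child_2 < \ldots < \child_k$ (a non-empty list). Assuming the pre-condition of the call on $v$: by \cref{lemmaPreCondFirstChildA} the pre-condition of the call on $\child_1$ holds, so by the induction hypothesis its post-condition holds; then \cref{lemmaPreCondNextChildA} propagates the pre-condition to $\child_2$, the induction hypothesis gives its post-condition, and so on down the list of children, yielding the post-condition of the call on $\child_k$. Finally \cref{lemmaPostCondInternalA} (which internally invokes \cref{lemmaJoinCondA} and \cref{lemmaProjCondA}) turns the post-condition of the last child into the post-condition of the call on $v$. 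This closes the induction. To then conclude that the pre-condition genuinely holds at every node during a real run, I would argue top-down: \cref{lemmaPreCondRootA} gives it at $r$; and given that the call on an internal node $v$ has a valid pre-condition, the chaining lemmas (\ref{lemmaPreCondFirstChildA}, then alternately the implication just proved and \ref{lemmaPreCondNextChildA}) supply a valid pre-condition to each child in turn. Combining the two directions gives both conditions for every $v$.

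The main obstacle is purely bookkeeping: making sure the order of quantification in the induction is airtight, i.e., that we never invoke the induction hypothesis for a child before its pre-condition has been secured, and that the alternating ``post-condition of $\child_i$ $\Rightarrow$ pre-condition of $\child_{i+1}$'' step (\cref{lemmaPreCondNextChildA}) is correctly interleaved with the structural induction hypothesis applied to each $\child_i$. There is no new mathematical content here beyond what the supporting lemmas already encapsulate; the proof is essentially an exercise in assembling them in the right sequence, so I would write it as a short structural-induction argument citing the lemmas in order.
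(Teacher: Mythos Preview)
Your proposal is correct and uses exactly the same supporting lemmas as the paper, but you organize the induction differently. You split the argument into two passes: a bottom-up structural induction establishing the implication ``pre-condition $\Rightarrow$ post-condition'' for every node, and then a top-down pass showing that the pre-condition is actually met at each node during a real run. The paper instead performs a single induction on the \emph{temporal order} of the recursive calls: its base case is the leftmost root-to-leaf path $v_1, v_2, \ldots, v_k$ (the first calls ever made), and its step case takes as induction hypothesis that the pre-condition holds for every call already started and the post-condition holds for every call already returned, then verifies both conditions for the current call via the same chaining lemmas (\cref{lemmaPreCondFirstChildA}, \cref{lemmaPreCondNextChildA}, \cref{lemmaPostCondLeafA}, \cref{lemmaPostCondInternalA}). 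Your two-pass decomposition is arguably cleaner and more standard, while the paper's single temporal induction mirrors the actual execution trace more directly; neither requires any idea the other lacks.
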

\begin{proof}
    Employ induction on the recursive calls to the algorithm.

    In the \textbf{base case}, consider the leftmost nodes of $\T$: $v_1, v_2, \ldots, v_k$, where $v_{i + 1}$ is the first node in $\C{v_i}$, $v_1$ is the root $r$, and $v_k$ is a leaf.
    We show that the pre-condition of the call $\valuator(\phi, \T, \pr, v_i, \stack, \elims, \actives)$ holds for each $i = 1, 2, \ldots, k$ and that the post-condition of the call $\valuator(\phi, \T, \pr, v_k, \stack, \elims, \actives)$ holds.
    The pre-condition holds for each $v_i$ due to \cref{lemmaPreCondRootA} and \cref{lemmaPreCondFirstChildA}.
    The post-condition holds for $v_k$ by \cref{lemmaPostCondLeafA}.

    In the \textbf{step case}, the induction hypothesis is that the pre-condition holds for each call started before the current call starts and that the post-condition holds for each call returned before the current call starts.
    We show that the pre-condition and post-condition hold for the current call $\valuator(\phi, \T, \pr, v, \stack, \elims, \actives)$.
    The pre-condition for $v$ holds due to \cref{lemmaPreCondFirstChildA} and \cref{lemmaPreCondNextChildA}.
    The post-condition for $v$ holds by \cref{lemmaPostCondLeafA} and \cref{lemmaPostCondInternalA}.
\end{proof}

\begin{lemma}
\label{lemmaJoinProjCondsA}
    The join-condition and project-condition of \cref{valuatorAlgoA} hold for each internal node $v$ of $\T$ and variable $x$ in $\pi(v)$.
\end{lemma}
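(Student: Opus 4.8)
The plan is to assemble this lemma directly from the pieces already in place, so the argument is short. First I would fix an arbitrary internal node $v$ of $\T$. By \cref{lemmaPrePostCondsA}, the post-condition of \cref{valuatorAlgoA} holds for every input node, in particular for every child $\child \in \C{v}$, and hence for the last child processed in the loop at \cref{lineValChildA}. With that established, \cref{lemmaJoinCondA} immediately yields that the join-condition at \cref{lineJoinCondA} holds for the call on $v$.

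Next, using the fact that the join-condition holds, I would invoke \cref{lemmaProjCondA} to conclude that the project-condition at \cref{lineProjCondA} holds for every variable $x \in \pi(v)$. Recall that \cref{lemmaProjCondA} is itself obtained by chaining \cref{lemmaProjCondFirstVarA} (the base variable in $\pi(v)$) with \cref{lemmaProjCondNextVarA} (each successive variable), so no new computation is required at this stage. Since $v$ was an arbitrary internal node and $x$ an arbitrary variable in $\pi(v)$, this establishes both conditions in full generality.

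The only points to keep an eye on are the degenerate shapes: if $\pi(v) = \emptyset$ the project-condition is vacuous, and if $\C{v}$ is a singleton then the "last child" is the only child. Both situations are already accommodated by the cited lemmas, so nothing extra is needed. Consequently I do not anticipate a real obstacle here — the substantive work lives in \cref{lemmaProjCondFirstVarA} (the long early-projection and gradedness manipulation from \cref{propEarlyProj,corEarlyProj}) and in the inductive bookkeeping of \cref{lemmaPrePostCondsA}, both of which are proved earlier; this lemma merely repackages them into the form that will be consumed in the proof of \cref{thmDperA}.
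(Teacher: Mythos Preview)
Your proposal is correct and follows essentially the same approach as the paper: invoke \cref{lemmaPrePostCondsA} to get the post-condition for the last child, then \cref{lemmaJoinCondA} for the join-condition, then \cref{lemmaProjCondA} for the project-condition. Your version is simply more explicit about the quantification over $v$ and $x$ and about the degenerate cases, but the logical skeleton is identical.
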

\begin{proof}
    By \cref{lemmaPrePostCondsA}, the post-condition of the last call on \cref{lineValChildA} holds.
    Then by \cref{lemmaJoinCondA}, the join-condition holds.
    And by \cref{lemmaProjCondA}, the project-condition holds.
\end{proof}

\newcommand{\other}{g} 

\begin{lemma}
\label{lemmaDsgnEarlyProjA}
    Let $f : \ps{X} \to \R$ and $\other : \ps{Y} \to \R$ be PB functions with non-negative ranges.
    Assume that $x$ is a variable in $X \setminus Y$ and that $\ta$ is a truth assignment for $Y \cup X \setminus \set{x}$.
    Then ${\pars{\dsgn_x f}\of{\restrict{\ta}{X \setminus \set{x}}}} = \pars{\dsgn_x (f \cdot \other)}(\ta)$.
\end{lemma}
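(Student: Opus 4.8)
The plan is to unfold both sides with \cref{defDsgn} and reduce the claim to an elementary inequality between two real numbers. Write $\sigma := \restrict{\ta}{X \setminus \set{x}}$ and, for $b \in \B$, set $f_b := f\of{\extend{\sigma}{x}{b}}$. By \cref{defDsgn}, $\pars{\dsgn_x f}(\sigma) = \set{\va{x}{1}}$ when $f_1 \ge f_0$ and $\pars{\dsgn_x f}(\sigma) = \set{\va{x}{0}}$ otherwise; likewise $\pars{\dsgn_x(f \cdot g)}(\ta)$ is decided by comparing $\pars{f \cdot g}\of{\extend{\ta}{x}{1}}$ with $\pars{f \cdot g}\of{\extend{\ta}{x}{0}}$. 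So it suffices to show that these two comparisons have the same outcome, tie-break included.

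The first step is to rewrite the join values. Since $x \in X$ and $x \notin Y$, for each $b \in \B$ we have $\restrict{\extend{\ta}{x}{b}}{X} = \extend{\sigma}{x}{b}$ and $\restrict{\extend{\ta}{x}{b}}{Y} = \restrict{\ta}{Y}$, the latter being independent of $b$ because $x$ is not in the domain of this restriction. Hence by \cref{defJoin}, $\pars{f \cdot g}\of{\extend{\ta}{x}{b}} = f_b \cdot c$ where $c := g\of{\restrict{\ta}{Y}}$. The hypothesis that $g$ has non-negative range enters exactly here: $c \ge 0$, which is what makes multiplication by $c$ order-preserving rather than order-reversing.

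The core of the argument is then the observation that for $c \ge 0$ the relation $f_1 \ge f_0$ implies $c f_1 \ge c f_0$, and when moreover $c > 0$ the relation $f_1 < f_0$ implies $c f_1 < c f_0$; so whenever $c > 0$ the two derivative signs agree entry for entry and we are done. I expect the main obstacle to be the degenerate case $c = 0$: then $\pars{f \cdot g}\of{\extend{\ta}{x}{0}} = \pars{f \cdot g}\of{\extend{\ta}{x}{1}} = 0$, so the $\ge$ tie-break in \cref{defDsgn} forces $\pars{\dsgn_x(f \cdot g)}(\ta) = \set{\va{x}{1}}$ irrespective of $f$, whereas $\pars{\dsgn_x f}(\sigma)$ could be $\set{\va{x}{0}}$. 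Closing this gap requires either strengthening the hypothesis (\eg, taking $g$ to have strictly positive range), or --- since this lemma is only used to justify that pushed derivative signs reconstruct a maximizer --- the companion remark that when $g\of{\restrict{\ta}{Y}} = 0$ the product $f \cdot g$ vanishes on the whole fiber $\set{\extend{\ta}{x}{0}, \extend{\ta}{x}{1}}$, so both values of $x$ are equally maximizing there and the mismatch is immaterial for the maximizer-preservation assertions in \cref{dpAlgoA} that use it.
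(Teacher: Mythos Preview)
Your approach matches the paper's: the paper also splits on the value of $\pars{\dsgn_x f}(\sigma)$, rewrites the join using $x \notin Y$, and multiplies through by the non-negative constant $g\of{\restrict{\ta}{Y}}$. You have in fact been more careful than the paper on one point: the paper dismisses the case $\pars{\dsgn_x f}(\sigma) = \set{\va{x}{0}}$ as ``similar'', but as you observe, that case needs the strict inequality $c f_1 < c f_0$ to survive, which fails when $c = 0$. So the lemma as literally stated is not quite true under the hypothesis of merely non-negative $g$; your diagnosis of the gap and your proposed remedy --- that when $c = 0$ both extensions of $\ta$ give the same value of $f \cdot g$, so the mismatch is harmless for the maximizer-reconstruction use in \cref{lemmaMaximizerDsgnA} and hence for \cref{dpAlgoA} --- are correct.
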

\begin{proof}
    First, assume the case $\pars{\dsgn_x f}\of{\restrict{\ta}{X \setminus \set{x}}} = \set{\va{x}{1}}$.
    We have the following inequalities:
    \begin{align*}
        f\of{\extend{\restrict{\ta}{X \setminus \set{x}}}{x}{1}} & \ge f\of{\extend{\restrict{\ta}{X \setminus \set{x}}}{x}{0}} \tag*{by \cref{defDsgn} (derivative sign)}
        \\ f\of{\restrict{\extend{\ta}{x}{1}}{X}} & \ge f\of{\restrict{\extend{\ta}{x}{0}}{X}} \tag*{as $x \in X$}
        \\ f\of{\restrict{\extend{\ta}{x}{1}}{X}} \cdot \other\of{\restrict{\ta}{Y}} & \ge f\of{\restrict{\extend{\ta}{x}{0}}{X}} \cdot \other\of{\restrict{\ta}{Y}} \tag*{because the range of $\other$ is non-negative}
        \\ f\of{\restrict{\extend{\ta}{x}{1}}{X}} \cdot \other\of{\restrict{\extend{\ta}{x}{1}}{Y}} & \ge f\of{\restrict{\extend{\ta}{x}{0}}{X}} \cdot \other\of{\restrict{\extend{\ta}{x}{0}}{Y}} \tag*{since $x \notin Y$}
        \\ (f \cdot \other)\of{\extend{\ta}{x}{1}} & \ge (f \cdot \other)\of{\extend{\ta}{x}{0}} \tag*{by \cref{defJoin} (join)}
    \end{align*}
    So $\pars{\dsgn_x (f \cdot \other)}(\ta) = \set{\va{x}{1}}$ by definition.
    Thus ${\pars{\dsgn_x f}\of{\restrict{\ta}{X \setminus \set{x}}}} = \pars{\dsgn_x (f \cdot \other)}(\ta)$.
    The case $\pars{\dsgn_x f}\of{\restrict{\ta}{X \setminus \set{x}}} = \set{\va{x}{0}}$ is similar.
\end{proof}

\begin{lemma}
\label{lemmaMaximizerDsgnA}
    Let $f : \ps{X} \to \R$ and $\other : \ps{Y} \to \R$ be PB functions with non-negative ranges.
    Assume that $x$ is a variable in $X \setminus Y$ and that $\ta$ is a maximizer of $\exists_x (f \cdot \other)$.
    Then $\ta' = \ta \cup \pars{\dsgn_x f}\of{\restrict{\ta}{X \setminus \set{x}}}$ is a maximizer of $f \cdot \other$.
\end{lemma}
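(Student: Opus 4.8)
The plan is to obtain the result by chaining two facts already established: \cref{propIterMax} (iterative maximization), applied to the product $f \cdot g$, and \cref{lemmaDsgnEarlyProjA}, which lets us replace the derivative sign of $f \cdot g$ by that of $f$ alone. Nothing new needs to be computed; the work is entirely in lining up the hypotheses.

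First I would note that $f \cdot g : \ps{X \cup Y} \to \R$ is a PB function whose domain contains $x$, and that the truth assignment $\ta$ has domain $Y \cup X \setminus \set{x} = \pars{X \cup Y} \setminus \set{x}$ (using $x \in X \setminus Y$), so by hypothesis $\ta$ is a maximizer of $\exists_x (f \cdot g) : \ps{\pars{X \cup Y} \setminus \set{x}} \to \R$. Invoking \cref{propIterMax} with the PB function $f \cdot g$ and the variable $x$ then shows that $\ta \cup \pars{\dsgn_x (f \cdot g)}(\ta)$ is a maximizer of $f \cdot g$.

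Next I would apply \cref{lemmaDsgnEarlyProjA}, whose hypotheses — $f$ and $g$ have non-negative ranges, $x \in X \setminus Y$, and $\ta$ is a truth assignment for $Y \cup X \setminus \set{x}$ — are exactly those in force here, to obtain the pointwise identity $\pars{\dsgn_x f}\of{\restrict{\ta}{X \setminus \set{x}}} = \pars{\dsgn_x (f \cdot g)}(\ta)$. Substituting this identity into the maximizer produced in the previous step yields that $\ta' = \ta \cup \pars{\dsgn_x f}\of{\restrict{\ta}{X \setminus \set{x}}}$ is a maximizer of $f \cdot g$, which is the claim.

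There is essentially no real obstacle; the only care needed is the bookkeeping of domains — confirming that $\pars{X \cup Y} \setminus \set{x} = Y \cup X \setminus \set{x}$ and that the single pair added by $\dsgn_x$ is precisely the assignment to $x$ — so that the hypotheses of \cref{propIterMax} and \cref{lemmaDsgnEarlyProjA} genuinely match. The non-negativity assumption on $f$ and $g$ enters the argument only through \cref{lemmaDsgnEarlyProjA}; it is not used anywhere else. (One could additionally remark that $f \cdot g$ inherits a non-negative range, which would be relevant if this lemma were itself chained into a further early-projection step, but it is not needed for the statement as given.)
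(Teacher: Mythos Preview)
Your proposal is correct and takes essentially the same approach as the paper: both arguments combine \cref{lemmaDsgnEarlyProjA} with \cref{propIterMax} applied to $f \cdot g$. The only cosmetic difference is the order of the two invocations (the paper first rewrites $\ta'$ via \cref{lemmaDsgnEarlyProjA} and then applies \cref{propIterMax}, whereas you apply \cref{propIterMax} first and then substitute), and you spell out the domain bookkeeping more explicitly than the paper does.
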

\begin{proof}
    By \cref{lemmaDsgnEarlyProjA}, $\ta' = \ta \cup \pars{\dsgn_x (f \cdot \other)}(\ta)$.
    Then by \cref{propIterMax} (iterative maximization), $\ta'$ is a maximizer of $f \cdot \other$.
\end{proof}

\begin{lemma}
\label{lemmaMaximizerPushA}
    Let $v$ be an internal node of $\T$ and $x$ be a variable in $\pi(v)$.
    Then the assertion on \cref{lineMaximizerPushA} of \cref{valuatorAlgoA} holds.
\end{lemma}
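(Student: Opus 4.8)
The plan is to recognise the assertion on \cref{lineMaximizerPushA} as an instance of \cref{lemmaMaximizerDsgnA} applied to the right factorisation of the product of active functions. First I would fix the iteration of the loop over $\pi(v)$ in which $x$ is processed and let $\elims$, $\actives$, $f$ and $f'$ denote their values at the assertion line. At that point $f = f'$, since the only statements executed since $f' \gets f$ are $\gx \gets \dsgn_x f$ and $\push{\stack}{\gx}$, none of which alters $f$, $f'$, $\actives$ or $\elims$; moreover $f' \in \actives$, having been inserted on \cref{lineInsertJoinA} (last iteration of the child loop) or \cref{lineInsertProjA} (previous iteration of the projection loop). Invoking \cref{lemmaJoinProjCondsA} --- on the join-condition of \cref{valuatorAlgoA} when $x$ is the first variable of $\pi(v)$, and on the project-condition of the preceding iteration otherwise --- and noting again that the intervening lines change neither $\pb{\actives}$ nor $\elims$, we obtain $\pb{\actives} = \exists_{\elims \cap X} \rand^\pr_{\elims \cap Y} \pb{\phi}$ for the current $\elims$.

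Set $\other := \prod_{a \in \actives \setminus \set{f'}} a$, so that $f' \cdot \other = \exists_{\elims \cap X} \rand^\pr_{\elims \cap Y} \pb{\phi}$. I would then record three structural facts. (i) $x \in \vars{f'}$: as $x \in \pi(v)$, criterion~2 of \cref{defPjt} puts the leaf of every clause of $\phi$ containing $x$ below $v$, so those clauses are absorbed into $f'$ during the child loop and $x$ remains in the index set of $f'$ as other variables are projected. (ii) $x \notin \vars{\other}$: this is the project-join-tree fact already used in the proof of \cref{lemmaProjCondFirstVarA}, namely $x \notin \vars{\actives \setminus \set{f'}}$. (iii) $x \notin \elims$: criterion~1 of \cref{defPjt} makes the sets $\pi(\cdot)$ partition $\vars{\phi}$, so $x$ is projected exactly once, and that projection is the one now in progress, which updates $\elims$ only after the assertion line. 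Finally, every PB function ever inserted into $\actives$ is a product of clause functions and their existential and random projections (the latter with probabilities in $[0,1]$), hence has a non-negative range; in particular so do $f'$ and $\other$.

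With these facts the argument closes. Commutativity of existential projection together with $x \in X$ and $x \notin \elims$ gives $\exists_{\set{x} \cup \elims \cap X} \rand^\pr_{\elims \cap Y} \pb{\phi} = \exists_x\pars{\exists_{\elims \cap X} \rand^\pr_{\elims \cap Y} \pb{\phi}} = \exists_x(f' \cdot \other)$. Hence if $\ta$ is a maximizer of $\exists_{\set{x} \cup \elims \cap X} \rand^\pr_{\elims \cap Y} \pb{\phi}$, then \cref{lemmaMaximizerDsgnA}, applied to the non-negative-ranged PB functions $f'$ and $\other$ and the variable $x \in \vars{f'} \setminus \vars{\other}$, shows that $\ta \cup \pars{\dsgn_x f'}\of{\restrict{\ta}{\vars{f'} \setminus \set{x}}}$ is a maximizer of $f' \cdot \other = \exists_{\elims \cap X} \rand^\pr_{\elims \cap Y} \pb{\phi}$. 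Since $\gx = \dsgn_x f = \dsgn_x f'$ and $\dom{\gx} = \vars{f'} \setminus \set{x}$, this truth assignment is precisely $\ta \cup \gx\of{\restrict{\ta}{\dom{\gx}}}$, which is the assertion on \cref{lineMaximizerPushA}.

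The logical heart is the single application of \cref{lemmaMaximizerDsgnA}; the main obstacle I expect is the bookkeeping around it --- pinning down the exact values of $f$, $f'$, $\actives$ and $\elims$ at the assertion line, and carefully deriving $x \in \vars{f'}$, $x \notin \vars{\other}$ and $x \notin \elims$ from the project-join-tree axioms rather than merely gesturing at them.
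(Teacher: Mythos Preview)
Your proposal is correct and follows essentially the same route as the paper: factor $\pb{\actives}$ as $f\cdot\pb{\actives\setminus\set{f}}$, invoke \cref{lemmaJoinProjCondsA} to identify $\pb{\actives}$ with $\exists_{\elims\cap X}\rand^\pr_{\elims\cap Y}\pb{\phi}$, rewrite $\exists_{\set{x}\cup\elims\cap X}\rand^\pr_{\elims\cap Y}\pb{\phi}$ as $\exists_x(f\cdot\other)$, and apply \cref{lemmaMaximizerDsgnA}. The paper's proof is terser and leaves implicit the verification of the hypotheses of \cref{lemmaMaximizerDsgnA} (non-negativity of the factors, $x\in\vars{f}\setminus\vars{\other}$) that you spell out; your added bookkeeping is sound and arguably makes the argument more self-contained.
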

\begin{proof}
    On that line:
    \begin{align*}
        \exists_{\set{x} \cup \elims \cap X} \rand^\pr_{\elims \cap Y} \pb{\phi}
        & = \exists_x \exists_{\elims \cap X} \rand^\pr_{\elims \cap Y} \pb{\phi}
        \\ & = \exists_x \pb{\actives} \tag*{as the join-condition and project-condition hold by \cref{lemmaJoinProjCondsA}}
        \\ & = \exists_x \pb{\set{f} \cup \actives \setminus \set{f}}
        \\ & = \exists_x \pars{f \cdot \pb{\actives \setminus \set{f}}}
    \end{align*}
    Apply \cref{lemmaMaximizerDsgnA}.
\end{proof}

\subsection{Proofs for \cref{dpAlgoA}}

\begin{lemma}
\label{lemmaMaximizerConstA}
    The assertion on \cref{lineMaximizerConstA} of \cref{dpAlgoA} holds.
\end{lemma}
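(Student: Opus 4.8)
The plan is to evaluate the state of \cref{dpAlgoA} at \cref{lineMaximizerConstA} and check the asserted property directly from the definitions, treating it essentially as a base case. At that point $\ta$ has just been set to $\emptyset$, so the only thing to pin down is the set $\elims$. First I would argue that $\elims = \vars{\phi}$ after \cref{lineValRootA}: the call $\valuator(\phi, \T, \pr, r, \stack, \elims, \actives)$ recurses on every child of every internal node it reaches, so starting from $r$ it visits every internal node $v$ of $\T$, and for each such $v$ it executes $\elims \gets \elims \cup \set{x}$ once for every $x \in \pi(v)$. Starting from the initialization $\elims \gets \emptyset$ on \cref{lineInitProjVarsA}, this gives $\elims = \bigcup_{v \in \V{T} \setminus \Lv} \pi(v)$ after \cref{lineValRootA}. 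By the first criterion of \cref{defPjt}, the sets $\pi(v)$ for internal $v$ form a partition of $\vars{\phi}$, so $\elims = \vars{\phi} = X \cup Y$, and hence $\elims \cap X = X$ and $\elims \cap Y = Y$.

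It then suffices to show that the empty truth assignment is a maximizer of the PB function $h := \exists_X \rand^\pr_Y \pb{\phi}$. Here I would observe that $\pb{\phi} : \ps{X \cup Y} \to \R$, so projecting out all of $Y$ and then all of $X$ leaves $h$ a PB function over $\emptyset$, i.e. a constant, whose domain $\ps{\emptyset}$ contains only the empty truth assignment. By \cref{defMaximum} the maximum of $h$ is $(\exists_\emptyset h)(\emptyset) = h(\emptyset)$, using the convention $\exists_\emptyset h := h$; hence by \cref{defMaximizer} the empty truth assignment is a maximizer of $h$, since $h(\emptyset)$ trivially equals this maximum. As a consistency check one can note that \cref{thmRootValuation} identifies this common value with $m = \val{r}(\emptyset) = \pb{\exists X \rand^\pr Y \phi}(\emptyset)$, which is what \cref{dpAlgoA} has just stored in $m$.

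I do not anticipate a genuine obstacle: the only step requiring a little care is the identification $\elims = \vars{\phi}$, which combines the loop bookkeeping of \cref{valuatorAlgoA} with the partition property of \cref{defPjt} — precisely what the inline comment on \cref{lineValRootA} records. Everything else is immediate from the definitions of maximum and maximizer applied to a constant PB function.
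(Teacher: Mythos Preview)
Your proposal is correct and follows essentially the same approach as the paper's proof: both argue that after \cref{lineValRootA} one has $\elims = \vars{\phi} = X \cup Y$, so that $\exists_{\elims \cap X} \rand^\pr_{\elims \cap Y} \pb{\phi}$ is a constant PB function whose sole (and hence maximizing) input is $\ta = \emptyset$. You simply spell out in more detail why $\elims = \vars{\phi}$ (via the recursion over internal nodes and the partition property of \cref{defPjt}) and why the empty assignment maximizes a constant PB function, whereas the paper states these facts tersely.
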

\begin{proof}
    \cref{lineValRootA} changed $\elims$ to $\vars{\phi} = X \cup Y$ after all calls to $\valuator$ returned.
    Then $\exists_{\elims \cap X} \rand^\pr_{\elims \cap Y} \pb{\phi}$ is a constant PB function.
    So $\ta = \emptyset$ is the maximizer of $\exists_{\elims \cap X} \rand^\pr_{\elims \cap Y} \pb{\phi}$.
\end{proof}

\begin{lemma}
\label{lemmaMaximizerPopFirstDsgnA}
    Let $\gx$ be the first derivative sign to be popped in \cref{dpAlgoA}.
    Then the assertion on \cref{lineMaximizerPopA} holds at this time.
\end{lemma}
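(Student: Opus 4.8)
The plan is to connect the state of \cref{dpAlgoA} at the moment of the first pop with the state of \cref{valuatorAlgoA} at the moment the corresponding derivative sign was pushed. If $X = \emptyset$ the \textbf{while} loop of \cref{dpAlgoA} never runs, so assume $X \ne \emptyset$. Since $\stack$ is a stack and every push (inside the single call on \cref{lineValRootA}) precedes every pop, the first derivative sign popped is the last one pushed; write it as $\gx = \dsgn_x f$, pushed on \cref{lineMaximizerPushA} of \cref{valuatorAlgoA} while $\valuator$ processes an internal node $v$ of $\T$, with $x$ the \emph{last} variable in the (fixed) ordering of $\pi(v)$. Right before the first pop, \cref{lemmaMaximizerConstA} tells us that $\ta = \emptyset$, that $\elims = \vars{\phi} = X \cup Y$, and that $\ta$ is a maximizer of the constant PB function $\exists_X \rand^\pr_Y \pb{\phi}$. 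Since the pop sets $\ta$ to $\emptyset \cup \gx(\emptyset)$ and $\elims$ to $(X \cup Y) \setminus \set{x}$, it therefore suffices to prove that $\gx(\emptyset)$ is a maximizer of $\exists_{X \setminus \set{x}} \rand^\pr_Y \pb{\phi}$.

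The first --- and hardest --- step is structural: because $\gx$ is the \emph{last} push, the node $v$ satisfies $v \in \I_X$ and $\val{v}$ is a \emph{constant} PB function, equivalently $\pis{v} = \vars{\gammas{v}}$. Indeed, $v \in \I_X$ since $v$ performs an $X$-projection; and if some variable $w$ occurred in a clause of $\gammas{v}$ but were not projected inside the subtree of $v$, then by \cref{defPjt} it would be projected at some proper ancestor $u$ of $v$, and $u$ performs that projection only after $\valuator$ returns from $v$, hence strictly after $\gx$ was pushed. If $w \in X$ this is another $X$-push after the last push, impossible; and if $w \in Y$ then $u \in \I_Y$ is a proper ancestor of $v \in \I_X$, contradicting the fourth property of \cref{defGradedness}. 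Given $\pis{v} = \vars{\gammas{v}}$, tracking the running product in \cref{valuatorAlgoA} (using \cref{corEarlyProj}, exactly as in the proof of \cref{lemmaValuation}) shows that just before the iteration for $x$ we have $f = \bigexists_{X_1} \bigrand^\pr_{Y_1} \pb{\gammas{v}}$ (recall $\clausejoin{v} = \pb{\gammas{v}}$), where $X_1 := (\vars{\gammas{v}} \cap X) \setminus \set{x}$ and $Y_1 := \vars{\gammas{v}} \cap Y$. Since $x \in X \subseteq \vars{\phi}$ occurs in some clause, which must then lie in $\gammas{v}$, we get $\vars{f} = \set{x}$; hence $\gx(\emptyset)$ is well-defined and equals $\set{\va{x}{b}}$, where $b \in \B$ is the bit that maximizes $f$ over $x$.

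The second step factors the target function. Since every variable of $\vars{\gammas{v}}$ is projected inside the subtree of $v$, the partition property together with the second criterion of \cref{defPjt} forces every clause containing such a variable into $\gammas{v}$; writing $\psi := \phi \setminus \gammas{v}$, this gives $\vars{\gammas{v}} \cap \vars{\psi} = \emptyset$. Setting $X_2 := (X \setminus \set{x}) \setminus X_1$ and $Y_2 := Y \setminus Y_1$, we then have $\vars{\psi} \subseteq X_2 \cup Y_2$ and $\pb{\phi} = \pb{\gammas{v}} \cdot \pb{\psi}$, so repeated use of early projection (\cref{propEarlyProj}) yields
\begin{align*}
    \bigexists_{X \setminus \set{x}} \bigrand^\pr_Y \pb{\phi} = \of{\bigexists_{X_1} \bigrand^\pr_{Y_1} \pb{\gammas{v}}} \cdot \of{\bigexists_{X_2} \bigrand^\pr_{Y_2} \pb{\psi}} = f \cdot C,
\end{align*}
where $C := \of{\bigexists_{X_2} \bigrand^\pr_{Y_2} \pb{\psi}}(\emptyset) \ge 0$ is a constant. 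Because $C \ge 0$ and $f$ takes only non-negative values, the bit $b$ that maximizes $f$ over $x$ also maximizes $f \cdot C$ over $x$; that is, $\gx(\emptyset) = \set{\va{x}{b}}$ is a maximizer of $\exists_{X \setminus \set{x}} \rand^\pr_Y \pb{\phi}$, which is the assertion on \cref{lineMaximizerPopA} at this time.

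I expect the main obstacle to be the structural claim in the second paragraph --- that the last push occurs at a node whose valuation is constant. It forces one to reason about the post-order in which \cref{valuatorAlgoA} visits nodes and performs projections, and it is exactly where gradedness (an $\I_X$ node is never a descendant of an $\I_Y$ node) is indispensable. Note also that one cannot simply read the conclusion off \cref{lemmaMaximizerPushA} at push time: $\rand$-projections (and the matching additions to $\elims$) can still occur after the last push, so the push-time invariant concerns $\exists_{X \setminus \set{x}} \rand^\pr_{Y'} \pb{\phi}$ for a proper subset $Y' \subsetneq Y$ in general, not $\exists_{X \setminus \set{x}} \rand^\pr_Y \pb{\phi}$; the factorization in the third paragraph is precisely what bridges this gap.
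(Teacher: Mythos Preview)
Your proof is correct and, in fact, more careful than the paper's own three-line argument. The paper simply invokes \cref{lemmaMaximizerConstA} to see that $\ta' = \emptyset$ maximizes $\exists_{\set{x}\cup\elims\cap X}\rand^\pr_{\elims\cap Y}\pb{\phi}$ (with pop-time $\elims$) and then cites the push-time assertion of \cref{lemmaMaximizerPushA} to conclude. But that assertion is stated with the \emph{push-time} value of $\elims$, and as you point out in your last paragraph, $\elims\cap Y$ at push time may be a proper subset of $Y$: further $\rand$-projections can occur in sibling subtrees processed after the node at which the last push happens. So the paper's chain does not close without additional work.

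Your route is genuinely different. Rather than transporting the push-time implication to pop time, you analyze the last push directly: using gradedness and the post-order traversal of $\valuator$, you show that it occurs at a node $v$ with $\pis{v}=\vars{\gammas{v}}$, compute the running $f$ explicitly (so $\vars{f}=\set{x}$), and then factor $\exists_{X\setminus\set{x}}\rand^\pr_Y\pb{\phi}=f\cdot C$ with $C\ge 0$ a constant via the variable-disjointness of $\gammas{v}$ and $\psi=\phi\setminus\gammas{v}$. This is essentially a fresh application of \cref{lemmaMaximizerDsgnA} with $g=C$, independent of the push-time state of $\actives$ and $\elims$. It buys you two things the paper's shortcut does not: it closes the push-time/pop-time mismatch rigorously, and it makes explicit why $\gx(\emptyset)$ is even well-defined (namely $\dom{\gx}=\emptyset$). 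The structural claim in your second paragraph is indeed the crux, and your justification of it is sound.
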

\begin{proof}
    Note that $\gx = \dsgn_x f$ for some variable $x \in X$ and PB function $f$.
    Consider the following on \cref{lineMaximizerPopA}.
    By \cref{lemmaMaximizerConstA}, the truth assignment $\ta'$ is a maximizer of $\exists_{\set{x} \cup \elims \cap X} \rand^\pr_{\elims \cap Y} \pb{\phi}$.
    Recall that when $\gx$ was pushed onto $\stack$, the assertion on \cref{lineMaximizerPushA} of \cref{valuatorAlgoA} holds (by \cref{lemmaMaximizerPushA}).
    Then $\ta$ is a maximizer of $\exists_{\elims \cap X} \rand^\pr_{\elims \cap Y} \pb{\phi}$.
\end{proof}

\begin{lemma}
\label{lemmaMaximizerPopNextDsgnA}
    Let \cref{dpAlgoA} successively pop derivative signs $\gx_1$ then $\gx_2$.
    Assume that the assertion on \cref{lineMaximizerPopA} holds in the iteration for $\gx_1$.
    Then the assertion also holds in the iteration for $\gx_2$.
\end{lemma}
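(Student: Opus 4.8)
The plan is to carry out the inductive step by mirroring the proof of \cref{lemmaMaximizerPopFirstDsgnA}, with the assumption of this lemma (that the assertion on \cref{lineMaximizerPopA} holds in the iteration for $\gx_1$) playing the role that \cref{lemmaMaximizerConstA} played in the base case. Write $\gx_2 = \dsgn_{x_2} f_2$, where $x_2 \in X$ is the variable that is still unassigned when $\gx_2$ is popped and $f_2$ is the PB function recorded with this derivative sign. Between the end of the iteration for $\gx_1$ and the assignment $\ta' \gets \ta$ inside the iteration for $\gx_2$ nothing modifies $\ta$ or $\elims$ (only the pop of $\gx_2$ happens), so $\ta'$ is exactly the truth assignment that the assumption asserts to be a maximizer of $\exists_{\elims \cap X} \rand^\pr_{\elims \cap Y} \pb{\phi}$, for the value $\elims$ holds at \cref{lineMaximizerPopA} during the iteration for $\gx_1$ --- i.e., the value of $\elims$ in the iteration for $\gx_2$ just before $x_2$ is deleted from it.

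Next I would restate this relative to the post-deletion value of $\elims$. Since $x_2 \in X$ was just removed, the pre-deletion set $\elims \cap X$ equals $\set{x_2} \cup (\elims \cap X)$ and the pre-deletion set $\elims \cap Y$ equals $\elims \cap Y$, where $\elims$ now denotes its current value at \cref{lineMaximizerPopA}. Hence $\ta'$ is a maximizer of $\exists_{\set{x_2} \cup \elims \cap X} \rand^\pr_{\elims \cap Y} \pb{\phi}$ --- which is precisely the hypothesis of the conditional assertion on \cref{lineMaximizerPushA} of \cref{valuatorAlgoA} that, by \cref{lemmaMaximizerPushA}, held at the moment $\gx_2$ was pushed. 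Invoking that assertion yields that $\ta = \ta' \cup \gx_2\of{\restrict{\ta'}{\dom{\gx_2}}}$ is a maximizer of $\exists_{\elims \cap X} \rand^\pr_{\elims \cap Y} \pb{\phi}$, which is exactly the assertion on \cref{lineMaximizerPopA} in the iteration for $\gx_2$.

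The step I expect to be the main obstacle is this last invocation: one has to be sure that the $\elims$ mentioned in the assertion on \cref{lineMaximizerPushA} at the push of $\gx_2$ denotes the same set as the $\elims$ at \cref{lineMaximizerPopA} during the pop of $\gx_2$, so that the two copies of $\exists_{\elims \cap X} \rand^\pr_{\elims \cap Y} \pb{\phi}$ are literally the same PB function. This relies on the LIFO discipline of $\stack$ together with the fact that $\valuator$ adds $x$ to $\elims$ immediately after pushing $\dsgn_x f$: the $X$-variables leave $\elims$ in the pop loop in the exact reverse order in which they entered, so when $\gx_2$ is popped $\elims$ is back to the set it held just before $x_2$ entered. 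If one also wants to account for $Y$-variables that $\valuator$ projects between the push and the pop of $\gx_2$, the remedy is that such variables do not occur in $\vars{f_2}$, so by the reasoning behind \cref{lemmaDsgnEarlyProjA} the recorded derivative sign $\dsgn_{x_2} f_2$ is insensitive to their projection. Once this identification is in hand, the remainder is just the two trivial set identities above and the same substitution as in \cref{lemmaMaximizerPopFirstDsgnA}.
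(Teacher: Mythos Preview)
Your proposal is correct and follows essentially the same approach as the paper: write $\gx_2 = \dsgn_{x_2} f_2$, use the assumption to see that $\ta'$ is a maximizer of $\exists_{\set{x_2} \cup \elims \cap X} \rand^\pr_{\elims \cap Y} \pb{\phi}$, and then invoke the assertion on \cref{lineMaximizerPushA} (via \cref{lemmaMaximizerPushA}) from the moment $\gx_2$ was pushed to conclude that $\ta$ is a maximizer of $\exists_{\elims \cap X} \rand^\pr_{\elims \cap Y} \pb{\phi}$. The paper's proof is a terse four sentences that mirror \cref{lemmaMaximizerPopFirstDsgnA} exactly as you anticipated; your extra bookkeeping about how $\elims$ evolves between the two iterations, and your discussion of why the push-time and pop-time values of $\elims$ line up (the LIFO argument for the $X$-part, and the observation that later-projected $Y$-variables are absent from $\vars{f_2}$), make explicit a point the paper leaves implicit.
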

\begin{proof}
    Note that $\gx_2 = \dsgn_{x_2} f_2$ for some variable $x_2 \in X$ and PB function $f_2$.
    Consider the following on \cref{lineMaximizerPopA} in the iteration for $\gx_2$.
    By the assumption, the truth assignment $\ta'$ is a maximizer of $\exists_{\set{x_2} \cup \elims \cap X} \rand^\pr_{\elims \cap Y} \pb{\phi}$.
    Recall that when $\gx_2$ was pushed onto $\stack$, the assertion on \cref{lineMaximizerPushA} of \cref{valuatorAlgoA} holds (by \cref{lemmaMaximizerPushA}).
    Then $\ta$ is a maximizer of $\exists_{\elims \cap X} \rand^\pr_{\elims \cap Y} \pb{\phi}$.
\end{proof}

We are now ready to prove \cref{thmDperA}.

\reAnnThmDper*
\begin{proof}
    The first output ($m$) is the maximum of $\pb{\rand^\pr Y \phi}$ as \cref{lineValRootA} computes the $\pr$-valuation of the root $r$ of a project-join tree for $\phi$ (see \cref{thmRootValuation} and \cref{lemmaValuator}).
    The second output ($\ta$) is a maximizer of $\pb{\rand^\pr Y \phi}$ by \cref{lemmaMaximizerPopFirstDsgnA} and \cref{lemmaMaximizerPopNextDsgnA}.
\end{proof}


\bibliography{dper.bib}


\end{document}